    \newcommand{\customlabel}[2]{%
    \protected@write \@auxout {}{\string \newlabel {#1}{{#2}{\thepage}{#2}{#1}{}} }%
    \hypertarget{#1}{}
    }
\pgfplotsset{compat=1.18}
\newcommand{\M}{\textbf{M}}
\newcommand{\CM}{\mathcal{M}}
\newcommand{\N}{\textbf{N}}
\newcommand{\U}{\textbf{U}}
\newcommand{\T}{\textbf{T}}
\newcommand{\Free}{\textbf{Fr}}
\newcommand{\E}{\mathbb {E}}
\newcommand{\D}{\mathcal {D}}
\newcommand{\FF}{\mathbb{F}}
\newcommand{\opt}{\textbf{OPT}}
\newcommand{\eps}{\varepsilon}
\newcommand{\Aut}{\text{Aut}}
\newcommand{\abs}[1]{\vert #1 \vert}
\newcommand{\BigM}{\textbf{BigM}}
\newcommand{\C}{\mathcal{C}}
\newcommand{\rank}{\textsf{rank}}
\renewcommand{\span}{\textsf{span}}
\newcommand{\F}{\mathbb{F}}
\newcommand{\Q}{\mathbb{Q}}
\newcommand{\R}{\mathbb{R}}
\newcommand{\Z}{\mathbb{Z}}
\renewcommand{\P}{\mathbb{P}}
\newcommand{\A}{\mathcal {A}}
\newcommand{\Alg}{\textsf{Alg}}
\newcommand{\Opt}{\textsf{Opt}}
\newcommand{\Flag}{\textsf{Flag}}
\newcommand{\NFlag}{\overline{\textsf{Flag}}}
\renewcommand{\vec}{\mathbf}
\newcommand{\one}{\mathbf{1}}
\newtheorem{theorem}{Theorem}[section]
\newtheorem{corollary}[theorem]{Corollary}
\newtheorem{lemma}[theorem]{Lemma}
\newtheorem{claim}[theorem]{Claim}
\newtheorem{definition}{Definition}
\newcommand\pd[1]{\textcolor{magenta}{#1}}
\title{Online Matroid Embeddings}
\author{Andr\'{e}s Cristi$^1$ \and Paul D\"{u}tting$^2$  \and Robert Kleinberg$^3$ \and Renato Paes Leme$^2$ \and Neel Patel$^1$}
\date{$^1$EPFL, andres.cristi@epfl.ch, neel.patel@epfl.ch\\
$^2$Google Research, duetting@google.com, renatoppl@google.com\\
$^3$Cornell University, rdk@cs.cornell.edu\\
\bigskip
October 2025}
\begin{document}

\maketitle

%\pdc{Adjusted title, still need to go over abstract and introduction} \pdc{started to work on it}
\setcounter{page}{0}
\begin{abstract}
    We introduce the notion of an online matroid embedding, which is an algorithm for mapping an unknown matroid that is revealed in an online fashion to a larger-but-known matroid. 
    %The existence of such embedding enables a reduction from the version of the matroid secretary problem where the matroid is unknown to the version where the matroid is known in advance. We show that online matroid embeddings exist for binary (and hence graphic) and laminar matroids. We also show a negative result showing that no online matroid embedding exists for the class of all matroids. Finally we define the notion of an approximate matroid embedding generalizing the notion of $\alpha$-partition property \rplr{and provide a upper bound on the approximability of binary matroids by a partition matroid, matching the lower bound of Dughmi et al.}
    We establish the existence of such an embedding for binary matroids, and use it to relate variants of the binary matroid secretary problem to each other, showing that seemingly simpler problems are in fact equivalent to seemingly harder ones (up to constant-factors). Specifically, we show this to be the case for the version of the matroid secretary problem in which the matroid is not known in advance, and where it is known in advance. 
    We also show that the version with known matroid structure, is equivalent to the problem where weights are not fully adversarial but drawn from a known pairwise-independent distribution. 
    % and provide upper and lower bounds on the approximation between certain matroid classes.
\end{abstract}

%\newpage

\section{Introduction}

A common setup in online algorithms is to have a matroid whose structure is revealed to the algorithm one element at a time. The algorithm processes the elements of the ground set in sequence, and at each point in time, it has access to the dependencies between the elements that have already arrived. Typical examples %are the 
include the famous matroid secretary problem (MSP) \cite{BabaioffIK07,BabaioffIKK07,Lachish14,FeldmanSZ18} and matroid prophet inequalities \cite{ChawlaHMS10,KleinbergW12}. 
%\pd{In both these problems, the }
%\pd{While the matroid prophet inequality admits a constant-factor approximation, whether or not this is the case also for the matroid secretary problem is one of the big open problems in online algorithms \cite{Lachish14, FeldmanSZ18}.}  

In such problems, is there any advantage in knowing the matroid structure in advance? Imagine the following situation: we are processing an unknown matroid $\M$; however, we know a fixed (potentially very large) matroid $\BigM$ that has an isomorphic copy of every possible matroid $\M$, and we can construct this embedding online. We will show that if such an object exists, then we can reduce the version of the problem where the matroid is revealed online to the version of the problem where the matroid structure is known, by assuming our matroid is $\BigM$. 
Moreover, if the on-the-fly embedding maintains uniform random order, then the existence of such an online embedding implies that in that class, the matroid secretary problem with \emph{unknown} structure is no harder than the matroid secretary problem with \emph{known} structure. 
%\pd{In what follows, we will refer to these as online-revealed MSP and known MSP, respectively.}

Important recent progress on the MSP has established that it is equivalent to the matroid prophet secretary problem with correlated distributions \cite{dughmi2021matroid,dughmi-outer-limits}. Though seemingly unrelated, another consequence of the existence of such online embeddings will be that---for certain matroids--- this equivalence holds even if we impose pairwise independence.

\subsection{Our Contribution}

%\pdc{made subsubsections paragraphs}

\paragraph{Online Matroid Embedding}

Our main conceptual contribution is to define the notion of an \emph{online matroid embedding} (OME) in Section~\ref{sec:OMEs}. For a given class of matroids $\mathcal{C}$ and a host matroid $\BigM$, we define an OME as a set of matroid monomorphisms, i.e., mappings that preserve the matroid structure, from any matroid in the class $\mathcal{C}$ into $\BigM$ that can be constructed sequentially, only using calls to an independence oracle over the set of elements observed so far.

The use of embeddings in algorithm design is an idea that has been successfully explored in other contexts, most notably, metric embeddings both in classic settings \cite{Bourgain85,LinialEtAl95,bartal1998approximating} as well as more recently in online settings \cite{indyk2010online, barta2020online, newman2023online}. While our motivation and main application is the matroid secretary problem, we believe  that understanding maps between matroids preserving structure is an important mathematical question in its own right that can enable other algorithmic applications beyond the matroid secretary problem. %MSP.

\paragraph{Consequences for the MSP}

We use the concept of online matroid embeddings to gain insights into the complexity of the matroid secretary problem (MSP) on \emph{binary matroids} (see Section~\ref{sec:prelims}). Specifically, we relate different variants of the problem to each other and show that seemingly simpler ones are actually equivalent to harder ones (up to constant factors). See Figure~\ref{fig:reductions} for an overview of the reductions that we establish in this paper.

%\pd{Recall that in a binary matroid the ground set is a collection of vectors $u \in \F^2_n$, and a set of vectors if independent if they are $\F^2_n$-linearly independent. In all three variants of the problem, the algorithm knows that there will be $m$ elements. Each element is a pair $(u_i,w_i)$ consisting of a vector $u_i \in \F^2_m$ and a non-negative weight $w_i$. A set of elements is independent if the set of vectors associated with the elements in the set is independent. The goal of the algorithm is to choose an independent set of elements, such that the sum of weights of the chosen elements is a $\alpha$-approximation of the weight of ....}

The three variants we are interested in are: (1) the online-revealed-matroid MSP, where the matroid is a priori unknown to the online algorithm and the algorithm has access to an independence oracle on the already arrived elements, (2) the known-matroid MSP, where the structure of the matroid is known to the algorithm in advance, and (3) the prophet MSP, where the matroid structure is known in advance and additionally the weights of the elements are drawn from a known, but possibly correlated distribution.

\begin{figure}
\begin{center}
\begin{tikzpicture}[
squarednode/.style={rectangle, minimum size=5mm},
]
%Nodes
\node[squarednode] (knownMSP) {\begin{tabular}{c}known-matroid\\MSP\end{tabular}};
\node[squarednode]   (prophetMSP) [right=of knownMSP] {\begin{tabular}{c}prophet MSP\\ w/ pairwise independence\end{tabular}};
\node[squarednode]   (onlinerevealedMSP) [left=of knownMSP] {\begin{tabular}{c}online-revealed-matroid\\MSP\end{tabular}};

%Lines
\draw[->] (onlinerevealedMSP.north) to [bend left=25] node[above] {$\succeq$} ([xshift=-.1cm]knownMSP.north);
\draw[->] ([xshift=+.1cm]knownMSP.north) to [bend left=25] node[above] {$\succeq$} (prophetMSP.north);
\draw[->] (prophetMSP.south) to [bend left=25] node[below] {Thm.~\ref{thm:gen_pw_ind_PS_to_MSP}, Cor.~\ref{cor:gen_pw_ind_PS_to_MSP}} node[above]{$\preceq_C$} ([xshift=+.1cm]knownMSP.south);
\draw[->] ([xshift=-.1cm]knownMSP.south) to [bend left=25] node[below] {Thm.~\ref{thm:reduction}, Cor.~\ref{cor:reduction}} node[above]{$\preceq_\epsilon$}  (onlinerevealedMSP.south);
\end{tikzpicture}
\end{center}
\caption{Reductions for binary matroids. We use $P \succeq Q$ to indicate that $P$ is harder than $Q$, and we use $\succeq_\epsilon$ and $\succeq_C$ to designate an additive $\eps$ or multiplicative factor $C$  loss in approximation.}
\label{fig:reductions}
%We use $\succeq$ to indicate that the problem to the left is harder than the problem to the right, meaning that an $\alpha$-approximation for the former problem implies an $\alpha$-approximation for the latter problem.
\end{figure}

Clearly, the online-revealed-matroid MSP is harder than the known-matroid MSP and the known-matroid MSP is harder than the prophet MSP, in the sense that if we have an $\alpha$-approximation for one problem, then we also have an $\alpha$-approximation for the other. %Our main results 
Two main implications of our work are ``inverses'' of these statements for binary matroids, that hold up to a constant-factor loss, and apply even if we impose pairwise-independence in the prophet MSP. 
%\bk{As often happens when attempting to prove relations between the complexity of different average-case problems, the difficulty of proving these results stems from having to design reductions whose output distribution is guaranteed to satisfy a property making it suitable for the application of a target algorithm. In our case, the relevant properties are \emph{permutation-invariance} in the case of reducing to the known-matroid MSP, and \emph{pairwise-independence} in the case of reducing to the pairwise-independent prophet MSP. In both of the reductions we describe below, we circumvent this difficulty by first designing a reduction whose output distribution satisfies the property \emph{approximately}, then arguing either that the output distribution can be modified to satisfy the property exactly or that the target algorithm's performance guarantee remains approximately valid under perturbations of its input distribution.} \bkc{Can somebody think of a way to improve the sentence describing how we circumvent this difficulty? It's cool and non-obvious that we're able to do that, but somehow I'm making it seem boring\dots}
%%
Such reductions between different average-case problems are notoriously difficult to achieve, as they need to ensure or maintain rather stringent assumptions on the input distribution that are essential for the target algorithm to be applicable in a meaningful way, and the required properties are easily disrupted.

\bigskip \emph{Step 1: A Reduction From Online-Revealed Matroid MSP to Known-Matroid MSP.}
In Theorem \ref{thm:reduction} we show that the existence of an OME for a class of matroids 
enables a reduction from the online-revealed-matroid MSP to the known-matroid MSP. The challenge in proving this is to show that the online embedding can be used in a way that (almost) maintains uniform random arrival order. More precisely, let $\M$ be the unknown matroid that is revealed to the algorithm in an online fashion and let $f$ be an OME into $\BigM$. Now consider the reduction: upon arrival of an element-weight pair $(e,w_e)$ in matroid $\M$ at iteration $t$, we construct the corresponding element-weight pair $(f(e),w_e)$ as an input to MSP on $\BigM$ at iteration $t$. However, this is not a valid input to known-matroid MSP on $\BigM$ as it does not construct a random arrival order over the elements that are not in the image of $\M$. %OMM.

To overcome the shortcoming of the above simple reduction, we interleave the elements in $\BigM$ that are in the image of $\M$ with the remaining elements in $\BigM$. %This interleaving does not ensure uniformity of the arrival order of the elements in $\BigM$. 
%\pdc{maybe there is a better wording then ``it leads to'' to make clear this requires work? }
In the proof of Theorem~\ref{thm:reduction}, our main technical argument shows that,
%such interleaving elements 
while this interleaving does not ensure uniformity of the arrival order of the elements in $\BigM$, it leads to an arrival order over elements in $\BigM$ that
%this interleaving does 
%do lead to an arrival order over elements of $\BigM$ that 
is close to uniformly random arrival order in total variation distance (Section~\ref{sec:proof_of_MSP_known_to_unknow}). We then complete the reduction with a coupling argument that shows that the existence of an $\alpha$-competitive algorithm for the known-matroid MSP implies the existence of an $(\alpha-\epsilon)$-competitive algorithm for the online-revealed-matroid MSP.

%%Our 
Together with the existence of an OME that maps binary matroids into the complete binary matroid (see Section~\ref{sec:binary_matroid} and discussion below), our
reduction implies that an algorithm for the MSP over binary matroids cannot meaningfully use any advance information about the matroid (Corollary~\ref{cor:reduction}). This is in contrast to all known $O(1)$-competitive algorithms for special cases of binary matroids \cite{korula2009algorithms,dinitz2014matroid}.

% \pd{In Theorem~\ref{thm:gen_pw_ind_PS_to_MSP} we show that the existence of an $\alpha$-competitive algorithm for the prophet MSP with pairwise-independent distributions implies the existence of a $C\cdot(\alpha-\epsilon)$-competitive algorithm for the known-matroid MSP, for some constant $C > 0$.} \pdc{we probably want to expand on this, and add some details about how we show this result} 
% \npc{Can we replace the above paragraph with the following one?}
% \npc{TODO: Seems disproportionate and shorten it out.}

%\pdc{OMEs are from a class of matroids into a host matroid, below we repeatedly say OME from $\M$ into $\BigM$}

\bigskip
%As our second result, 
\emph{Step 2: A Reduction from Prophet MSP w/ Pairwise Independence to Known-Matroid MSP.}
In Theorem~\ref{thm:gen_pw_ind_PS_to_MSP}, we show that the existence of an OME from a class of matroids $\C$ 
%\pd{a class of matroids} %
to $\BigM$ satisfying a $2$-transitivity property (that is satisfied by complete binary matroids, see definition in Section~\ref{sec:prelims}),
%, such that $\BigM$ satisfies a \emph{$2$-transitivity} property, %then an 
allows to translate an $\alpha$-competitive algorithm for prophet MSP with pairwise-independent weight distribution on $\BigM$
%$\alpha$-approximate prophet MSP for $\BigM$ with pairwise independent weight distribution implies 
into a %$\alpha\cdot (1 - \eps)$-competitive
$C \cdot (\alpha - o(1))$-competitive %approximate 
algorithm for known-matroid %\np{prophet} 
MSP on matroid $\M \in \C$, for some constant $C > 0$.  % for some constant $C$.
% The $2$-transitivity property, which we formally define in Section~\ref{sec:prelims}, captures a %`symmetric' 
% symmetry property of the matroid 
% %defined in Section~\ref{sec:OMEs}
% that is known to be satisfied for complete affine matroids, complete projective matroids, free matroids, and their truncations. Complete binary matroids are a special case of complete projective matroids, so they are included among the 2-transitive matroids.
%\
 %\pdc{I brought back Bobby's comment about binary matroids. Not sure if this is the natural place to mention it or the final para (yet to be written) in which we connect to binary matroids and Corollary \ref{cor:gen_pw_ind_PS_to_MSP}.}

%\pdc{I changed a couple of OMMs into OMEs in the following. If we prefer OMM then we should introduce OMMs at the beginning of this section.}

%\pd{To establish the reduction, we essentially show that an $\alpha$-competitive algorithm for prophet MSP with pairwise-independent weight distribution on $\BigM$ satisfying the $2$-transitivity property, implies an $\alpha \cdot (1-\epsilon)$-competitive algorithm for prophet MSP with arbitrary correlation on $\M \in \C$. (Our actual argument considers a restriction of the prophet MSP with arbitrary correlation, which causes an additional constant-factor loss in approximation.)}

To establish Theorem~\ref{thm:gen_pw_ind_PS_to_MSP}, we build on \cite{dughmi2021matroid,dughmi-outer-limits} and show how to reduce prophet MSP with arbitrary correlation on $\M \in \C$ to prophet MSP with pairwise-independent weight distribution on $\BigM$.%satisfying $2$-transitivity.
\footnote{A technical detail that we are ignoring here is that our reduction is from a restricted version of the prophet MSP with arbitrary correlation, which results in an additional constant-factor loss.} 
%First, we show that if the host matroid $\BigM$ satisfies the $2$-transitivity
%$2$-transitive  
%property then 
To prove this, we first show that any uniformly random automorphism $f: \BigM \rightarrow \BigM$ satisfies the following property: for any pair of elements $e, e'$, $\Pr[f(e) = e'] = \frac{1}{n}$ and for any two pairs of independent elements $e_1, e_2$ and $e_1', e_2'$, $\Pr[f(e_1) = e_1' \land f(e_2) = e_2'] = \frac{1}{n\cdot (n-1)}$, where $n = |\BigM|$ (Lemma~\ref{lem:pw_ind_emmbedding}). This property allows us to construct an ``almost pairwise independent'' randomized OME %OME 
%from 
$f': \M \rightarrow \BigM$ by simply composing the given OME 
with a uniformly random automorphism on %from
$\BigM$.  %\pdc{added randomized, and suggesting to refer to it as $f'$}
%
% However, the main challenge here is that the natural weight distribution over $\BigM$ induced by the %\pdc{resulting?} resultant 
% randomized OME $f'$ %OMM $f$, 
% that maps the element--weight pair $(e,w_e)\rightarrow (f(e),w_e)$ is not exactly pairwise-independent as the %\pdc{why is there a weight distribution over $\M$? aren't we considering the known-matroid MSP on $\M$?} \npc{I changed it to prophet MSP.} 
% underlying weight distribution over $\M$ could induce pairwise-correlation:
%exhibit arbitrary correlations: 
% for example, if in the weight distribution over $\M$ exactly %ensures that exactly 
% one element could potentially have a %the 
% weight of $w$, then for any $e', e'\in \BigM$, %\bkc{Should it be $e,e'$ rather than $e_1', e_2'$?} 
% we have $\Pr[w(e) = w \land w(e') = w]=0$, while $\Pr[w(e) = w ]\cdot \Pr[w(e') = w] = \frac 1 {n^2}$. %\bkc{I don't understand why $\Pr[w(e) = w ]\cdot \Pr[w(e') = w] = \frac 1 n$. Why not $\frac{1}{n^2}$?} 
% In fact, for any arbitrary weight distribution over $\M$, %\pdc{again, why is there a weight distribution over $\M$?} 
% we could potentially have a much more complex pairwise correlation structure.
%
%
%Nevertheless, at a closer inspection, 
%we observe that 
The resulting weight distribution is approximately pairwise independent in the sense that for any pair of elements $e,e' \in \BigM$ and weights $w,w'$ it holds that
$|\Pr[w(e) = w ]\cdot \Pr[w(e') = w'] - \Pr[w(e) = w' \land w(e') = w']| = O \left( \frac 1 {n^2 }\right)$.
%for any pair of elements and weights. Using this observation, %\pd{as one of the most technical results of the paper,} 

To conclude the proof, we show that there exists a pairwise-independent distribution that is close to the induced weight distribution over $\BigM$ %due to the randomized OME $f'$ %OMM
(Theorem~\ref{thm:tv_distance}). 
This is one of the most technical results of the paper (see discussion below, and Section~\ref{sec:techncial_proof}).
%\pdc{why are we deferring it to the appendix, if it's the most interesting part?} 
Combining Theorem~\ref{thm:tv_distance} with a coupling argument similar to the one in our other reduction
%in the proof of Theorem~\ref{thm:reduction}
completes the proof.
%\np{with the offline ($(e,w_e)\rightarrow (f(e),w_e)$) to online reduction from Theorem~\ref{thm:reduction}, 
%we show that $\alpha$-competitive algorithm for pairwise independent prophet MSP on $\BigM$ can be converted into an $\alpha\cdot (1 - \eps)$-competitive algorithm for prophet MSP on $\M$.
%\np{Since the weight distribution $(f(e),w_e): e\in \M$ over the elements of $\BigM$ is close to the pairwise independent distribution, }

%\pdc{I don't get this last sentence, why is Theorem~\ref{thm:reduction} helpful here? Wouldn't the final step be to argue how approximation guarantees translate? it is also not yet clear where the constant $C$ comes into play...}

%\pdc{here I swapped approximate from exact, to exact fro approximate}

We note that constructing %or testing an approximate
an exactly $k$-wise independent distribution from an approximately %exact 
$k$-wise independent distribution has been studied in previous work \cite{alon2003almost, alon2012almost,alon2007testing}. 
%\pdc{I don't understand the following sentence, esp. the aka part.} 
However, their techniques focus on a set of \emph{Bernoulli} random variables with identical marginals \cite{alon2003almost,alon2007testing} or ``uniformity'' of the underlying random variables \cite{alon2012almost}---both conditions do not hold in our setting as the weight distribution in the prophet MSP instance can be arbitrarily correlated. In fact, in both works \cite{alon2003almost,alon2012almost}, 
%\pdc{not clear what ``both'' is referring to}, 
they show that if the random variables $X_1, \dots, X_n$ satisfy ``uniformity" and $|\E[X_i\cdot X_j] - \E[X_i] \cdot \E[X_j]| \leq \eps$ then there exists pairwise independent random variables $\tilde X_1, \dots, \tilde X_n$ within a %the 
distance of $O(n^2 \cdot \eps)$ --- which is not enough for our purpose as $\eps = \Theta(1/n^2)$ in our case.  

To obtain Theorem~\ref{thm:tv_distance}, we construct an explicit pairwise-independent weight distribution over $\BigM$ by a sequence of ``small'' perturbations to a naturally induced ``almost'' pairwise-independent distribution. At each step, we perturb the original distribution such that $\omega(1)$ many pairs of random variables end up being independent (Procedure~\ref{proc:one} and Procedure~\ref{proc:two} in Section~\ref{sec:techncial_proof}) while always decreasing the pairwise correlation of the rest of the pairs (Lemma~\ref{lem:pairwise_ind}). Then the main technical work is devoted to showing that the total deviation through our procedures is in the order of $\eps\cdot o(n^2)$ 
%\pdc{The following pointers where both pointing to Sec 8.1, since you made the sections unnumbered. It wasn't clear to me if making them subsubsections is what you had in mind} 
(Section~\ref{sec:proc_1_analysis} and Section~\ref{sec:proc_2_analysis}), which, combined with the fact that $\eps = O(1/n^2)$ leads to the desired result. We believe that our idea of sequentially constructing small perturbations would find further applications to obtain exact pairwise (or $k$-wise) independent distributions from their approximate counterparts in other settings.

%Finally, by combining our general reduction, with the existence of an OME which maps binary matroids into the complete binary matroid (discussed next), we conclude that, for binary matroids, known-matroid MSP and prophet MSP with pairwise-independent weight distribution are equivalent up to constant factors (Corollary~\ref{cor:gen_pw_ind_PS_to_MSP}).

\paragraph{Constructing OMEs}
%%
%\pdc{update this one to mention new results}
%%
In Section~\ref{sec:binary_matroid} (Theorem \ref{thm:omm_binary} and  Theorem \ref{thm:order_independent_omm}), we provide a complete analysis for binary matroids. Namely, for the class of binary matroids $\M$ with $n$ elements, there is an online matroid embedding into $\BigM$ the complete binary matroid $\mathbb{F}_2^n$. We also develop a technique for making the OME order-independent: we use properties of the automorphism group of $\F_2^n$ together with randomization to ensure that the images of the elements in $\M$ are not correlated with the arrival order. This technique is in fact more general (Theorem \ref{thm:general_order_independence}) and can be applied whenever the group of automorphisms of the host matroid is ``sufficiently rich'' in a sense that the theorem statement makes precise. 

The key property of binary matroids that we exploit to establish these results is that in $\mathbb{F}_2^n$, there is a unique element that completes a circuit, in the sense that there cannot be two circuits of the same size that intersect in all but one element of each.

We refer to online matroid embeddings where both $\M$ and $\BigM$ are of the same class as ``within-class'' OMEs. In Section \ref{sec:graphic_regular} we explore whether such ``within-class'' OMEs can exist for graphic matroids. We show that such embeddings cannot exist, in fact we show that graphic matroids cannot be embedded in an online-fashion to regular matroids.  
To rule out the existence of such an online embedding, we show that if it would exist, then $\BigM$ must contain an isomorphic copy of $\mathbb{F}_2^n$. However, $\mathbb{F}_2^n$ contains an isomorphic copy of the Fano plane
which is not representable over $\mathbb{F}_3$ \cite{tutte1958homotopy}. Hence $\BigM$ can’t be regular.

We believe that the lack of online matroid embeddings for graphic matroids/regular matroids that ``don't leave the class'' may shed light on why progress on the known-matroid MSP for graphic and regular matroids has not extended to the online-revealed version of these problems, and more generally the MSP for general binary matroids. 

In Section~\ref{sec:laminar} we give another example of an OME, namely for laminar matroids. We show (in Theorem~\ref{thm:laminar}) how to embed the class of laminar matroids $\M$ with at most $n$ elements into $\BigM$ which is a complete linear matroid of rank $n$ over any field with sufficiently many elements. 

Finally in Section \ref{sec:bounded_rank} we show an impossibility result of constructing an OME for the class of all matroids. This is shown by studying finite projective planes and showing that for those matroids,  elements that haven’t arrived yet impose non-trivial constraints on the already arrived elements. As a corollary we obtain an impossibility of constructing an OME for the class of all matroids representable over fields of characteristic at least $7$.
% The impossibilities in Section  \ref{sec:bounded_rank} are for distortion-free embeddings. In Section~\ref{sec:conclusion} we define the  notion of \emph{approximate} online matroid embeddings and leave as an open problem whether we can use distortion to bypass the impossibility results in Section  \ref{sec:bounded_rank}.

\paragraph{Approximate OMEs}
%\pdc{still up to date?} 
%\pdc{Made some edits here.} 
In Section \ref{sec:conclusion} we extend the notion of an OME to allow distortion, i.e., the map approximately preserves the rank. We observe that the $\alpha$-partition property in %Abdolazimi et al.~
\cite{AbdolazimiKKG23} and %Dughmi et al.~
\cite{DughmiKP24} can be viewed as an approximate matroid embedding into the free matroid. %\np{In fact, many matroid secretary algorithms have been designed directly or indirectly via constant approximate matroid embedding, including the $\frac{1}{2e}$-competitive algorithm for graphical matroid secretary from \cite{korula2009algorithms} that embeds a graphical matroid into a free matroid with an approximation ratio of $\frac{1}{2}$. Our contribution here is two fold.}

First, combining our formalism with their lower bounds on embedding into the free matroid, we also provide a lower bound on the distortion needed to embed the complete binary matroid into a graphic matroid (Corollary \ref{cor:graphic_distortion}). This result is an example of the power of the formalism: with the right definitions, extending the lower bound to larger classes becomes a simple corollary. 

In Theorem \ref{thm:upper_bound_binary_partition} we show the tightness of the $\Omega(n / \log n)$ lower bound in \cite{DughmiKP24} of the distortion of embedding the complete binary matroid into the free matroid by constructing an embedding achieving this distortion. 
%Finally, we state two open questions on whether we can use distortion to bypass the impossibility results in earlier sections.
%\npc{TODO:elaborate more here.}
%
%\np{
%Second, we show 
We also show that there is no constant approximate online embedding of the class of graphical matroids into a free matroid when the underlying matroid is not known upfront (Theorem~\ref{thm:no_embedding_graphical}). Therefore, any constant competitive algorithm for unknown graphical matroid secretary that relies on constructing an online embedding of the graph into a free-matroid has to exploit the random arrival order of the underlying elements or develop new techniques that do not rely on online embedding into a free matroid.
%}

%\begin{itemize}
%\item (Section 3) we define \emph{online matroid embeddings} (OMEs)
%\item (Theorem 4.2 + Theorem 4.4) for the class of binary matroids $\M$, there is an OMM into the complete binary matroid $\BigM$; using randomization we can ensure that the images of the elements in $\M$ are not correlated with the arrival order; this is crucial for the matroid secretary application
% \item (Theorem 5.1) we give a reduction from the Online-revealed-matroid MSP to the known-matroid MSP for binary matroids; by showing how to do the online embedding in a way that maintains uniform random arrival order
% \begin{itemize}
% \item we create copies and interleave arrival order of elements in image of $\M$ with remaining elements
% \item with a coupling argument we show that we obtain an almost uniform arrival order in total variation distance 
% \item this way we only loose an arbitrarily small $\epsilon$ in the competitive ratio
% \end{itemize}
%\item (Section 6) for graphic matroids no OMM into a graphic matroid, for graphic matroids there is no OMM into a regular matroid
%\item (Section 7) for the class of laminar matroids $\M$ there is an OMM into $\BigM$ the complete linear matroid of rank $n$ (TBD)
%\item exciting direction for future work approximate online matroid embeddings
%\end{itemize}

%\pd{Do we want to add the following to the conclusion?: ``We conjecture that \emph{if the matroid secretary conjecture is false, then it is false for binary matroids.}'' -- Maybe not}

\subsection{Discussion and Significance of Results}

We believe that the existence or non-existence of (approximate) online matroid embeddings can shed new light on different classes of matroids and how they relate to each other. In this work, we demonstrate two implications for the matroid secretary problem.

Our first implication (Theorem \ref{thm:reduction}) offers the first formalization of the intuition that, in general, advance knowledge of the matroid structure should not help in the design of a constant-competitive algorithm for the matroid secretary problem. In light of this, it would be interesting to develop algorithms for classes of matroids for which constant-competitive algorithms exist when the algorithm has advance knowledge of the matroid structure \cite[e.g.,][]{korula2009algorithms,dinitz2014matroid}.

Our second implication (Theorem~\ref{thm:gen_pw_ind_PS_to_MSP}), in turn, presents a novel ``line of attack'' for obtaining such an algorithm for the class of binary matroids (for which no constant-competitive algorithm is known). While it was already known that it suffices to find such an algorithm for the secretary prophet version with correlated weights~\cite{dughmi2021matroid,dughmi-outer-limits}, general correlated weight distributions offer little additional structure.
Our result shifts the challenge away from intractable arbitrary correlations, towards the better-understood realm of pairwise independent distributions. 
Pairwise independent distributions admit powerful tools like concentration inequalities 
and have found application in areas such as
hashing and constructions of pseudo-random generators (for more details, see surveys \cite{luby2006pairwise,salil2012pseudorandomness}), as well as prophet inequalities \cite{pi-uniform-prophet}.
%We believe our formal framework of matroid embeddings will spur progress on online problems on matroids, beyond just the MSP.

%will not only spur progress on the MSP but will also find further applications in the study of online problems on matroids and beyond.}
%However, a resolution of the  matroid secretary problem % conjecture 
%(asking if there is an $O(1)$-approximation for known MSP)
%, asking whether there is a $O(1)$-approximate algorithm for known MSP on general matroids, 
%through this angle has remained elusive, in part because the type of positively correlated distributions, which are required for this equivalence to hold, are unwieldy to work with.
%It's natural to ask if the reduction that relates the two problems could be strengthened %that relates the matroid secertary problem and the version of the problem where the distributions are drawn from a known joint distribution, 
%can be strengthened 
%to some milder form of correlation. A natural candidate for this would be \emph{pairwise-independent} distributions, for which recent work \cite{DughmiKP24} has identified fundamental barriers, albeit assuming adversarial rather than random order arrivals.

\subsection{Related Work}
%\npc{TODO: move to the appendix as an additional related work?}
\paragraph{Matroid Secretary Problem} 

The matroid secretary problem was first studied in \cite{BabaioffIK07,BabaioffIKK07,BabaioffIKK18}, who gave a $O(\log(\rank))$-competitive algorithm for general matroids. This bound was improved to $O(\sqrt{\log(\rank)})$ in \cite{ChakrabortyL12}, and the state-of-the-art is a $O(\log \log(\rank))$-competitive algorithm \cite{Lachish14,FeldmanSZ18}. The algorithms of \cite{Lachish14,FeldmanSZ18} only uses independence oracle calls on subsets of the elements revealed so far.

For graphic matroids there is a $O(1)$-competitive algorithm, provided that the graphic matroid is known in advance \cite{korula2009algorithms}. The same is true for the more general class of regular matroids \cite{dinitz2014matroid}. Laminar matroids also admit an $O(1)$-competitive algorithm \cite{im2011secretary,jaillet2013advances}. Some evidence for the difficulty of the matroid secretary problem for general binary matroids can be found in \cite{LeichterMP22} and \cite{AbdolazimiKKG23}, showing that binary matroids are not $(b,c)$-decomposable, ruling out a promising approach to obtaining an $O(1)$-competitive algorithm for this class.
%\textcolor{red}{[PD: Does the previous sentence make sense? Feel free to edit]}

Oveis Gharan and Vondr\'{a}k \cite{oveis2013variants} systematized the study of matroid secretary problem variants, establishing a notation for classifying problem variants according to whether the elements arrive in adversarial or random order, whether the assignment of weights to elements is adversarial or random, and whether or not the matroid structure is known in advance. In their nomenclature, the main question addressed in our work is whether the RO-AA-MK variant is equivalent to the RO-AA-MN variant for matroids in general, or for specific classes of matroids. Interestingly, for variants with adversarial arrival order but random weight assignment, \cite{oveis2013variants} demonstrates a stark qualitative difference in approximability: the AO-RA-MK model (when the matroid structure is known in advance) admits a 64-competitive algorithm for all matroids, whereas the AO-RA-MN model (when the number of elements is known in advance but the matroid structure is revealed online) has no constant-competitive algorithm even for the class of {\em rank one matroids}! 

Very recently, \cite{SantiagoSZ25} gave a $O(1)$-competitive algorithm for the matroid secretary problem in the random assignment model when the matroid structure is not known in advance, and instead is only revealed over time. In a similar spirit, \cite{SantiagoSZ23} presents an online contention resolution scheme for graphic matroids, that uses almost no advance information about the graph. However, they assume that the endpoints of the edges are revealed upon their arrival which leads to an obvious OME into a graphical matroid.%\npc{I think in \cite{SantiagoSZ23}, they learn endpoints which is not ideal. We should mention that somehwhere. }

%\begin{itemize}
%\item the ``original'' matroid secretary problem papers \cite{BabaioffIK07,BabaioffIKK07,BabaioffIKK18} with $O(\log(r))$-competitive algorithm for general matroids
%\item improvement to $O(\sqrt{\log(r)})$ \cite{ChakrabortyL12}, and the state of the art $O(\log \log(r))$-competitive secretary problems for general matroids \cite{FeldmanSZ18}
%\item for graphic matroids there is a $O(1)$-competitive algorithn for Known-matroid MSP due to Korula and Pal~\cite{korula2009algorithms}
% \item for regular matroids there is a $O(1)$-competitive algorithm for Known-matroid MSP due Dinitz and Kortsarz \cite{dinitz2014matroid}
% \item \cite{AbdolazimiKKG23} and \cite{LeichterMP22} show that binary matroids are not $(b,c)$-decomposable, showing that one cannot obtain an $O(1)$-competitive algorithm for Known-matroid MSP on binary matroids via a broad class of algorithms
% \item very recently, $O(1)$-competitive algorithm in random assignment model for unknown matroid \cite{SantiagoSZ23}
% \end{itemize}

\paragraph{Matroid Prophet Inequalities} 
The matroid prophet inequality problem was first studied in \cite{HajiaghayiKS07}. An asymptotically optimal $(1-o(1))$-competitive algorithm for $k$-uniform matroids was given in \cite{Alaei14}.
A tight $O(1)$-competitive algorithm for the matroid prophet inequality problem was given in \cite{KleinbergW12}, also see \cite{DuttingFKL20}  for the problem of maximizing submodular functions subject to matroid constraints.   
Constant-factor competitive algorithms can also be obtained via online contention resolution schemes (OCRS) \cite{FeldmanSZ21}. Random-order versions of the matroid prophet inequality problem are studied in \cite{EhsaniHKS18}. 

%\pd{Please check} 
To the best of our knowledge, all these algorithms exploit that the matroid structure is known in advance. An additional difficulty for reductions of the type we present in this paper, is that typically these algorithms need to know the identity of the distribution that a certain element's weight is drawn from. For the i.i.d.~case this is obviously not an obstacle, and so our reductions apply. We believe that extensions of our techniques might shed further light on the variant of the matroid prophet inequality problem, in which the matroid is revealed online.

% \begin{itemize}
% \item O(1)-competitive matroid prophet inequality for general matroids \cite{KleinbergW12}, also \cite{DuttingFKL20} for submodular function subject to matroid constraints
% \item O(1)-competitive algorithn for matroid prophet inequality via online contention resolution \cite{FeldmanSZ21}
% \item $(1-o(1))$-competitive algorithm for $k$-uniform matroids \cite{Alaei14}
% \end{itemize}

\paragraph{Metric Embeddings and Distortion}

An important inspiration for this work comes from the literature on metric embeddings. A classic result in this context is Bourgain's theorem \cite{Bourgain85}. The algorithmic importance of such embeddings, and Bourgain's theorem in particular, was first highlighted in the seminal papers of \cite{LinialEtAl95,bartal1998approximating}.

Since then metric embeddings have found applications in a host of algorithmic problems, see, e.g., the survey of \cite{Indyk01} and Chapter 15 of \cite{Matousek02}. Closer to our notion of online matroid embeddings is a recent line of work on online metric embeddings %\cite{Bourgain85,LinialEtAl95,bartal1998approximating} 
\cite{indyk2010online, barta2020online, newman2023online}
in which points of a metric space are presented one at a time to an algorithm who must then decide on a mapping to the host metric space. The main difference is that instead of preserving a matroid structure, those papers try to minimize metric distortion.
%\pd{Are there any newer surveys that we can cite?}

% One notion that plays a central role in that literature is the notion of distortion. We discuss possible notions of approximate online matroid embeddings in Section~\ref{sec:conclusion}, but leave a more detailed study to future work.

%\medskip 
%\pdc{do we want to have another paragraph to separately discuss alpha-partition property and Shaddin and Neel's work, and the differences between their work and ours?}
%\npc{I don't think that would be a good fit here. We did discuss in the contribution section (approximate embedding)}
% \begin{itemize}
% \item More generally, metric embeddings, classic reference in Mathematics include \cite{Bourgain85}
% \item In computer science, first explored by \cite{LinialEtAl95}
% \end{itemize}

\section{Matroids, Morphisms, and $K$-representations}
\label{sec:prelims}
Throughout the paper, we will use $[n]$ to denote the set of integers $\{ 1, 2, \hdots, n\}$.

\paragraph{Matroid Definition} A matroid $\M$ is composed by a ground set $M$ and a rank function $\rank_\M:2^M \rightarrow \Z_+$ satisfying the following properties:
\begin{itemize}
\item 
$\rank_\M(\emptyset) = 0$;
\item $\rank_\M(S \cup \{i\}) - \rank_\M(S) \in \{0,1\}, \forall S, \{i\} \subseteq M$
\item $\rank_\M(S \cup T) + \rank_\M(S \cap T) \leq \rank_\M(S) + \rank_\M(T), \forall S,T \subseteq M$ (submodularity)
\end{itemize}
It follows from the second condition that $\rank_\M(S) \leq \abs{S}$. Whenever $\abs{S} = \rank_\M(S)$ we say that $S$ is an independent set of the matroid. Otherwise, we say that $S$ is dependent. A minimal dependent set is called a \emph{circuit}, i.e., $C \subseteq M$ is a circuit if $C$ is dependent but every strict subset $S \subsetneq C$ is independent. We say that a matroid has rank $r$ if $r = \max_{S} \rank_\M(S)$.

We say that an element $x \in \M$ is a loop if $\rank_\M(\{x\}) = 0$. We say that a matroid is loop-free if every set of one element is independent. Given a set $S \subseteq \M$ we define the span as $\span_\M(S) = \{x \in \M; \rank_\M(S \cup \{x\}) = \rank_\M(S)\}$.

\paragraph{Matroid Morphisms} 
We will  use the same notation to refer to a matroid and its ground set. Given two matroids $\M$ and $\N$ we will define a morphism $f:\M \rightarrow \N$ to be a map between their ground sets that preserves rank, i.e.: $$\rank_\N(f(S)) = \rank_\M(S), \forall S \subseteq \M.$$ Whenever the matroid morphism is an injective map, we will say it is a \emph{matroid monomorphism} or a \emph{matroid embedding}. Whenever it is bijective, we will say it is a \emph{matroid isomorphism}. An isomorphism from a matroid to itself is called an automorphism. (Aside: this paragraph defines the category of matroids in the sense of category theory. However, we won't use any other fact from category theory other than borrowing its very convenient language.)

We refer to the set of automorphisms $\M \rightarrow \M$ as $\Aut(\M)$, which forms a group under composition, i.e., given $f,g \in \Aut(\M)$, then $f \circ g \in \Aut(\M)$ (and $\circ$ satisfies the group axioms).

% We say that $G \subseteq \Aut(\M)$ is a subgroup if for every $f,g \in G$ then $f \circ g \in G$. We define the orbit of a morphism $f : \N \rightarrow \M$ under the action of $G$ as the set:
% $$Gf = \{A \circ f; A \in G\}$$
% An important property of the orbits is that for two matroid morphisms $f,g : \N\rightarrow \M$, either $G f = Gg$ or $Gf \cap Gg = \emptyset$. This happens because if  $Gf \cap Gg \neq \emptyset$ then there are automorphisms $A_1, A_2 \in G$ such that $A_1 f = A_2 g$. In particular, for every $A \in G$ we have $Af = A A_1^{-1} A_2 g \in Gg$ and hence $Gf \subseteq Gg$. Similarly, $Gg \subseteq Gf$.

\paragraph{Element Copies} Given a matroid $\M$ and an integer $k$ we will define the matroid $\M_{[k]}$ by creating $k$ copies of each element of $\M$. Formally, the ground set of $\M_{[k]}$ is $\{(u,j); u \in \M, j \in [k]\}$. The rank function of $\M_{[k]}$ is induced by the projection $\phi: \M_{[k]} \rightarrow \M$ that maps $(u,j) \mapsto u$, i.e., $\rank_{\M_{[k]}}(S) = \rank_\M(\phi(S))$. By definition, the projection $\phi$ is a matroid morphism from $\M_{[k]} \rightarrow \M$.

If $\N$ is a matroid of at most $n$ elements, every morphism $f: \N \rightarrow \M$ can be written as: $f = \phi \circ f'$ where $f': \N \rightarrow \M_{[n]}$ is a monomorphism.

\paragraph{Direct Sum} Given two matroids $\M$ and $\N$, we define their direct sum $\M \oplus \N$ as the matroid whose ground set is the disjoint union of the ground sets of $\M$ and $\N$ and $\rank_{\M \oplus \N}(S) = \rank_\M(S \cap \M) +  \rank_\N(S \cap \N)$ for all $S$ in the disjoint union of ground sets.

\paragraph{Graphic Matroids} We will define a few special classes of interest. We start with \emph{graphic matroids}. Given a graph with edge set $E$, we can define a matroid with ground set $E$ by defining the $\rank(S)$ of a subset $S \subseteq E$ as the maximum number of edges in $S$ that don't form a cycle. We say that a matroid $\M$ is graphic if it is isomorphic to the matroid obtained from an undirected graph as we just described.

As an example, consider the matroid $\M$ with ground set $\{a,b,c\}$ and rank function such that $\rank(S) = \abs{S}$. The matroid is graphic since it is isomorphic to the matroid that can be obtained from any of the graphs in Figure \ref{fig:graphic_matroid}. An important thing to note, however, is that the matroid description contains no information about vertices. It only tells us which sets of edges are independent and which are not. As we can see in the figure, this is typically not enough to fully determine the graph structure.

\begin{figure}[h]
\centering
\begin{tikzpicture}[scale=.9, inner sep=1.5pt] 
 \node[circle,fill] at (0,0) {};
 \node[circle,fill] at (2,0) {};
 \node[circle,fill] at (4,0) {};
 \node[circle,fill] at (6,0) {};

 \draw (0,0)--(2,0);
 \draw (2,0)--(4,0);
 \draw (2,0)--(6,0);
 \node at (1,.3) {$a$};
 \node at (3,.3) {$b$};
 \node at (5,.3) {$c$};

 \begin{scope}[xshift=12cm]

 \node[circle,fill] at (0,0) {};
 \node[circle,fill] at (2,0) {};
 \node[circle,fill] at (-2,1) {};
 \node[circle,fill] at (-2,-1) {};
 \draw (0,0)--(2,0);
 \draw (0,0)--(-2,1);
 \draw (0,0)--(-2,-1);
 \node at (-1,.8) {$a$};
 \node at (-1,.-.8) {$b$};
 \node at (1,.3) {$c$};
 \end{scope}
 
\end{tikzpicture}
\caption{Two graphs that generate the same matroid on their edge set}
\label{fig:graphic_matroid}
\end{figure}
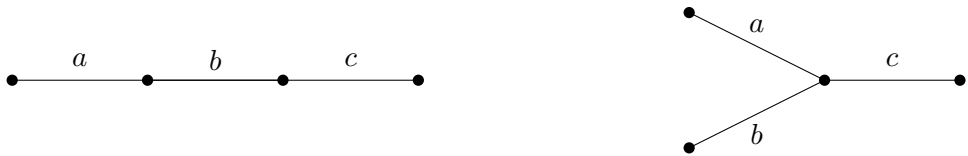

\paragraph{$K$-representable Matroids} Let $K$ be a field (e.g.~$\Q, \R, \F_p$) and let $K^d$ be the vector space formed by $d$-dimensional vectors with coordinates in $K$. We say that a subset of vectors $u_1, \hdots, u_k \in K^d$ is independent if the unique solution to $\alpha_1 u_1 + \alpha_2 u_2 + \hdots + \alpha_k u_k = 0$ for $\alpha_i \in K$ is $\alpha_1 = \alpha_2 = \hdots = \alpha_k = 0$. Any subset of $K^d$ together with the independency relation above defines a matroid. From now on, we will use the notation $K^d$ to represent both the vector space and the corresponding matroid. We say that a matroid $\M$ is $K$-representable if there is a matroid morphism $\M \rightarrow K^d$ for some integer $d$.

If a matroid $\M$ is $K$-representable for every field $K$ we say that $\M$ is a \emph{regular matroid}. Every graphic matroid is representable over any field by mapping an edge $(u,v)$ to the vector $e_u - e_v$ where $e_u$ the the $u$-th unit vector. (This is true even over $\F_2$ where $e_u - e_v = e_u + e_v$.)

For example, the matroids in Figure \ref{fig:graphic_matroid} can be represented by the vectors $(1,-1,0,0)$, $(0,1,-1,0)$, $(0,0,1,-1)$. As it is the case for graphic matroids, the matroid description has no information about vectors and the representation is again not unique. An equally good representation is $(1,0,0), (0,1,0), (0,0,1)$.

We will be specially interested in \emph{binary matroids} which are matroids that are representable over $\F_2$ (the finite field of $2$ elements where addition and multiplication are performed mod $2$).
%\npc{Should we move the following definitions to section~6?}
\paragraph{Laminar matroids} A family of sets, $\mathcal{A}$, is called {\em laminar} if it satisfies the 
property that for any $A,A' \in \mathcal{A}$, at least one of the sets 
$A \cap A', \, A \setminus A', \,  A' \setminus A$ is empty. 
A {\em laminar matroid} $\M$ is one for which there exists a laminar 
family of sets $\mathcal{A}$ consisting of subsets of the ground set of $\M$ and a function $c : \mathcal{A} \to \Z_+$, such that 
the independent sets of $\M$ are precisely those sets $I \subseteq M$ such 
that $|I \cap A| \leq c(A)$ for all $A \in \mathcal{A}$.

\paragraph{Uniform Matroid} We will denote by $\U_{n,r}$ the uniform matroid of $n$ elements and rank $r$. This is the matroid with ground set $[n]$ and whose rank function is $\rank_{\U_{n,r}}(S) = \min(r, \abs{S})$. We call the $\Free_n := \U_{n,n}$ the free matroid of rank $n$, i.e., a matroid of $n$ elements in which every set is independent.

\paragraph{Trivial Matroid} Let $\T$ be the trivial matroid which has ground set $\{0\}$ and rank function $\rank_\T(S) = 0$ for all sets $S$. 

\paragraph{$2$-transitive Matroid}  We say that a simple matroid $\M$ (loop-free and no parallel elements) is $2$-transitive if for any pair of independent sets of size two, $\{e_1,e_1'\}, \{e_2, e_2'\} \in \M$ there exists an automorphism $f \in \operatorname{Aut}(\M)$ satisfying $f(e_1) = e_1'$ and $f(e_2) = e_2'$.
There are several matroids that satisfies the $2$-transitive property including complete affine matroids, complete projective matroids, free matroids, and their truncations \cite{kantor1985homogeneous}.

\section{Online Matroid Embeddings}
\label{sec:OMEs}

We are interested in studying matroids whose structure is revealed to an algorithm in an online fashion. For that, it will be useful to take into account the order in which elements are processed, which we will represent by an indexing of the ground set: $\pi: [n] \rightarrow \M$.

A matroid $\N$ is a restriction of matroid $\M$ if the ground set of $\N$ is a subset of the ground set of $\M$ and $\rank_\M$ coincides with $\rank_\N$ on the ground set of $\N$.

Given a matroid with ordered ground set specified by a pair $(\M, \pi)$, we say that $(\M', \pi')$ is a prefix-restriction of $(\M, \pi)$ if $n' = \abs{\M'} < \abs{\M}$, $\M'$ is the restriction of $\M$ to $\pi([n'])$ and  $\pi'$ is the restriction of $\pi$ to $[n']$.

Let $\C$ be a class of matroids that is closed under restriction (e.g., the class of all matroids, graphic matroids, binary matroids, $K$-representable matroids, matroids of rank at most $r$). An \emph{online matroid morphism} (OMM) for class $\C$ consists of a host matroid $\BigM$, together with  matroid  morphisms $$f_{\M,\pi} : \M \rightarrow \BigM$$ for every $\M \in \C$ and every indexing $\pi: [n] \rightarrow \M$ of the ground set of $\M$, such that for every prefix-restriction $(\M', \pi')$ of $(\M, \pi)$, the map $f_{\M', \pi'}$ is the restriction of $f_{\M, \pi}$ to the ground set of $\M'$.

If all morphisms $f_{\M, \pi}$ are monomorphisms, we say that they form an \emph{online matroid embedding} (OME). Given an online matroid morphism it is easy to construct an online matroid embedding by copying the elements of $\BigM$.

\begin{lemma}\label{lemma:omm_to_ome}
Let $\C$ be a class of matroids, where each matroid $\M \in \C$ has at most $n$ elements and $f_{\M, \pi} : \M \rightarrow \BigM$ form an online matroid morphism. Then there is an online matroid embedding $f'_{\M, \pi} : \M \rightarrow \BigM_{[n]}$.
\end{lemma}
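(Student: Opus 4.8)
The plan is to turn each morphism $f_{\M,\pi}$ into a monomorphism by routing the images of the elements of $\M$ through distinct copies in $\BigM_{[n]}$, choosing the copy index for an element at the moment it arrives, and using that each $\M \in \C$ has at most $n$ elements so that $n$ copies always suffice. Recall from the ``Element Copies'' paragraph that $\phi : \BigM_{[n]} \to \BigM$ is a morphism with $\rank_{\BigM_{[n]}}(S) = \rank_{\BigM}(\phi(S))$, and that any morphism $g : \N \to \BigM$ from a matroid with at most $n$ elements factors as $g = \phi \circ g'$ with $g'$ a monomorphism. So the existence of \emph{some} monomorphism lifting $f_{\M,\pi}$ is immediate; the real content is that the lift can be done \emph{online}, i.e., consistently across all prefix-restrictions.

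First I would fix, for each matroid $\M$ with ordered ground set $(\M,\pi)$, $|\M| = m \le n$, the following rule: process elements in the order $\pi(1), \pi(2), \dots, \pi(m)$; when element $\pi(t)$ arrives, we know $f_{\M,\pi}(\pi(t)) \in \BigM$ from the given OMM (it depends only on the prefix of length $t$, by the OMM compatibility condition), and we set
\[
f'_{\M,\pi}(\pi(t)) = \bigl(f_{\M,\pi}(\pi(t)),\, t\bigr) \in \BigM_{[n]}.
\]
That is, we simply use the arrival index $t$ as the copy index. Then I would check the three required properties. (i) \emph{Well-defined into $\BigM_{[n]}$}: since $t \le m \le n$, the pair $(f_{\M,\pi}(\pi(t)), t)$ is a legitimate element of $\BigM_{[n]}$. (ii) \emph{Morphism}: for any $S \subseteq \M$, $\phi(f'_{\M,\pi}(S)) = f_{\M,\pi}(S)$, so $\rank_{\BigM_{[n]}}(f'_{\M,\pi}(S)) = \rank_{\BigM}(f_{\M,\pi}(S)) = \rank_{\M}(S)$, using the defining property of $\BigM_{[n]}$ and that $f_{\M,\pi}$ is a morphism. (iii) \emph{Injectivity}: distinct elements $\pi(s) \ne \pi(t)$ have distinct arrival indices $s \ne t$, hence distinct second coordinates, hence distinct images — so $f'_{\M,\pi}$ is a monomorphism. (iv) \emph{Online/prefix-consistency}: if $(\M',\pi')$ is a prefix-restriction of $(\M,\pi)$ with $|\M'| = m'$, then for $t \le m'$ we have $\pi'(t) = \pi(t)$ and, by the OMM property, $f_{\M',\pi'}(\pi(t)) = f_{\M,\pi}(\pi(t))$; therefore $f'_{\M',\pi'}(\pi(t)) = (f_{\M,\pi}(\pi(t)), t) = f'_{\M,\pi}(\pi(t))$, so $f'_{\M',\pi'}$ is the restriction of $f'_{\M,\pi}$ to $\M'$. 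This is exactly the defining condition of an OME.

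I do not expect a serious obstacle here; the statement is essentially a bookkeeping lemma. The only point that needs a moment's care is property (iv): one must use that the copy index assigned to $\pi(t)$ depends \emph{only} on $t$ (the arrival position) and not on the full matroid $\M$ or its length $m$, which is what makes the assignments agree across a matroid and all of its prefix-restrictions. If one instead tried a ``greedy smallest unused copy'' rule the indices would be the same anyway in this ordered setting, but keying directly off the arrival index $t$ makes the consistency transparent. An alternative phrasing, if one prefers to emphasize the factorization already recorded in the excerpt, is to note that each $f_{\M,\pi}$ factors as $\phi \circ f'_{\M,\pi}$ and then argue the $f'_{\M,\pi}$ can be chosen coherently; but spelling out the explicit rule above is cleaner and self-contained.
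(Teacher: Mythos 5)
Your proof is correct and follows essentially the same route as the paper's: lift $f_{\M,\pi}$ to $\BigM_{[n]}$ by choosing a copy index online, with injectivity, the morphism property via the projection $\phi$, and prefix-consistency all checked exactly as the paper does. The only (immaterial) difference is the indexing rule: you use the arrival position $t$ itself as the copy index, while the paper uses the number of previously arrived elements sharing the same image in $\BigM$ -- contrary to your aside, these rules generally assign different indices, but both are online and both work.
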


\begin{proof}
We define $f'_{\M, \pi}$ as follows: for each $u \in \M$ if $u = \pi(k)$ let $f'_{\M, \pi}(u)= (f_{\M,\pi}(u),j)$ where $j = \abs{i \in [k]; f_{\M,\pi}(\pi(i)) = u}$. The functions $f'_{\M,\pi}$ are  injective by construction and they are matroid morphisms by the definition of $\BigM_{[n]}$. In fact: $f_{\M,\pi} = \phi \circ f'_{\M,\pi}$ where $\phi$ is the natural projection $\BigM_{[n]} \rightarrow \BigM$. Finally note that they can be constructed online since the identity of the copy used is only a function of the set of elements that arrived up to this point.
\end{proof}

% The motivation for this definition is that with it, 
With this definition we can ensure that an online algorithm is able to construct a monomorphism from an unknown matroid $\M$ to $\BigM$ in an online fashion. Consider a matroid $\M$ for which the elements arrive according to $\pi$. At each time $t$, we can observe the structure of the matroid $\M_t$ which is the restriction of $\M$ to $\pi([t])$. Let $\pi_t$ be the restriction of $\pi$ to $[t]$. If we have an online matroid embedding, we can first construct $f_{\M_1, \pi_1}$, then extend to $f_{\M_2, \pi_2}$ and so forth.\\

It will also be convenient to define a randomized online matroid morphism (embedding) which for every matroid $\M \in \C$ and ordering $\pi$ specifies a distribution over (mono)morphisms $f_{\M,\pi} : \M \rightarrow \BigM$ such that for every prefix-restriction $(\M', \pi')$ the distribution of the restriction of $f_{\M,\pi}$ to the ground set of $\M'$ coincides with the distribution of $f_{\M', \pi'}$.

Finally, we say that a randomized online matroid embedding is order-independent if the distribution of $f_{\M, \pi}$ doesn't depend on $\pi$. In other words, for any two orderings $\pi$ and $\pi'$, the morphisms $f_{\M, \pi}$ and $f_{\M, \pi'}$ are equally distributed.

\paragraph{Uniform Order-Independent Embedding} Let $f$ be an order-independent online embedding from $\M \rightarrow \BigM$. We consider an order independent-randomized embedding $g: \M \rightarrow \BigM$ by composing $f$ with uniformly random automorphism $f'\in \operatorname{Aut}(\BigM)$, i.e. $g = f' \circ f$.

Interestingly, whenever $\BigM$ satisfies $2$-transitive property then the embedding $g$ maps each element of $e\in \M$ uniformly at random over the matroid $\BigM$. In addition, for any independent set of pair of elements $\{e,e'\} \subseteq \M$ and pair of elements $\{\tilde e, \tilde e'\} \subseteq \BigM$, the events $\{g(e) = \tilde e\}$ and $\{g(e') = \tilde e'\}$ are `almost' independent. More formally,
\begin{lemma}\label{lem:pw_ind_emmbedding}
   Given a simple host matroid (loop-free and no parallel elements) $\BigM$ such that for any two pairs of distinct elements $\{e_1, e_2\}$ and $\{e_1', e_2'\}$ there exists an automorphism $f' \in \operatorname{Aut}(\BigM)$ satisfying $f'(e_1) = e_1'$ and $f'(e_2) = e_2'$, then a uniformly random automorphism $f$ sampled from $\operatorname{Aut}(\BigM)$ satisfies:
    \begin{enumerate}
        \item For any $e, e' \in \BigM$, $\Pr[f(e) = e'] = \frac{1}{n}$.
        \item For any two pairs of elements $e_1, e_2$ and $e_1' , e_2'$ (s.t. $e_1 \neq e_2$ and $e_1' \neq e_2'$), we have $$\Pr[f(e_1) = e_1' \land f(e_2) = e_2'] = \frac{1}{n\cdot (n-1)}$$
    \end{enumerate}
\end{lemma}
\begin{proof}
   Consider the action of the group of automorphisms  $\operatorname{Aut}(\BigM)$ on the set of pairs of distinct elements $P=\{(e_1,e_2): e_1,e_2\in \BigM, e_1\neq e_2\}$. For two pairs $(e_1,e_2), (e_1', e_2')\in P$, the set of automorphisms $f\in \operatorname{Aut}(\BigM)$ that satisfy $f(e_1,e_2)=(e_1',e_2')$ is nonempty, and therefore, it is a coset of the subgroup of stabilizers of $(e_1,e_2)$ (i.e., the set of automorphisms that satisfy $f(e_1,e_2)=(e_1,e_2)$). Since all cosets of a subgroup must have the same size, and because all pairs $(e_1',e_2')\in P$ define a different coset, for a uniformly drawn automorphism $f$,
   \[\Pr[f(e_1,e_2)=(e_1',e_2') ]= \frac{1}{|P|} = \frac{1}{n\cdot (n-1)}.\]
   An analogous argument gives that $\Pr[f(e)=e']=1/n$.
\end{proof}

\paragraph{Online vs.~Offline Embeddings} The difficulty of constructing an online matroid embedding is that the elements of $\BigM$ corresponding to certain elements of $\M$ must be chosen before the full matroid structure of $\M$ is known. If we merely wanted to construct a matroid $\BigM$ that contains an isomorphic copy of every matroid in $\C$, that would be very easy: $\BigM$ could be taken to be the direct sum of all the matroids in $\C$.

% The difficulty in constructing an online matroid embedding is  the fact that they need to be contructed online. If we don't require the construction to be done online, the problem is rather trivial. Let $\C$ be a class containing finitely many matroids (e.g. the class of all matroids of at most $n$ elements) then it is easy to construct a large matroid $\BigM$ and a family of matroid monomorphisms $f_\M: \M \rightarrow \BigM$ by simply taking $\BigM$ be the direct sum of all the matroids in $\C$.
% The direct sum is a matroid whose ground set is the disjoint union of the ground sets of the matroids in $\C$ and the rank function is given by: $\rank_{\BigM}(S) = \sum_{\M \in \C} \rank_\M(S \cap \M)$ for all $S \subseteq \bigoplus_{\M \in \C} \M$.

\section{OMEs for Binary Matroids}\label{sec:binary_matroid}

Before we discuss how to use online matroid embeddings in online algorithms, it is important to show first that they exist in non-trivial cases. For that, we will provide a complete analysis for binary matroids. Recall that $\F_2^n$ is the complete binary matroid of rank $n$ and that graphic matroids and regular matroids are special cases of binary matroids. Our first result is the existence of an OME for this class. This will be done by showing the existence of an OMM and using Lemma \ref{lemma:omm_to_ome} to convert an OMM to an OME. 

Our first step is to show a lemma that the matroid $\F_2^n$ is special in the sense that its group of matroid automorphisms  coincides with its group of vector space automorphisms:

\begin{lemma}\label{lem:vector_to_matroid_aut}
A mapping $A: \F_2^n \rightarrow \F_2^n$ is a matroid automorphism iff it is an automorphism of vector spaces.
\end{lemma}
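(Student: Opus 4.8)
The plan is to prove both directions. The easy direction is that a vector space automorphism $A$ of $\F_2^n$ is automatically a matroid automorphism: a vector space isomorphism sends linearly independent sets to linearly independent sets (and dependent sets to dependent sets), hence preserves the rank function, which is exactly what it means to be a matroid automorphism. So the content is in the converse: every matroid automorphism of $\F_2^n$ is linear.

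For the hard direction, let $A : \F_2^n \to \F_2^n$ be a matroid automorphism. First I would note that $A$ fixes the zero vector, since $0$ is the unique loop of $\F_2^n$ (the only element of rank $0$), and a matroid automorphism must send the loop to a loop. Next, I would use the key structural feature of $\F_2^n$ highlighted in the introduction: in $\F_2^n$ every circuit has the form $\{v_1, \dots, v_k\}$ with $v_1 + \dots + v_k = 0$ and every proper subset independent, and crucially the circuits of size $3$ are exactly the triples $\{x, y, z\}$ of distinct nonzero vectors with $x + y + z = 0$, i.e.\ $z = x+y$. Equivalently, for distinct nonzero $x, y$ with $x \neq y$, the vector $x + y$ is the unique element completing $\{x,y\}$ to a circuit of size $3$ (rank $2$ but three elements). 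Since $A$ preserves rank, it preserves circuits, so $A$ must satisfy $A(x+y) = A(x) + A(y)$ whenever $x, y, x+y$ are three distinct nonzero vectors — that is, whenever $x, y$ are distinct and nonzero. Together with $A(0) = 0$ and the fact that $x + x = 0$ forces nothing new, this gives additivity $A(x+y) = A(x) + A(y)$ for all $x, y \in \F_2^n$: the remaining cases ($x = 0$, $y = 0$, or $x = y$) are handled directly using $A(0)=0$. Since the only scalars in $\F_2$ are $0$ and $1$, additivity is the same as $\F_2$-linearity, so $A$ is a vector space homomorphism; and it is bijective by hypothesis, hence a vector space automorphism.

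The main obstacle — really the one point requiring care — is justifying that the size-$3$ circuits of the matroid $\F_2^n$ are precisely the triples $\{x,y,x+y\}$ of distinct nonzero vectors, and hence that $A$ preserving the matroid structure forces $A(x+y) = A(x)+A(y)$. This is where the special property of $\F_2$ is used: over a larger field a pair of independent vectors lies in many dependent triples, but over $\F_2$ the only dependency among three distinct nonzero vectors is $x+y+z=0$, so the "third point" is uniquely determined. Once this is in hand, I would also double-check the degenerate cases to confirm that full additivity (not just additivity on "generic" triples) follows, so that no appeal to, e.g., three vectors being equal or one being zero creates a gap.
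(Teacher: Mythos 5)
Your proof is correct, and both directions rest on the same key fact as the paper's proof: over $\F_2$ the only nonzero scalar is $1$, so a circuit $\{v_1,\dots,v_k\}$ in $\F_2^n$ forces the relation $v_1+\dots+v_k=0$, and hence a rank-preserving bijection is forced to respect sums. The execution of the converse direction differs slightly. The paper fixes the standard basis $e_1,\dots,e_n$, notes that $Ae_1,\dots,Ae_n$ are independent, and for each $v=\sum_{i\in S}e_i$ uses the single circuit $\{v\}\cup\{e_i: i\in S\}$ (of arbitrary size) to conclude $Av=\sum_{i\in S}Ae_i$, i.e., that $A$ coincides with the linear map determined by its values on the basis. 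You instead establish additivity $A(x+y)=A(x)+A(y)$ directly for all pairs, using only size-$3$ circuits $\{x,y,x+y\}$ plus the observations that $A(0)=0$ (the zero vector is the unique loop) and that the degenerate cases $x=0$, $y=0$, $x=y$ are trivial. Your route is marginally more self-contained in that it only needs the characterization of $3$-element circuits and handles the loop explicitly (a point the paper glosses over), while the paper's route avoids the case analysis by anchoring everything to one basis. Both are complete; your one flagged obstacle --- that the third point of a $3$-circuit through $x,y$ is uniquely $x+y$ --- is indeed exactly the $\F_2$-specific fact that makes the lemma true, and your justification of it (the span of $\{x,y\}$ is $\{0,x,y,x+y\}$ and the third element must be distinct and nonzero) is sound.
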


\begin{proof}
An automorphism of vector spaces $A: \F_2^n \rightarrow \F_2^n$ is a bijection such that for any vectors $v_1, \hdots, v_k \in \F_2^n$ it holds that $A(\sum_{i=1}^k v_i) = \sum_{i=1}^k Av_i$. This in particular implies that a set of vectors $v_1, \hdots, v_k$ is independent iff the vectors $Av_1, \hdots, Av_k$ are independent. This is because a coefficient vector $(\alpha_1,\ldots,\alpha_k)$ satisfies the equation $\sum_{i=1}^k \alpha_i v_i = 0$ if and only if it satisfies $\sum_{i=1}^k \alpha_i Av_i= A(\sum_{i=1}^k \alpha_i v_i) = 0$, so the first equation has only trivial solutions if and only if the second equation has only trivial solutions.

For the opposite direction, if $A$ is a matroid automorphism and $e_1, \hdots, e_n$ is the standard basis of $\F_2^n$ then $Ae_1, \hdots, A e_n$ must be linearly independent elements of $\F_2^n$. Now, take any vector $v = \sum_{i \in S} e_i$. Since $\{v\} \cup \{e_i; i \in S\}$ forms a circuit, then $\{Av\} \cup \{Ae_i; i \in S\}$ must form a circuit. Since the only non-zero constant in $\F_2$ is $1$, it must hold that: $Av + \sum_{i\in S} Ae_i = 0$ and hence $Av = \sum_{i \in S} Ae_i$. Hence $A$ is also an automorphism of vector spaces.
\end{proof}

\begin{theorem}\label{thm:omm_binary}
Let $\C$ be the class of binary matroids of at most $n$ elements and let $\BigM$ be the complete binary matroid $\F_2^n$. Then there exists an OMM for $\C$ into $\BigM$.
\end{theorem}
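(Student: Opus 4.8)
The plan is to construct the morphisms $f_{\M,\pi}$ incrementally, processing the elements $\pi(1),\pi(2),\dots$ one at a time, and at each step extending the partial morphism to the newly arrived element. Write $\M_t$ for the restriction of $\M$ to $\pi([t])$ and suppose inductively that a morphism $f_t := f_{\M_t,\pi_t} : \M_t \to \F_2^n$ has already been built. When $\pi(t+1) = u$ arrives, we observe (via the independence oracle on $\pi([t+1])$) exactly which subsets of $\pi([t]) \cup \{u\}$ are independent; equivalently, we learn the rank of $\{u\}$ together with $\span_{\M_{t+1}}(\{u\})$ restricted to already-arrived elements. The key dichotomy is: either $u$ is in the span of $\pi([t])$ inside $\M_{t+1}$ (so $\rank_{\M_{t+1}} = \rank_{\M_t}$), in which case $u$ is forced — there is a unique circuit $C \cup \{u\}$ with $C \subseteq \pi([t])$, and we must set $f_{t+1}(u) = \sum_{v \in C} f_t(v)$ — or else $\rank$ strictly increases, in which case $u$ can be sent to any vector of $\F_2^n$ not already in the span of $f_t(\pi([t]))$. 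We must verify (i) that in the forced case the resulting map is still a matroid morphism, i.e.\ the rank function is preserved on all subsets, not just the circuit we used; and (ii) that in the free case such a vector always exists, which holds because $\dim \span(f_t(\pi([t]))) = \rank_{\M_t} \le t \le n-1$ whenever we are still processing elements, wait — more carefully, because $\M$ has at most $n$ elements and is binary, hence has rank at most $n$, and $f_t$ preserves rank, so the image spans a subspace of dimension $\rank_{\M_t} < n$ as long as $\rank_{\M_{t+1}} > \rank_{\M_t}$, leaving room.

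The heart of the argument is step (i): showing that the extension by $f_{t+1}(u) = \sum_{v\in C} f_t(v)$ preserves rank globally. Here is where Lemma~\ref{lem:vector_to_matroid_aut} and, more importantly, the uniqueness-of-circuit-completion property of $\F_2^n$ highlighted in the introduction come in. The plan is to argue that $f_{t+1}$ preserves rank on $\M_{t+1}$ iff it preserves the circuit structure, and then to show that every circuit of $\M_{t+1}$ is either already a circuit of $\M_t$ (handled by induction) or is of the form $C' \cup \{u\}$ where $C' \cup \{u\}$ is a circuit; for such circuits we need $\sum_{v \in C'} f_t(v) + f_{t+1}(u) = 0$ in $\F_2^n$, i.e.\ $\sum_{v\in C'} f_t(v) = \sum_{v\in C} f_t(v)$. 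This would follow if $C'$ and $C$ are related via circuits of $\M_t$ (circuit elimination / symmetric difference), using that $f_t$ preserves the $\M_t$-circuit structure and that in $\F_2$ the span of a set equals the set of sums over subsets whose complement-within... — concretely, the symmetric difference $C \triangle C'$ is a disjoint union of circuits of $\M_t$, and applying $f_t$ and summing (over $\F_2$, where signs vanish) gives the desired equality. One has to be careful that $C \triangle C'$ being dependent in $\M_t$ is not automatic; the cleaner route is probably to work with the binary structure directly: since $\M$ is binary, fix any $\F_2$-representation $g : \M \to \F_2^d$ of $\M$; then $\span_\M$ corresponds to $\F_2$-span of the $g$-images, and the relation ``$\sum_{v\in C} v = u$ in the representation'' is exactly what circuit-closure encodes. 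So one can define $f_{t+1}(u)$ via the representation and check consistency there.

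The main obstacle I anticipate is precisely the consistency check in the forced case — verifying that the greedy/forced extension is well-defined and rank-preserving on \emph{all} subsets simultaneously, rather than getting trapped into a contradiction by two different circuits through $u$ demanding incompatible images. The binary hypothesis is exactly what rescues this: in $\F_2^n$ the element completing a given circuit is unique (no scalar ambiguity, and no two distinct minimal relations), so the forced value is unambiguous. A clean way to organize the whole proof is: (a) reduce to building a $\F_2$-linear map on a growing sequence of subspaces; (b) observe that knowing $\M_{t+1}$ tells us the kernel relations among $\pi([t+1])$ that involve $u$; (c) define $f_{t+1}$ to respect exactly these relations, which is possible because the only constraints are $\F_2$-linear and consistent with an actual representation of $\M$; (d) invoke Lemma~\ref{lemma:omm_to_ome}-style reasoning is not needed here since we only claim an OMM. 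Then conclude by the prefix-restriction property, which holds by construction since $f_{t+1}$ restricted to $\pi([t])$ is $f_t$.
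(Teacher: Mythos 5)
Your construction is exactly the paper's: send each element that is independent of its predecessors to a fresh standard basis vector, and send each dependent element to the $\F_2$-sum of the images of the other members of a circuit through it (the forced value, unique because the only nonzero scalar in $\F_2$ is $1$). Where you diverge is in the verification that the resulting map preserves rank globally, which is the real content of the proof. Your primary route is local: show every circuit of $\M_{t+1}$ through $u$ yields the same forced value by taking symmetric differences of circuits. This can be made to work, but it rests on two points you leave under-argued: (1) that the symmetric difference of two circuits in a \emph{binary} matroid is a disjoint union of circuits (true, but it is a characterization of binary matroids, not a general circuit-elimination fact, so it must be invoked as such); and (2) that ``$f$ maps circuits to zero-sum sets'' implies ``$f$ preserves rank on all subsets'' --- this needs an argument (the map need not be injective, and rank preservation on arbitrary subsets does not follow from circuit preservation alone without a short additional step). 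The paper avoids both issues with a global argument, which is essentially the ``cleaner route'' you gesture at but do not complete: fix a representation $g:\M\to\F_2^n$, use Lemma~\ref{lem:vector_to_matroid_aut} to produce a vector-space automorphism $A$ sending $g(b_i)$ to $e_i$ for the elements $b_i$ that your algorithm mapped to standard basis vectors, and then prove by induction on arrival order that $f = A\circ g$ identically; since $A\circ g$ is a composition of morphisms, $f$ is a morphism on all subsets at once, and well-definedness across multiple circuits through $u$ comes for free. If you pursue your local route, supply the two missing justifications; otherwise, the automorphism-alignment step is the one idea your plan still needs to make the representation-based route go through.
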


\begin{proof}
Given a binary matroid $\M$ we construct a mapping $f:\M \rightarrow \F_2^n$ as follows. We keep a counter $k$ initially set to $1$. For each element $a$ we process, 
if it is independent of the previously arrived elements (i.e.~there are no circuits containing $a$ and the elements seen so far), we set $f(a) = e_k$ and increment $k$. Otherwise, $a$ forms a circuit with a set of previously arrived elements $u_1,\hdots, u_m$ for some integer $m \geq 0$. This means that their image $f(a), f(u_1), \hdots, f(u_m)$ must be a minimal $\F_2$-linearly dependent set. Since the only non-zero constant in $\F_2^n$ is $1$, then it must hold that: $$f(a) + f(u_1) + \hdots + f(u_m) = 0$$ and hence we can map: $f(a)$ to $f(u_1) + \hdots + f(u_m)$ (recall that $1=-1$ in $\F_2$).

Finally, we need to argue that $f$ is a matroid morphism. Observe that if $\M$ is a binary matroid, then there exists a morphism $g:\M \rightarrow \F_2^n$. Let $\{b_1, \hdots, b_r\}$ be the elements of $\M$ such that $f(b_i) = e_i$. By the fact that $g$ is matroid morphism, $g(b_1), \hdots, g(b_r)$ are linearly independent elements in $\F_2^n$. By Lemma \ref{lem:vector_to_matroid_aut} there is an automorphism $A \in \Aut(\F_2^n)$ that takes $g(b_i)$ to $e_i$. Since matroid morphisms compose, $A g:\M \rightarrow \F_2^n$ is matroid morphism. 

\begin{center}
\begin{tikzcd}
\M \arrow[rd, "g"'] \arrow[r, "f"] & \F_2^n \\
& \F_2^n \arrow[u, "A"']
\end{tikzcd}
\end{center}

Finally, we argue that $f(a) = Ag(a)$ for all $a$ in $\M$. We show this by induction. For each element processed by the algorithm, if it is independent from previously arrived elements, then $f(a) = Ag(a)$ by construction. Otherwise, there are previously arrived elements such that $a,u_1,\hdots, u_m$ form a circuit. Hence their image under $Ag$ must be a linearly dependent set of $\F_2$-vectors, which means that:
$$Ag(a) = Ag(u_1) + \hdots + Ag(u_m) = f(u_1) + \hdots + f(u_m) = f(a)$$
where the second equality holds by induction. Since $f$ coincides with $Ag$, $f$ is a matroid morphism.
\end{proof}

Furthermore, there is a randomized OME that is order-independent. We will show it as a consequence of the following lemmas:

\begin{lemma}\label{lem:single_orbit}
Given a binary matroid $\M$ and the complete binary matroid $\F_2^n$, if there are two matroid morphisms $f,g: \M \rightarrow \F_2^n$, then there exist an automorphism $A \in \Aut(\F_2^n)$ such that $f = A \circ g$.
\end{lemma}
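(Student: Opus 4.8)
The plan is to argue that both $f$ and $g$ identify the same "coordinate structure" on $\M$ up to a linear change of basis, and then use Lemma~\ref{lem:vector_to_matroid_aut} to promote that linear change of basis to a matroid automorphism of $\F_2^n$. Concretely, let $r = \rank_\M(\M)$ and pick a basis $b_1, \hdots, b_r$ of $\M$. Since $f$ is a matroid morphism, $f(b_1), \hdots, f(b_r)$ are linearly independent in $\F_2^n$; likewise $g(b_1), \hdots, g(b_r)$ are linearly independent. Hence there is a vector-space automorphism $A$ of $\F_2^n$ with $A\, g(b_i) = f(b_i)$ for all $i$ (extend $\{g(b_i)\}$ and $\{f(b_i)\}$ to bases and map one to the other). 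By Lemma~\ref{lem:vector_to_matroid_aut}, $A$ is a matroid automorphism of $\F_2^n$, and by composition $A \circ g : \M \rightarrow \F_2^n$ is a matroid morphism agreeing with $f$ on the basis $\{b_i\}$.

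The remaining step is to show $f(a) = A\,g(a)$ for every $a \in \M$, not just on the chosen basis. The key fact is the "unique completion" property of $\F_2^n$ already exploited in Theorem~\ref{thm:omm_binary}: for any $a \in \M$, write $S \subseteq [r]$ for the (unique) subset such that $\{a\} \cup \{b_i : i \in S\}$ is the fundamental circuit of $a$ with respect to the basis $\{b_i\}$ — equivalently, $a \in \span_\M(\{b_i : i \in S\})$ and $S$ is minimal with this property. Since $f$ is a matroid morphism, $f(a) \cup \{f(b_i) : i \in S\}$ is a circuit in $\F_2^n$, and because the only nonzero scalar in $\F_2$ is $1$, this forces $f(a) = \sum_{i \in S} f(b_i)$. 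The identical argument applied to the matroid morphism $A \circ g$ gives $A\,g(a) = \sum_{i \in S} A\,g(b_i) = \sum_{i \in S} f(b_i)$. Hence $f(a) = A\,g(a)$ for all $a$, so $f = A \circ g$.

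There is one degenerate point to handle cleanly: if $a$ is a loop of $\M$, then both $f(a)$ and $A\,g(a)$ must be the zero vector (a morphism sends a loop to a rank-$0$ element, and the only such element of $\F_2^n$ is $0$), so the claim holds trivially; the fundamental-circuit description above covers this as the case $S = \emptyset$. I expect the main obstacle — though it is minor — to be making the "fundamental circuit is unique and its image is determined" step fully rigorous: one must note that in a binary matroid the fundamental circuit of $a$ with respect to a basis is well-defined, and that a morphism carries circuits to circuits (which follows from rank preservation: $f$ maps a minimal dependent set to a dependent set of the same cardinality and maps each proper subset to an independent set of the same cardinality, hence to a minimal dependent set). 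Everything else is bookkeeping with the $\F_2$ scalar rigidity and Lemma~\ref{lem:vector_to_matroid_aut}.
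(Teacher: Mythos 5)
Your proposal is correct and follows essentially the same route as the paper: map a basis of $\M$ to a basis via a vector-space automorphism $A$ (a matroid automorphism by Lemma~\ref{lem:vector_to_matroid_aut}), then use the fundamental circuit of each remaining element together with the $\F_2$ scalar rigidity to force $f(a) = \sum_{i\in S} f(b_i) = A\,g(a)$. Your explicit treatment of loops and of why morphisms carry circuits to circuits is slightly more careful than the paper's, but the argument is the same.
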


\begin{proof}
 Let $r$ be the rank of $\M$ and let $\{b_1, \hdots, b_r\}$ be a basis of $\M$. Then $\{f(b_1), f(b_2), \hdots, f(b_r)\}$ and $\{g(b_1), g(b_2), \hdots, g(b_r)\}$ are both sets of independent vectors in $\F_2^n$. Then there exists an automorphism $A$ of vector spaces (and hence a matroid automorphism)  that sends $g(b_i)$ to $f(b_i)$. For any other element in $v \in \M$ consider any circuit formed with a subset of the basis. If $\{v\} \cup \{b_i; i \in S\}$ is a circuit then it must be the case that: $f(v) = \sum_{i \in S} f(b_i)$ and $g(v) = \sum_{i \in S} g(b_i)$. Given that $A$ is a automorphism of vector spaces, then: $Ag(v) = \sum_{i \in S} Ag(b_i) = \sum_{i \in S} f(b_i) = f(v)$.
\end{proof}

\begin{theorem}\label{thm:order_independent_omm}
There is a order-independent randomized OMM from the class of binary matroids $\C$ into the complete binary matroid.
\end{theorem}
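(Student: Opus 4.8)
The plan is to combine the deterministic OMM of Theorem~\ref{thm:omm_binary} with a uniformly random automorphism of the host matroid $\F_2^n$, exploiting the ``single orbit'' phenomenon established in Lemma~\ref{lem:single_orbit}. Concretely, for each matroid $\M \in \C$ and each ordering $\pi$, let $f_{\M,\pi}$ be the deterministic OMM from Theorem~\ref{thm:omm_binary}, and define the randomized morphism by $g_{\M,\pi} = R \circ f_{\M,\pi}$, where $R$ is a single automorphism of $\F_2^n$ drawn uniformly at random from $\Aut(\F_2^n)$ (the same $R$ used for all matroids and all orderings, i.e.\ $R$ is drawn once). I would then verify the three required properties: (i) each $g_{\M,\pi}$ is a morphism; (ii) the family is consistent under prefix-restriction; (iii) the distribution of $g_{\M,\pi}$ does not depend on $\pi$.

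Properties (i) and (ii) are essentially immediate. For (i), $g_{\M,\pi}$ is a composition of matroid morphisms, hence a morphism. For (ii), if $(\M',\pi')$ is a prefix-restriction of $(\M,\pi)$, then since $f_{\M',\pi'}$ is the restriction of $f_{\M,\pi}$ (by Theorem~\ref{thm:omm_binary}), composing with the same $R$ shows $g_{\M',\pi'}$ is the restriction of $g_{\M,\pi}$; this holds pointwise for every realization of $R$, so in particular the conditional distributions agree. The real content is (iii): I would fix $\M$ and two orderings $\pi,\pi'$ and show $R \circ f_{\M,\pi}$ and $R \circ f_{\M,\pi'}$ have the same law. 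By Lemma~\ref{lem:single_orbit}, since $f_{\M,\pi}$ and $f_{\M,\pi'}$ are both morphisms $\M \to \F_2^n$, there is a fixed automorphism $B \in \Aut(\F_2^n)$ (depending on $\pi,\pi'$ but not on $R$) with $f_{\M,\pi} = B \circ f_{\M,\pi'}$. Hence $R \circ f_{\M,\pi} = (R\circ B) \circ f_{\M,\pi'}$, and since $R$ is uniform on the group $\Aut(\F_2^n)$, the random variable $R \circ B$ is also uniform (right multiplication by a fixed group element is a bijection of the group preserving the uniform measure). Therefore $R \circ f_{\M,\pi} \stackrel{d}{=} R' \circ f_{\M,\pi'}$ where $R'$ is again uniform, which is exactly the distribution of $g_{\M,\pi'}$.

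One subtlety I would flag and address carefully: I want a single global random $R$ rather than independent draws per matroid, because consistency under prefix-restriction must hold as an identity of random maps (the restriction of the realized $g_{\M,\pi}$ must literally equal the realized $g_{\M',\pi'}$, not just in distribution), and using the same $R$ throughout makes this automatic given the deterministic consistency from Theorem~\ref{thm:omm_binary}. The other point worth making explicit is the harmless conversion from OMM to OME: applying Lemma~\ref{lemma:omm_to_ome} (in its randomized form, which follows the same copy-indexing construction) upgrades the order-independent randomized OMM into an order-independent randomized OME into $\F_2^n{}_{[n]}$, since the copy index assigned to an element depends only on the arrival pattern of elements mapping to a given vector, and this is unaffected by post-composition with $R$. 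I do not anticipate a genuine obstacle here — the main thing is to get the bookkeeping of ``one global $R$'' right and to invoke Lemma~\ref{lem:single_orbit} at the correct place; the group-invariance of the uniform measure under translation is the crux and is a one-line argument.
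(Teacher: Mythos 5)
Your proposal is correct and is essentially the paper's own proof: compose the deterministic OMM of Theorem~\ref{thm:omm_binary} with a uniformly random $A \in \Aut(\F_2^n)$, invoke Lemma~\ref{lem:single_orbit} to relate $f_{\M,\pi}$ and $f_{\M,\pi'}$ by a fixed automorphism, and conclude by translation-invariance of the uniform measure on the group. The extra care you take about using a single global random automorphism (so that prefix-restriction consistency holds as an identity of realized maps) is a point the paper leaves implicit, but it is the same argument.
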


\begin{proof}
Let $f_{\M,\pi}$ be the online matroid morphism constructed in Theorem \ref{thm:omm_binary} and consider $A \circ  f_{\M, \pi}$ when $A$ is drawn uniformly at random from $\Aut(\F_2^n)$. It is clear that for every fixed $A$ the morphisms $A \circ f_{\M, \pi}$ still form an OMM. We only need to check that they are order-independent. To see that, observe that if $\pi$ and $\pi'$ are two different orderings of the ground set of $\M$ then $f_{\M, \pi}$ and $f_{\M, \pi'}$ are two morphisms $\M \rightarrow \F_2^n$. By the previous lemma, there is $A_0 \in \Aut(\F_2^n)$ such that $f_{\M, \pi} = A_0 \circ f_{\M, \pi'}$. Now, the distribution of $A \circ f_{\M, \pi}$ for a random $A \sim \Aut(\F_2^n)$ is the same distribution as $A \circ A_0 \circ f_{\M, \pi'}$ which is the same distribution of $A \circ f_{\M, \pi'}$, since $A_0 \circ A$ is also uniformly distributed over $\Aut(\F_2^n)$.
\end{proof}

\paragraph{Extending to Copies} In the following section we will be needing an online matroid embedding. For that reason, we need to extend the last two theorems to deal with copies. The extension is rather simple: we only need to observe that a matroid autormorphism of the commplete binary matroid with $n$ copies of each element $(\F_2^n)_{[n]}$ can be decomposed into an automorphism $A \in \Aut(\F_2^n)$ and indexings of the identities of the copies. 

\begin{lemma} If $f \in \Aut((\F_2^n)_{[n]})$ then there exists $A \in \Aut(\F_2^n)$ and indexings $\sigma_u:[n] \rightarrow [n]$ for each $u \in \F_2^n$ such that $f((u,j)) = (Au, \sigma_u(j))$
\end{lemma}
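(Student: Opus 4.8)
The plan is to analyze how an automorphism $f$ of $(\F_2^n)_{[n]}$ interacts with the natural projection $\phi: (\F_2^n)_{[n]} \rightarrow \F_2^n$ sending $(u,j) \mapsto u$. The key observation is that the rank function of $(\F_2^n)_{[n]}$ only sees the image under $\phi$: two elements $(u,j)$ and $(u,j')$ are parallel (each spans the same rank-1 flat), and more generally $\rank_{(\F_2^n)_{[n]}}(S) = \rank_{\F_2^n}(\phi(S))$. So the parallel classes of $(\F_2^n)_{[n]}$ are exactly the fibers $\phi^{-1}(u)$ for $u \in \F_2^n$, and since every element of $\F_2^n$ except $0$ is non-loop and distinct as a flat, the parallel classes are in canonical bijection with $\F_2^n$. (Here $0 \in \F_2^n$ is a loop, so $\phi^{-1}(0)$ is the class of loops, which is also preserved.)

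First I would show that any matroid automorphism $f$ of $(\F_2^n)_{[n]}$ permutes parallel classes: if $(u,j)$ and $(u,j')$ are parallel, then $\rank(\{f(u,j), f(u,j')\}) = \rank(\{(u,j),(u,j')\}) = 1$, so $f(u,j)$ and $f(u,j')$ are parallel, meaning they lie in a common fiber $\phi^{-1}(u')$. Since $f$ is a bijection and maps each fiber into a single fiber, it induces a well-defined bijection $\bar{f}: \F_2^n \rightarrow \F_2^n$ via $\phi \circ f = \bar{f} \circ \phi$. Then I would check $\bar{f}$ is a matroid automorphism of $\F_2^n$: for any $T \subseteq \F_2^n$, pick a set $S \subseteq (\F_2^n)_{[n]}$ with $\phi(S) = T$ and $|S| = |T|$ (one representative per element), so $\rank_{\F_2^n}(\bar f(T)) = \rank_{\F_2^n}(\phi(f(S))) = \rank_{(\F_2^n)_{[n]}}(f(S)) = \rank_{(\F_2^n)_{[n]}}(S) = \rank_{\F_2^n}(\phi(S)) = \rank_{\F_2^n}(T)$. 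By Lemma~\ref{lem:vector_to_matroid_aut}, $\bar f$ is a vector-space automorphism; set $A := \bar f$. Finally, since $f$ maps $\phi^{-1}(u)$ bijectively onto $\phi^{-1}(Au)$, and we have fixed index sets $[n]$ for both fibers, the restriction of $f$ to $\phi^{-1}(u)$ is given by some bijection $\sigma_u: [n] \rightarrow [n]$ with $f((u,j)) = (Au, \sigma_u(j))$, which is exactly the claim.

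The main obstacle — really the only subtle point — is making sure that the fibers of $\phi$ are genuinely the canonical parallel classes and are detected by the matroid structure alone, so that $\bar f$ is well-defined independent of the choice of representatives; once that is established the rest is bookkeeping. One must be mildly careful about the loop fiber $\phi^{-1}(0)$, but $f$ must map loops to loops (rank is preserved on singletons), so $\phi^{-1}(0)$ is preserved as a block and $A(0) = 0$ automatically (a vector-space automorphism fixes $0$), so no special casing is actually needed. I expect the write-up to be short.
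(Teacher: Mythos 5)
Your proof is correct, and it reaches the same intermediate identity as the paper --- namely an $A \in \Aut(\F_2^n)$ with $\phi \circ f = A \circ \phi$ for the natural projection $\phi$ --- but by a genuinely different route. The paper obtains $A$ in one line by applying Lemma~\ref{lem:single_orbit} to the two morphisms $\phi \circ f$ and $\phi$ from $(\F_2^n)_{[n]}$ to $\F_2^n$; you instead build $A$ from first principles, observing that the fibers of $\phi$ are exactly the loop class and the parallel classes of $(\F_2^n)_{[n]}$, that any automorphism permutes these (rank of singletons and pairs is preserved, and injectivity forces each fiber onto a full fiber), and that the induced bijection of $\F_2^n$ preserves rank and is hence linear by Lemma~\ref{lem:vector_to_matroid_aut}. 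Your version is more self-contained and makes explicit the well-definedness of the induced map, which the paper's argument delegates to the single-orbit lemma; the paper's version is shorter and reuses machinery it has already established. The final bookkeeping step --- that the restriction of $f$ to each fiber yields a bijection $\sigma_u : [n] \to [n]$ --- is identical in both.
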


\begin{proof} Let $id$ be the identity map and $\phi: (\F_2^n)_{[n]} \rightarrow \F_2^n$ the natural projection. Now, $\phi \circ f$ and $\phi$ are two matroid morphisms from $(\F_2^n)_{[n]} \rightarrow \F_2^n$ so by Theorem \ref{lem:single_orbit} there is $A \in \Aut(\F_2^n)$ such that $\phi \circ f  = A \circ \phi$ (see the commutative diagram below). This means in particular that 
$f((u,j)) = (Au, \sigma_u(j))$ for some indexings $\sigma_u$.

\begin{center}
\begin{tikzcd}[row sep=tiny]
& (\F_2^n)_{[n]} \arrow[r, "\phi"] & \F_2^n\\
(\F_2^n)_{[n]} \arrow[rd, "id"'] \arrow[ru, "f"] & & \\
& (\F_2^n)_{[n]} \arrow[r, "\phi"]  & \F_2^n \arrow[uu, "A"']
\end{tikzcd}
\end{center}

\end{proof}

With that,  Theorem \ref{thm:order_independent_omm} automatically extends to the matroid with copies $(\F_2^n)_{[n]}$ by taking a random automorphism from group $\Aut((\F_2^n)_{[n]})$.

\paragraph{Remark A (Single Orbit Morphisms)} Theorem \ref{thm:order_independent_omm} is not particular to binary matroids. The only fact it uses is that all the morphisms of the OMM belong to the same orbit under the action of the automorphism group $\Aut(\BigM)$. We can state it more generally as follows. The proof is identical to Theorem \ref{thm:order_independent_omm}, so we omit it here.

\begin{theorem}[Generalization of Theorem \ref{thm:order_independent_omm}]\label{thm:general_order_independence}
    Let $f_{\M, \pi}$ be an OMM of class $\C$ into $\BigM$ such that given two orderings $\pi$ and $\pi'$ of the ground set of $\M$, there is an automorphism $A \in \Aut(\BigM)$ such that $f_{\M, \pi'} = A \circ f_{\M, \pi}$. Then there is an order independent randomized OMM of class $\C$ into $\BigM$.
\end{theorem}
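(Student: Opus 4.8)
The plan is to mimic the proof of Theorem~\ref{thm:order_independent_omm} essentially verbatim, the only new ingredient being the choice of distribution from which we draw the random automorphism. Fix a probability distribution $\mu$ on the group $\Aut(\BigM)$ that is invariant under right multiplication; when $\Aut(\BigM)$ is finite — which is the case in all of our applications, since there the host matroid $\BigM$ has a finite ground set — we simply take $\mu$ to be the uniform distribution. We then define the candidate randomized OMM by drawing a single $A \sim \mu$ and setting $g_{\M,\pi} := A \circ f_{\M,\pi}$ for every $\M \in \C$ and every ordering $\pi$ of its ground set; the key point is that the \emph{same} $A$ is reused across all matroids and all orderings, in particular across a matroid and each of its prefix-restrictions.

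There are two things to verify. First, $g$ is a randomized OMM. For every realization of $A$, the map $A \circ f_{\M,\pi}$ is the composition of a matroid morphism with a matroid automorphism, hence a matroid morphism $\M \rightarrow \BigM$ (and a monomorphism, so that $g$ is an OME, whenever the $f_{\M,\pi}$ are monomorphisms). Moreover, if $(\M',\pi')$ is a prefix-restriction of $(\M,\pi)$, then the restriction of $A \circ f_{\M,\pi}$ to the ground set of $\M'$ equals $A \circ f_{\M',\pi'}$, because $f$ is an OMM; hence the restriction-coherence required of a randomized OMM holds for every realization of $A$, and a fortiori in distribution.

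Second, $g$ is order-independent. Fix $\M$ and two orderings $\pi, \pi'$ of its ground set. By the hypothesis of the theorem there is $A_0 \in \Aut(\BigM)$ with $f_{\M,\pi'} = A_0 \circ f_{\M,\pi}$, so $g_{\M,\pi'} = A \circ A_0 \circ f_{\M,\pi}$. Since $B \mapsto B \circ A_0$ is a bijection of $\Aut(\BigM)$ under which $\mu$ is invariant, the random automorphism $A \circ A_0$ has the same law as $A$; consequently $g_{\M,\pi'}$ has the same distribution as $A \circ f_{\M,\pi} = g_{\M,\pi}$, which is precisely order-independence.

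The step I expect to require the most care is the first sentence of the argument: the existence of the right-invariant distribution $\mu$ on $\Aut(\BigM)$. For finite (more generally, compact) automorphism groups this is immediate — uniform, respectively Haar, measure — and since every host matroid appearing in this paper has a finite ground set, the theorem applies as stated in all cases of interest; for a host matroid whose automorphism group is infinite and non-compact one would have to either restrict to a sufficiently large invariant finite subgroup or add a hypothesis guaranteeing an invariant probability measure. It is also worth flagging that ``order-independent'' concerns only the orderings of one fixed $\M$: the distributions of $g_{\M,\pi}$ for different matroids $\M$ are generally different, which is why the reindexing bijection $B \mapsto B \circ A_0$ lives entirely inside the single group $\Aut(\BigM)$.
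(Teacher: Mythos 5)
Your proposal is correct and follows essentially the same route as the paper, which itself omits the proof as "identical to Theorem~\ref{thm:order_independent_omm}": compose with a uniformly random $A \in \Aut(\BigM)$ and use the bijection $B \mapsto B \circ A_0$ to transfer the distribution between orderings. Your extra remarks on right-invariance of the measure and on reusing the same $A$ across prefix-restrictions are careful elaborations of points the paper leaves implicit, not a different argument.
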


\paragraph{Remark B (Other Fields)} In this section we repeatedly use the fact that $\F_2$ has only one non-zero constant, so whenever we identify a circuit in the matroid, we know exactly what is the linear dependency between the elements in the corresponding vector field. This is no longer true even in slightly larger fields like $\F_3$. If vectors $u, v, w \in \F_3^n$ form a circuit, it could be that: $w = \pm u \pm v$ in the representation. As a consequence, given two matroid morphisms $f,g: \M \rightarrow \F_3^n$ there may not exist a vector-space automorphism $A$ of $\F_3^n$ such that $f= Ag$. In the previous example, if $f(u) = g(u)$, $f(v) = g(v)$ but $f(w) = f(u) + f(v)$ but $g(w) = g(u) - g(v)$ no such automorphism can exist.

\section{OMEs and the Matroid Secretary Problem}\label{sec:reduction}

In this section, we use OMEs to explore the complexity of the matroid secretary problem (MSP) on binary matroids. We consider three %two 
versions of the problem, each making different assumptions on the data generation process and what's known to the algorithm. We will use OMEs to establish equivalences between these problems, showing that seemingly simpler problems are actually equivalent to harder ones (up to constants).
%matroid secretary problem (MSP):

\paragraph{Three Versions of the MSP.} We consider the following three versions of the MSP,
%\footnote{\pdc{Delegated to FN:} The first of these is identified as the RO-AA-MN model in the nomenclature of \cite{oveis2013variants}, whereas
%the second is identified as the RO-AA-MK model.} 
and aim to establish the relations in Figure~\ref{fig:reductions}.
%\pdc{Note: I have flipped the order, so that the problems are in decreasing order of difficulty.}
%\pd{In all three variants of the problem, there is a binary matroid $\M$ over a ground set $u_1, \ldots, u_n \in \F_2^d$ of $n$ elements, where $d$ is some non-negative integer. Each element $u \in \M$ has a non-negative weight $w_u$. Elements are presented to an algorithm in an online fashion, in uniform random order. Upon arrival of an element the algorithm has to immediately and irrevocably decide whether to accept the element. The goal of the algorithm is to select elements that form an independent set of $\M$ and whose combined weight is in expectation an $\alpha$-approximation to the weight of the optimal basis.}
In all three variants, the goal is an algorithm for selecting elements that form an independent set, and whose combined weight is in expectation an $\alpha$-approximation to the weight of the optimal basis.

\begin{itemize}
\item {\bf Online-revealed-matroid MSP:} 
%\pd{In this version, The weights of the elements are chosen adversarially and the matroid $\M$ is a priori unknown to the algorithm. Instead, at each step the algorithm learns about the vector $u$ of the newly arrived element.}
In this version of the problem,
there is an 
underlying matroid $\M$, which 
is a priori unknown to the algorithm. The algorithm has only access to the number of elements $n = \abs{\M}$ and to a promise that $\M \in \C$ for a class of matroids $\C$. For each element $u\in \M$, an adversary determines a weight $w_u \in \R_+$.
The algorithm %again 
processes pairs $(u, w_u)$ in random order at each time, but it only knows the rank function restricted to the subset of elements that have already arrived. Upon seeing the element, the algorithm must irrevocably decide whether to accept that element or not, subject to the constraint that the set of accepted elements must be an independent set of $\M$.
%In the second version, the algorithm has only access to the number of elements $n = \abs{\M}$ and to a promise that $\M \in \C$ for a class of matroids $\C$. The algorithm again processes pairs $(u, w_u)$ in random order at each time, but it only knows the rank function restricted to the subset of elements that have already arrived.
\item {\bf Known-matroid MSP:} 
%\pd{In this version, the weights are again chosen adversarially, but unlike in the previous version, the algorithm knows the matroid $\M$ in advance.}
In this version, %the \pd{second} version, 
the matroid $\M$ is known to the algorithm ahead of time. The only information missing is the weight of each element, which is again chosen adversarially.
As before elements arrive in random order, and the algorithm must make immediate accept/reject decisions, with the restriction that the chosen set of elements must be an independent set of $\M$. %then 
% processes pairs $(u, w_u)$ in random order.
% %, where $u$ is an element in the ground set of $\M$, and $w_u \in \R_+$ is its weight. 
% \pd{At each step, the} algorithm must irrevocably decide whether to accept that element or not, subject to the constraint that the set of accepted elements must be an independent set of $\M$.
%In the first version, a matroid $\M$ is known to the algorithm ahead of time. The only information missing is the weight of each element. The algorithm then processes pairs $(u, w_u)$ in random order, where $u$ is an element in the ground set of $\M$, and $w_u \in \R_+$ is its weight. Upon seeing the element, the algorithm must irrevocably decide whether to accept that element or not, subject to the constraint that the set of accepted elements must be an independent set of $\M$.
\item {\bf Prophet MSP:} In this version, the matroid $\M$ is again known to the algorithm ahead of time. 
%In addition, 
However, this time
the weight of each element is drawn from a known distribution $\D$, which can potentially sample weights in a correlated manner. 
As in the other versions elements are then presented to the algorithm in random order, and the algorithm aims to select an independent set of high weight in an online manner. %\np{The algorithm then processes pairs $(u, w_u)$ in random order, where $u$ is an element in the ground set of $\M$, and $w_u \in \R_+$ is its weight. Upon seeing the element, the algorithm must irrevocably decide whether to accept that element or not, subject to the constraint that the set of accepted elements must be an independent set of $\M$.}
%\np{{\bf Prophet MSP: }In this version, a matroid $\M$ is known to the algorithm ahead of time. In addition, the weight of each element is drawn from a known distribution $\D$, which can potentially sample weights \pd{in an} arbitrarily correlated \pd{manner}. The algorithm then processes pairs $(u, w_u)$ in random order, where $u$ is an element in the ground set of $\M$, and $w_u \in \R_+$ is its weight. Upon seeing the element, the algorithm must irrevocably decide whether to accept that element or not, subject to the constraint that the set of accepted elements must be an independent set of $\M$.}
\end{itemize}

% The first of these is identified as the RO-AA-MK model in the nomenclature of \cite{oveis2013variants}, whereas
% the second is identified as the RO-AA-MN model. 
%The second version is clearly harder in the sense that an $\alpha$-approximation to the second version immediately implies an $\alpha$-approximation to the first version. Our goal is to understand when is the converse true. 
Note that the first version is clearly harder than the second and the second version is clearly harder than the third, in the sense that an $\alpha$-approximation to the harder problem %first (second) version 
immediately implies an $\alpha$-approximation to the simpler one.
%the second (third) version. 
% The first version is clearly harder than the second version in the sense that an $\alpha$-approximation to the first version immediately implies an $\alpha$-approximation to the second version. The same is true for the second and third version. Namely, if we have an $\alpha$-approximation for the second version, then we have an $\alpha$-approximation for the third version.
%%
We derive approximate ``inverses'' of these comparisons from the existence of OMEs, even if we restrict prophet MSP to pairwise-independent distributions.
%\pd{Also note that due to a Yao's min-max principle type argument, the third version is as hard as the second one up to a constant factor, if we allow for (a certain type of positive) correlation \cite{dughmi2021matroid}.}

%Also note that \cite{dughmi2021matroid} have shown that if we allow for arbitrary correlations in the third version, then this problem is at least as hard as the second (up to constant factors), in the sense that if there is an $\alpha$-approximation to the third problem then there is an $O(\alpha)$-approximation to the second problem.

%\pdc{up to here}
\paragraph{Our Reductions.}

We first use OMEs to show an (essentially exact) ``inverse'' %reverse direction 
of the comparison between the online-revealed-matroid MSP and the known-matroid MSP, implying that for binary matroids the latter is as hard as the former.
%This proves that, in a formal sense, for binary matroids there is no advantage in knowing the matroid structure in advance.

%We will show the following theorem:\\

\begin{theorem}\label{thm:reduction}
If a class $\C$ of matroids admits a randomized order-independent online matroid embedding into matroid $\BigM$, then an $\alpha$-approximation to the known-matroid MSP for $\BigM_{[n]}$ implies that for every $\epsilon > 0$ there is a $(\alpha-\epsilon)$-approximation to the online-revealed-matroid MSP for $\C$.
\end{theorem}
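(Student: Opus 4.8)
The plan is to reduce the online-revealed-matroid MSP for $\C$ to the known-matroid MSP for $\BigM_{[n]}$ by using the order-independent randomized OME to translate an instance of the former into an (almost) random-order instance of the latter, running the $\alpha$-competitive known-matroid algorithm on the translated instance, and pulling back the selected elements. Concretely, suppose the elements $u_1,\dots,u_n$ of the unknown $\M \in \C$ arrive in uniformly random order $\pi$. As each $u_t$ arrives, I would use the online embedding to compute $f_{\M,\pi_t}(u_t) \in \BigM_{[n]}$; by the prefix-consistency of an OME this is well-defined online and the images of the first $t$ elements form a restriction of $\BigM_{[n]}$ isomorphic (as an ordered matroid) to the prefix of $\M$. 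I feed the pair $(f_{\M,\pi_t}(u_t), w_{u_t})$ to the known-matroid algorithm $\Alg$ for $\BigM_{[n]}$; since $f$ is a matroid monomorphism, any independent set of images pulls back to an independent set of $\M$ of the same weight, so the set $\Alg$ accepts is feasible for $\M$ and has the same weight. Hence if the translated sequence were a genuine uniformly-random-order instance of the known-matroid MSP on $\BigM_{[n]}$, we would get exactly an $\alpha$-approximation.

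The first real issue is that $\Alg$ is designed for instances where the \emph{entire} ground set $\BigM_{[n]}$ is presented in random order, whereas our translation only presents the $n$ images of $\M$'s elements (a random-sized, adversarially-structured subset). The standard fix is to \emph{interleave}: before running $\Alg$, commit (using fresh randomness) to inserting the remaining $|\BigM_{[n]}| - n$ ``dummy'' elements of $\BigM_{[n]}$ at random positions in the arrival stream, each with weight $0$, so that $\Alg$ sees a full ground set. Dummy elements can never help $\Alg$'s objective (weight $0$) and accepting them only shrinks the feasible region, so discarding them at the end can only improve the pulled-back solution; thus the value obtained is at least $\alpha$ times the optimum of $(\M,w)$ provided the interleaved order is uniform over orderings of $\BigM_{[n]}$.

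The crux — and the step I expect to be the main obstacle — is that the interleaved order is \emph{not} exactly uniform on the symmetric group of $\BigM_{[n]}$. Two sources of non-uniformity must be controlled. (i) The ``copy indices'': the OME of Lemma~\ref{lemma:omm_to_ome} assigns the $j$-th copy the first time an element of $\BigM$ is hit for the $j$-th time, so lower-indexed copies tend to arrive earlier; and the image $f_{\M,\pi}$ may a priori depend on $\pi$. This is exactly what order-independence is for: by hypothesis the law of $f_{\M,\pi}$ does not depend on $\pi$, and (extending as in the ``Extending to Copies'' paragraph) we may further randomize over $\Aut(\BigM_{[n]})$ so that the joint law of (images, arrival order of images) is exchangeable. (ii) Even then, the relative order of the $n$ ``real'' slots among all $|\BigM_{[n]}|$ slots, and the placement of dummies, gives a distribution that is only \emph{close} to uniform — close in total variation, with the gap controllable by choosing the interleaving distribution carefully (e.g. assigning each element of $\BigM_{[n]}$ an i.i.d.\ uniform arrival time in $[0,1]$ and letting the real elements inherit times from $\pi$). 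I would bound the total variation distance between the induced order on $\BigM_{[n]}$ and the uniform order by some $\delta(\epsilon)$, then invoke a coupling: run $\Alg$ on both the true (uniform) instance and the translated instance under the optimal coupling; they agree except on an event of probability $\le \delta$, and on that event the loss is at most the total weight, which can be absorbed by first truncating/scaling so that the contribution of the bad event to the expectation is at most $\epsilon \cdot \Opt$. Choosing the interleaving fine enough to make $\delta$ small then yields an $(\alpha-\epsilon)$-approximation, completing the proof. The delicate points are (a) making precise the sense in which order-independence plus the $\Aut$-randomization gives exchangeability of the real images' positions, and (b) the coupling/truncation bookkeeping that converts a TV-distance $\delta$ into an additive $\epsilon$ loss in the competitive ratio uniformly over weight vectors.
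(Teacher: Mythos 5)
Your proposal follows essentially the same route as the paper's proof: interleave the images of $\M$'s elements with zero-weight copies of the remaining elements of $\BigM$ using i.i.d.\ uniform timestamps, use order-independence to decouple the embedding from the arrival order, and handle the residual non-uniformity by a coupling that loses at most $\epsilon\cdot\Opt$ on a low-probability bad event. The paper makes your step (b) concrete by partitioning $[0,1]$ into $d$ buckets and ``raising a flag'' (aborting with zero reward) whenever the online sampler cannot reproduce the offline distribution, bounding the flag probability by $\epsilon$ via a Chernoff bound; since the algorithm's reward never exceeds $\Opt$, the truncation/scaling you worry about in (b) is unnecessary.
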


For the case of binary matroids that we previously discussed, the matroids $\BigM$ and $\BigM_{[n]}$ themselves are binary in which case we can obtain the following corollary:

\begin{corollary}\label{cor:reduction}
For binary matroids, there is no gap in approximability between %of 
the known-matroid MSP and the online-revealed-matroid MSP.
\end{corollary}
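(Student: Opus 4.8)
The plan is to apply Theorem~\ref{thm:reduction} to the class $\C$ of binary matroids, feeding it the order-independent online matroid embedding built in Section~\ref{sec:binary_matroid}, and then to observe that the host matroid appearing in that reduction is itself binary, so both the hypothesis and the conclusion of Theorem~\ref{thm:reduction} are statements about binary matroids alone.

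First I would assemble the embedding. Theorem~\ref{thm:omm_binary} gives an OMM from the class $\C$ of binary matroids on at most $n$ elements into $\F_2^n$; Lemma~\ref{lemma:omm_to_ome} upgrades it to an OME into $(\F_2^n)_{[n]}$; and the ``Extending to Copies'' discussion together with Theorem~\ref{thm:order_independent_omm} makes this OME randomized and order-independent. So $\C$ satisfies the hypothesis of Theorem~\ref{thm:reduction} with $\BigM = (\F_2^n)_{[n]}$. The crucial point is that $(\F_2^n)_{[n]}$ is a binary matroid --- it is $\F_2$-representable through the projection $(u,j)\mapsto u \in \F_2^n$ --- and for the same reason $\BigM_{[n]}$ is binary as well. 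Consequently, the object that Theorem~\ref{thm:reduction} requires, namely an $\alpha$-approximation algorithm for the known-matroid MSP on the specific matroid $\BigM_{[n]}$, is delivered by any $\alpha$-approximation for the known-matroid MSP over the whole class of binary matroids.

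It then remains only to combine the two directions. From Theorem~\ref{thm:reduction}: an $\alpha$-approximation for the known-matroid MSP over binary matroids yields, for every $\epsilon > 0$, a $(\alpha - \epsilon)$-approximation for the online-revealed-matroid MSP over $\C$. Conversely, any $\alpha$-approximation for the online-revealed version is a fortiori one for the known version, since the online-revealed algorithm simply ignores the advance knowledge of the structure. Letting $\epsilon \to 0$, the optimal competitive ratios of the two versions coincide, which is the ``no gap'' claim. The one place to be careful is the bookkeeping of the parameter $n$: the embedding, and hence the reduction, is indexed by $n$, and $\BigM_{[n]}$ has more than $n$ elements, so the corollary should be read as the assertion that the two optimal ratios agree at each $n$ (equivalently, agree asymptotically in $n$), obtained by running Theorem~\ref{thm:reduction} separately for each $n$ with the corresponding host matroid. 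I expect this bookkeeping to be the only mildly delicate step; everything else is a direct substitution into Theorem~\ref{thm:reduction}.
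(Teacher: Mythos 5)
Your proposal is correct and follows exactly the paper's intended argument: apply Theorem~\ref{thm:reduction} with the order-independent OME for binary matroids from Section~\ref{sec:binary_matroid}, and observe that $\BigM = \F_2^n$ and its copy-extension $\BigM_{[n]}$ are themselves binary, so the known-matroid instance produced by the reduction stays within the class. The remarks on the trivial converse direction and the per-$n$ bookkeeping are sensible elaborations of what the paper leaves implicit.
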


As our second result, we use the existence of OMEs into a 2-transitive host matroid as a tool 
%we strengthen the result of \cite{dughmi2021matroid} 
to establish the approximate equivalence of known-matroid MSP and prophet MSP with pairwise-independent distributions.

% \np{Next, we note that the first version is clearly harder version of the third version. In the converse direction, due Yao's min-max principal type argument, \cite{dughmi2021matroid} showed that the third version of MSP is as hard as the first version up to a constant factor. Our goal is to understand whether the arbitrary correlation in the weight distribution is necessary. In the next theorem, we show that for the class of binary matroids, the first version turns out to be as hard as the third version when the weight distribution is restricted to the class of pairwise independent.}

\begin{theorem}\label{thm:gen_pw_ind_PS_to_MSP}
Suppose a class $\C$ of matroids admits a randomized order-independent online matroid embedding into matroid $\BigM$ which is $2$-transitive. Then an $\alpha$-approximation to the prophet MSP with pairwise-independent weight
distributions for $\BigM_{[k]}$, implies that for some constant $C>0$ there is a $C\cdot (\alpha - o(1))$-approximation to the known-matroid MSP for $\C$.
   % \np{If there exists an $\alpha$-approximate algorithm for the Prophet MSP for the class of binary matroids with a pairwise independent weight distribution, then there exists \pd{a} $C\cdot (\alpha - o(1))$-approximate algorithm for known-matroid MSP for some constant $C>0$.} 
\end{theorem}

Noting that the full binary matroid is $2$-transitive, and there exists an order-independent OME from the class of binary matroids into a full binary matroid, we obtain the following corollary. 

%\pdc{I have removed the corollary, and replaced it with the following comment.}
%\npc{Since we have a general reduction, should we add a corollary?} \pdc{Brought it back.}

\begin{corollary}\label{cor:gen_pw_ind_PS_to_MSP}
%\label{cor:binary_reductions}
For binary matroids, there is a constant-factor gap in approximability between the known-matroid MSP and the prophet MSP with pairwise-independent weight distributions.
\end{corollary}

We note that we can also chain the two reductions, and this way relate the online-revealed-matroid MSP to the Prophet MSP with pairwise-independent distributions.
The rest of this section is devoted to the proofs of the reductions. %\pdc{yes!}

% \np{For the case of binary matroids that we previously discussed, the matroids $\BigM$ and $\BigM_{[n]}$ themselves are binary. Therefore, combining Theorem~\ref{thm:reduction}  and Theorem~\ref{thm:gen_pw_ind_PS_to_MSP}, we can obtain the following corollary:}

% \pdc{todo: move to 5.1} The rest of this section is dedicated to a reduction from the known-matroid MSP for $\BigM_{[n]}$ together with an online matroid embedding to the online-revealed-matroid MSP. The main challenge of the reduction is to (mostly) preserve random order of arrivals.

\subsection{Proof of Theorem \ref{thm:reduction}}\label{sec:proof_of_MSP_known_to_unknow}

Let $\M$ be the unknown matroid that is revealed to the algorithm in an online fashion and let $f_{\M,\pi}$ be an order-independent randomized OME into $\BigM$. Let $n = \abs{\M}$, $N = \abs{\BigM}$ and $d,k$ be two larger integers (to be specified later) where $k$ is a multiple of $d$.

An instance of the MSP consists of a sequence of weighted elements from $\M$ that are presented to the algorithm in random order. Our goal is to map it on the fly to a random instance of the MSP on  $\BigM_{[k]}$. The main difficulty is, as usual, doing it online and preserving the random order. Our strategy will be to first provide an \emph{offline reduction} which preserves random order and obtains the desired approximation, but can't be implemented online. After that we will provide a \emph{mostly-online implementation} of this reduction, i.e.~a procedure that samples from the same distribution generated by the offline reduction and that with $1-\epsilon$ probability can be implemented online. With the remaining $\epsilon$ probability, the process \emph{raises a flag}. Raising a flag will indicate that from that point on, the reduction can no longer be implemented online. Algorithmically, we will stop the algorithm whenever we raise a flag and obtain zero reward.  Finally, we will show that the probability of raising a flag is very small for large values of $k$ and $d$.

\paragraph{Offline Reduction} 
We will view a weighted element of $\M$ as a pair $(u,w_u)$ with $u \in \M$ and $w_u \in \R_+$.  In the offline reduction, we assume we have access to the entire matroid $\M$ and the entire sequence of weights. Now, we will produce a distribution of instances of $\BigM_{[k]}$ as follows.

For each element $v$ in $\BigM$ sample $k$ different i.i.d.~timestamps $t_{vj}$ for $j \in [k]$ from the $\text{Uniform}([0,1])$ distribution. Those timestamps specify the arrival time of each of the $k$ copies of the elements in $\BigM$ and induce a random ordering over the ground set of $\BigM_{[k]}$. For the matroid $\M$, sample a random embedding $f:\M \rightarrow \BigM$ from the OME. (Since the embedding is order independent, we don't need to know the arrival order of elements in $\M$ to sample such embedding). For each $u \in \M$, pick a random copy of $f(u)$ and set its weight to $w_u$. For the remaining elements, set the weight equal to zero.

This random input is clearly in random order as it is equivalent to starting with $k$ copies of the elements of $\BigM$ where all but one copy has weight zero if that corresponds to an element of $\M$ and randomly permuting those elements. Now, feed this instance to the $\alpha$-competitive algorithm for the MSP on $\BigM_{[k]}$. From the set selected by the algorithm, discard any element with zero weight chosen by the algorithm. The elements with non-zero weight chosen in $\BigM$ correspond to an independent set of $\M$ with the same weight. Hence, in expectation, we select an $\alpha$-approximation to the optimal basis of $\M$.

\paragraph{Mostly-online Implementation} The drawback of the previous reduction is that it can't be implemented online as we are assuming we know everything in advance. We will describe the same sampling procedure in a way that with high probability we can generate the instance as we go. In the sampling procedure, we will also define an event \emph{raise a flag} which will mean that we can't generate that instance online as we learn the structure of the matroid $\M$.

The process will again start by sampling i.i.d.~timestamps $t_{vj}$ for $v \in \BigM$ and $j \in [k]$ from $\text{Uniform}([0,1])$. In addition, we will also sample $n$ additional timestamps from $\text{Uniform}([0,1])$ sort them in increasing order and denote the sorted list by 
$T_1, \hdots, T_n$.

We will now divide the interval $[0,1]$ into intervals $I_i = [\frac{i-1}{d}, \frac{i}{d})$ for $i \in [d]$. If more than one timestamp $T_s$ falls in the same interval $I_i$, we will raise a flag. We will count how many of the timestamps $t_{v,j}$ fall in each interval:
$$X_{vi} = \abs{\{ j \in [k]; t_{vj} \in I_i\}}$$
With that, also define:
$$A_{vi} = \min\left(\frac{k}{d}, X_{vi}\right) \qquad B_{vi} = \max\left(0, X_{vi}- \frac{k}{d}\right)$$
Now, process the elements of the matroid $\M$ according to order $\pi$ (which will be sampled at random). As we process the $s$-th  element $u = \pi(s) \in \M$, we will map it to an element in $v = f_{\M, \pi}(u) \in \BigM$ using the randomized OME. Now, we will apply the following procedure to choose a copy of $v$ in $\BigM_{[k]}$ to assign weight $w_u$:

\begin{itemize}
\item find the interval $I_i$ containing $T_s$.
\item with probability $A_{vi} d/k$, choose one of the $X_{vi}$ timestamps $t_{vj}$ in interval  $I_i$
\item with remaining probability (if any), raise a flag and choose a different interval $I_{i'}$ with probability proportional to $B_{vi'}$ and choose a timestamp $t_{vj}$ in that interval.
\end{itemize}

We assign weight $w_u$ to the element with the chosen timestamp and zero weight to others. 
 Now, we will show the following facts.

\begin{lemma}
The \emph{mostly online implementation} samples sequences with the same probability as the \emph{offline reduction}.
\end{lemma}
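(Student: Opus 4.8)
The plan is to show that the two sampling procedures induce the same distribution over outputs by exhibiting a common underlying random source and arguing that the downstream quantities (arrival order of all copies in $\BigM_{[k]}$ and the assignment of the weight $w_u$ to a copy of $f(u)$) are measurable functions of that source with the same law in both procedures. The shared randomness is: (i) the timestamps $t_{vj}$ for $v \in \BigM$, $j \in [k]$; (ii) the random embedding $f$ drawn from the order-independent OME; and (iii) the random order $\pi$ of $\M$. The only genuinely new randomness in the mostly-online implementation is the auxiliary timestamps $T_1 < \dots < T_n$ and the internal coin flips used to pick a copy; I must show that, after marginalizing these out, the induced joint law of (arrival times of all $\BigM_{[k]}$-elements, weight assignment) coincides with the offline one.

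First I would observe that the arrival times $t_{vj}$ are literally the same i.i.d.\ Uniform$[0,1]$ variables in both procedures, so the induced random order over the ground set of $\BigM_{[k]}$ is identical; it remains only to check that, conditioned on these timestamps and on $f$, the mostly-online implementation assigns the nonzero weight $w_u$ to each $t_{vj}$ with $v = f(u)$ with exactly the probability the offline reduction does, namely the uniform distribution over the $k$ copies of $v$ (equivalently, over the $k$ timestamps $t_{v1},\dots,t_{vk}$). For a fixed element $u = \pi(s)$ mapped to $v = f(u)$, the offline procedure picks each of the $k$ copies with probability $1/k$. In the mostly-online procedure, conditioned on $T_s$ falling in interval $I_i$, the rule selects one of the $X_{vi}$ timestamps in $I_i$ with total probability $A_{vi} d/k$, and otherwise (probability $1 - A_{vi}d/k$) moves to an interval $I_{i'}$ chosen proportionally to $B_{vi'}$. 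The key arithmetic identity is that for each fixed copy $t_{vj}$ lying in some interval $I_{i^*}$, its overall selection probability is

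\[
\P[T_s \in I_{i^*}] \cdot \frac{A_{v i^*} d/k}{X_{v i^*}} \;+\; \sum_{i \neq i^*} \P[T_s \in I_i]\,\bigl(1 - A_{vi} d/k\bigr)\,\frac{B_{v i^*}}{\sum_{i'} B_{v i'}} \cdot \frac{1}{X_{v i^*}}.
\]

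Since $T_s$ is Uniform$[0,1]$ restricted to $[0,1]$ and the intervals have equal length, $\P[T_s \in I_i] = 1/d$ for each $i$; also $1 - A_{vi}d/k = B_{vi} d/k$ by the definitions $A_{vi} = \min(k/d, X_{vi})$ and $B_{vi} = \max(0, X_{vi} - k/d)$ (so $A_{vi} + B_{vi} = X_{vi}$ and $A_{vi}d/k + B_{vi}d/k = X_{vi}d/k$, but more usefully $A_{vi}d/k = \min(1, X_{vi}d/k)$, hence $1 - A_{vi}d/k = \max(0,1 - X_{vi}d/k) = B_{vi}d/k$ — here one uses that $k/d$ is an integer and $X_{vi} \le k$). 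Substituting, the first term is $\frac{1}{d}\cdot\frac{\min(k/d,X_{vi^*})d/k}{X_{vi^*}}$ and the "overflow" mass redistributed is $\sum_{i} \frac1d B_{vi} d/k = \frac1k\sum_i B_{vi} = \frac1k\bigl(\sum_{i}X_{vi} - \sum_i A_{vi}\bigr)$; since $\sum_i X_{vi} = k$ and $\sum_i A_{vi}$ counts the "filled" part, the redistributed mass equals exactly $\frac1k\sum_{i'}B_{vi'}$, so the normalization in the second sum is consistent and the per-copy total collapses to $1/k$. I would present this computation cleanly, noting that it is exactly the standard "bucket-fill then overflow" coupling that flattens a non-uniform histogram to the uniform distribution.

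The main obstacle — and the place to be careful — is twofold. One must verify that the event "raise a flag" is correctly accounted for: the offline reduction never raises a flag, but a flag contributes to the mostly-online procedure only through the auxiliary variables $T_s$ (two in the same $I_i$) or through the overflow branch, and in both cases the procedure still \emph{produces} an output (a weight assignment and ordering) — the lemma is about the distribution of the produced sequence, not about whether a flag was raised, so one should state explicitly that we compare the outputs marginally, with the flag events and auxiliary $T$'s integrated out. (The separate claim that flags are rare for large $k,d$ is a different lemma and not needed here.) Two, one must invoke order-independence of the OME at the right moment: in the offline reduction $f$ is sampled without knowing $\pi$, while in the mostly-online run $f_{\M,\pi}$ is used during the online pass; order-independence is precisely what guarantees these two have the same law, and the compatibility (prefix-restriction) condition of the randomized OME guarantees the sequential sampling of $f$ during the pass agrees with sampling it all at once. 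I would close by noting that, having matched the conditional weight-assignment law copy-by-copy and having identified all other randomness, Fubini gives equality of the full joint distributions, completing the proof.
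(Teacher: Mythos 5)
Your proposal follows essentially the same route as the paper's proof: identify the shared randomness, observe that the arrival order on $\BigM_{[k]}$ is determined by the $t_{vj}$ identically in both procedures, invoke order-independence of the OME to reduce to the case where each element of $f(\M)$ receives an independent Uniform$[0,1]$ timestamp, and then verify by direct computation that each fixed copy $t_{vj}$ receives the nonzero weight with probability exactly $1/k$. The decomposition into a ``stay in $I_{i^*}$'' term and an ``overflow into $I_{i^*}$'' term, and the cancellation of the normalization $\sum_{i'}B_{vi'}$, are exactly the paper's computation.

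Two points deserve correction. First, the identity $1 - A_{vi}d/k = B_{vi}d/k$ is false: $1 - A_{vi}d/k = \max(0,\,1 - X_{vi}d/k)$ whereas $B_{vi}d/k = \max(0,\,X_{vi}d/k - 1)$, and these agree only when $X_{vi}=k/d$ (both vanish); otherwise exactly one of them is positive. Your conclusion survives because the only quantity you actually need is the \emph{total} overflow mass, and $\sum_i \frac{1}{d}\left(1 - \frac{A_{vi}d}{k}\right) = 1 - \frac{1}{k}\sum_i A_{vi} = \frac{1}{k}\sum_i B_{vi}$ (using $\sum_i X_{vi}=k$ and $A_{vi}+B_{vi}=X_{vi}$), which numerically coincides with the sum you derived from the wrong per-interval identity; also, when $B_{vi^*}>0$ one has $A_{vi^*}=k/d$, so including or excluding $i=i^*$ in the overflow sum is immaterial. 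You should state the aggregate identity directly rather than the false pointwise one. Second, $T_s$ is the $s$-th order statistic of $n$ i.i.d.\ uniforms and is \emph{not} itself Uniform$[0,1]$; what is uniform (and independent across elements of $f(\M)$) is the timestamp attached to a fixed element $v=f(u)$ after averaging over the random arrival order $\pi$, and this is precisely the step where order-independence of the randomized embedding is used. You correctly flag that order-independence must be invoked here, but the displayed computation should be conditioned on $v$'s timestamp being a fresh uniform draw, not on the order statistic $T_s$.
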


\begin{proof}
Observe that if we ignore the weights, the order of the elements of $\BigM$ is the same in both processes since they are determined by the timestamps $t_{vj}$. What we are left to argue is that for each $v$ we select uniformly random timestamp $t_{vj}$
to assign the non-negative weight. For that, observe that since $f_{\M, \pi}$ is order independent, it has the same distribution as if we first sampled a monomorphism $f:\M\rightarrow \BigM$, and then we sampled an independent uniform indexing $\pi$ for the arrival order of the elements in $f(\M)$. This implies that when we assign the timestamps $T_1,\dots, T_n$ according to $\pi$, the resulting distribution is the same as if we assigned i.i.d.~Uniform$[0,1]$ timestamps $T_v$ to each element $v\in f(\M)$, and therefore, the interval $T_v$ lands in is uniformly chosen and independent across elements $v\in f(\M)$.
%observe that since the elements on $\M$ are processed in random order, assigning each of them to a timestamp $T_s$ is the same as sampling i.i.d.~a timestamp $T_v$ for each $v$ in $\BigM$ that is mapped by an element of $\M$. 
Now fix a certain timestamp $t_{vj}$ and let $I_i$ be the interval containing it. We will show that the probability that this timestamp is selected is exactly $1/k$.

Consider two cases: either $X_{vi} \leq k/d$ in which case the probability of sampling $t_{vj}$ is the probability that the timestamp $T_v$ is in $I_i$ (which is $1/d$), times the probability we decide to sample a timestamp inside $I_i$ (which is $X_{vi} d/k$), times the probability that out of those, we choose $t_{vj}$ (which is $1/X_{vi}$). The total probability is:
$$\frac{1}{d} \cdot \frac{X_{vi}d}{k} \cdot \frac{1}{X_{vi}} = \frac{1}{k} $$
In the case where $X_{vi} > k/d$, then it is possible that we sample $t_{vj}$ also when $T_v$ is outside $I_i$. The probability that we sample $t_{vj}$ and $T_v$ is in $I_i$ is:
$$\frac{1}{d} \cdot 1 \cdot \frac{1}{X_{vi}} = \frac{1}{d X_{iv}} $$
The probability that we sample when it is outside is the probability that we choose a different interval $I_{i'}$, raise a flag and then move to interval $I_i$, which is:
$$\sum_{i'} \frac{1}{d} \left( 1- \frac{A_{vi'}d}{k}\right) \cdot \frac{B_{vi}}{\sum_{i'} B_{vi'}} \cdot \frac{1}{X_{vi}} = \frac{1}{k}\frac{ \left( k - \sum_{i'}A_{vi'} \right)}{\sum_{i'} B_{vi'}} \cdot \frac{B_{vi}}{X_{vi}}= \frac{1}{k} \frac{B_{vi}}{X_{vi}}$$
because $\sum_{i'} A_{vi'} + \sum_{i'} B_{vi'} = \sum_{i'} X_{vi'} = k$. Taking those two probabilities together, we have:
\[\frac{1}{d X_{iv}} + \frac{1}{k} \frac{B_{vi}}{X_{vi}} = \frac{1}{k}. \qedhere\]
\end{proof}

\begin{lemma}
If no flags were raised, we can produce the instance on the fly as we process $\M$.
\end{lemma}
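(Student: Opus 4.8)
The plan is to describe an online procedure that interleaves the processing of the arriving elements of $\M$ with the feeding of the copies of $\BigM_{[k]}$ to the known-matroid MSP algorithm. Up front the procedure samples all timestamps $t_{vj}$ (for $v\in\BigM$, $j\in[k]$) together with the sorted auxiliary timestamps $T_1<\dots<T_n$, and hence also the intervals $I_1,\dots,I_d$ and the quantities $X_{vi},A_{vi},B_{vi}$; it immediately checks whether two of the $T_s$ share an interval and raises a flag if so. Thereafter, when the $s$-th element $u=\pi(s)$ of $\M$ arrives, it computes $v=f_{\M,\pi}(u)$ — this is available online because the OME is consistent under prefix-restriction, so $f_{\M,\pi}(u)=f_{\M_s,\pi_s}(u)$ for the prefix-restriction $(\M_s,\pi_s)$ determined by the first $s$ arrivals — and then runs the selection step of the mostly-online implementation for $u$. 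Conditioned on no flag, that step returns a copy $(v,j_s^\ast)$ of $v$ whose timestamp $t_{vj_s^\ast}$ lies in the interval $I_i$ containing $T_s$. The procedure now feeds to the MSP algorithm, in increasing order of timestamp, every copy of $\BigM_{[k]}$ not yet fed whose timestamp is at most $t_{vj_s^\ast}$, assigning weight $w_u$ to $(v,j_s^\ast)$ and weight $0$ (or its earlier-decided weight) to the others; it reads off whether the algorithm accepts $(v,j_s^\ast)$ and accepts $u$ in $\M$ iff it does. After $\M$ is exhausted it feeds all remaining copies in timestamp order and discards from the algorithm's output every accepted element of weight $0$. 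Since $T_1<\dots<T_n$ occupy strictly increasing intervals, the chosen timestamps $t_{v_1j_1^\ast}<\dots<t_{v_nj_n^\ast}$ are increasing, so over the whole run the copies are released in global timestamp order and the algorithm is run on exactly the random instance of $\BigM_{[k]}$ produced by the sampling process; the surviving accepted elements then form an independent set of $\M$ of equal total weight.

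The only thing that must be checked is that, conditioned on no flag, the weight of each copy $(v,j)$ is known by the time $t_{vj}$ is released. The key observation is the following. Because no flag was raised, no two of $T_1,\dots,T_n$ lie in a common interval, so each interval $I_i$ contains at most one auxiliary timestamp, say $T_s$; and since $f_{\M,\pi}$ is injective, at most one element of $\M$ — namely $\pi(s)$ — is ``associated with'' $I_i$, and the copy selected for $\pi(s)$ (in the no-flag case) has its timestamp in $I_i$ as well. Consequently a copy $(v,j)$ can have nonzero weight only if $v=f_{\M,\pi}(\pi(s))$ where $T_s$ is the unique auxiliary timestamp in the interval of $t_{vj}$, and then only if $(v,j)$ is exactly the copy chosen for $\pi(s)$. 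Hence every copy in an interval containing no auxiliary timestamp has weight $0$, and in the interval associated with $\pi(s)$ the only possibly-nonzero copy is the one selected when $\pi(s)$ is processed. Therefore, once $\pi(1),\dots,\pi(s)$ have been processed, the weights of all copies whose timestamps lie in the first $i$ intervals (where $T_s\in I_i$) are determined — in particular everything up to and including $t_{vj_s^\ast}$ — and, after all of $\M$ has been processed, every remaining copy (those in the interval of $T_n$ beyond $(v_n,j_n^\ast)$, and those in the empty later intervals) has weight $0$.

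The subtle point — and the reason the interval construction is present at all — is the intra-interval ordering: within $I_i$ a copy's timestamp $t_{vj}$ may fall before or after $T_s$, so one cannot in general commit to $(v,j)$'s weight ``by the time $t_{vj}$ is reached'' without having already processed $\pi(s)$. The observation above dissolves this apparent circularity: such a copy can be nonzero only if $v=f_{\M,\pi}(\pi(s))$, in which case the only selection that matters is the one performed when $\pi(s)$ itself arrives, so it suffices to perform that selection first and then release all of $I_i$'s copies. (Coincident timestamps occur with probability $0$ and may be broken arbitrarily.) I expect this intra-interval coordination argument to be the only real content; everything else is bookkeeping that is essentially forced by the description of the mostly-online implementation.
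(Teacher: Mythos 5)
Your proposal is correct and follows essentially the same argument as the paper: conditioned on no flag, each $T_s$ occupies its own interval and the chosen copy for $\pi(s)$ lies in that same interval, so all other copies in already-passed intervals have weight zero and the instance can be released to the known-matroid algorithm in timestamp-ordered batches. Your explicit treatment of the intra-interval ordering (why copies of $I_{i_s}$ preceding $T_s$ can safely be held back until $\pi(s)$ arrives) is a point the paper's proof passes over silently, but it is the same reduction.
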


\begin{proof}
Let $i_1 < \hdots < i_n$ be the indices of the intervals such that $T_s \in I_{i_s}$. Since no flag was raised, then each $T_s$ landed in a different interval and the $s$-th element that arrives from matroid $\M$ is mapped to a copy inside $I_{i_s}$. This enables the following online reduction: once the $s$-th element arrives we can decide the weights of all the elements in intervals $I_{i_{s-1}+1}$ to $I_{i_s}$ and feed to the MSP algorithm for $\BigM_{[k]}$. In this sub-sequence there will be at most one element on non-zero weight which corresponds to the arriving element of $\M$. We can observe if that element was selected in $\BigM_{[k]}$ and if so, we can select it in $\M$.
\end{proof}

\begin{lemma}
For any $n$ and $\epsilon$, there are large enough $k$ and $d$, such that the probability that we raise a flag is at most $\epsilon$.
\end{lemma}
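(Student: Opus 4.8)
The plan is to bound the probability of the two distinct events that cause a flag to be raised, and show each can be made at most $\epsilon/2$ by choosing $d$ and $k$ large enough (in that order: first $d$ as a function of $n$ and $\epsilon$, then $k$ as a function of $d$ and $\epsilon$). Recall from the construction that a flag is raised either (i) at the very start, if two of the timestamps $T_1,\dots,T_n$ land in the same interval $I_i$; or (ii) while processing some element $u=\pi(s)$ mapped to $v=f_{\M,\pi}(u)$, if the interval $I_i$ containing $T_s$ has $X_{vi}<k/d$ so that the "remaining probability" $1-A_{vi}d/k>0$ is positive. I would handle these two sources separately.

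\medskip

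\textbf{Event (i): a birthday-type collision.} The timestamps $T_1,\dots,T_n$ are (after sorting) just $n$ i.i.d.\ $\mathrm{Uniform}[0,1]$ points, so the probability that two of them land in the same interval among $d$ equal-length intervals is at most $\binom{n}{2}/d$ by a union bound over pairs. Choosing $d \ge n^2/\epsilon$ makes this at most $\epsilon/2$. This fixes $d$ as a function of $n$ and $\epsilon$.

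\medskip

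\textbf{Event (ii): an interval with too few copies.} Fix the choice of $d$ from the previous step. For the $s$-th arriving element, which is mapped to some $v\in\BigM$, the relevant interval $I_i$ is the one containing $T_s$; note $X_{vi}=|\{j\in[k]:t_{vj}\in I_i\}|$ is $\mathrm{Binomial}(k,1/d)$ with mean $k/d$. A flag is raised in step $s$ only if $X_{vi}<k/d$, i.e.\ if $X_{vi}$ falls strictly below its mean. By a Chernoff/Hoeffding bound, $\P[X_{vi} < (1-\delta)k/d] \le \exp(-\delta^2 k/(2d))$ for any fixed $\delta\in(0,1)$; more precisely I want $\P[X_{vi}<k/d]$, which I would bound by a standard concentration inequality showing this probability tends to a constant bounded away from $1$ — and actually one needs a strict deficit, so the cleaner route is: the probability of raising a flag in step $s$ is at most $\E[(1-A_{vi}d/k)] = \E[(1 - \min(1, X_{vi}d/k))] = \frac{d}{k}\,\E[\max(0, k/d - X_{vi})] = \frac{d}{k}\,\E[(k/d - X_{vi})^+]$. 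The quantity $\E[(k/d - X_{vi})^+]$ is the mean absolute deviation below the mean of a $\mathrm{Binomial}(k,1/d)$, which is $\Theta(\sqrt{k/d})$ (at most $\sqrt{\mathrm{Var}(X_{vi})} = \sqrt{(k/d)(1-1/d)} \le \sqrt{k/d}$ by the $L^1 \le L^2$ inequality applied to $(X_{vi}-k/d)^+$, noting $2\E[(X_{vi}-k/d)^+] \ge \E|X_{vi}-k/d|$ suffices to get an $O(\sqrt{k/d})$ bound). Hence the per-step flag probability is at most $\frac{d}{k}\cdot\sqrt{k/d} = \sqrt{d/k}$, and a union bound over the $n$ steps gives total probability at most $n\sqrt{d/k}$ for event (ii). Choosing $k \ge 4 n^2 d/\epsilon^2$ makes this at most $\epsilon/2$.

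\medskip

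Combining, with $d = \lceil n^2/\epsilon\rceil$ and $k$ a multiple of $d$ with $k \ge 4n^2 d/\epsilon^2$, the total flag probability is at most $\epsilon$. The main obstacle I anticipate is getting the right handle on $\E[(k/d - X_{vi})^+]$ — one must be careful that this is a \emph{one-sided} deviation and that it scales like the standard deviation $\sqrt{k/d}$ rather than like the mean $k/d$; the factor $d/k$ out front is exactly what turns $\sqrt{k/d}$ into the vanishing quantity $\sqrt{d/k}$, so the estimate is tight enough to work but leaves no slack to be sloppy. A secondary subtlety is that the interval $I_i$ and the element $v$ are themselves random and correlated with the $T_s$'s, so I would phrase the per-step bound as a statement that holds conditionally on $T_s$ landing in \emph{any} fixed interval and on $v$ being \emph{any} fixed element (the bound on $\E[(k/d-X_{vi})^+]$ depends only on the law of $X_{vi}$, which is $\mathrm{Binomial}(k,1/d)$ regardless), so the conditioning is harmless.
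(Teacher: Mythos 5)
Your proposal is correct. The decomposition into the two flag sources and the treatment of the first one (the birthday bound $\binom{n}{2}/d$) match the paper exactly, but your handling of the second source is genuinely different. The paper applies a Chernoff bound to each $X_{vi}$ and a union bound over all $nd$ pairs $(v,i)$ to conclude that, except on an event of probability $nd\exp(-\delta^2 k/(2d))$, every relevant $X_{vi}$ is at least $(1-\delta)k/d$, so each step flags with probability at most $\delta$ and the total is $\delta n$; this forces three parameters ($\delta$, $d$, $k$) and leaves $k$ only implicitly "large enough." You instead observe that the per-step flag probability is exactly $\E\bigl[\tfrac{d}{k}(k/d - X_{vi})^+\bigr]$ and bound the one-sided mean deviation by the standard deviation $\sqrt{k/d}$ via $L^1 \le L^2$, giving $\sqrt{d/k}$ per step and $n\sqrt{d/k}$ overall. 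Your route is more elementary (second moments only, no exponential tails), yields an explicit polynomial choice $k = O(n^2 d/\epsilon^2) = O(n^4/\epsilon^3)$, and correctly notes the only real subtlety: that $v$ and the interval $I_i$ are random but independent of the $t_{vj}$'s, so the conditional law of $X_{vi}$ is still $\mathrm{Binomial}(k,1/d)$. The paper's route, in exchange for the heavier machinery, gives a uniform high-probability guarantee on all the $X_{vi}$ simultaneously, but for the purposes of this lemma that extra strength is not needed. Both arguments are valid.
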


\begin{proof}
The first event in which we raise a flag is when two timestamps $T_s$ land in the same interval. The probability that this happens is at most $n^2/d$. Now, note that for each interval $i$ and each of the $n$ elements $v$ in $\BigM$ that have non-zero weights, we have by the Chernoff bound that:
$$\P\left( X_{vi} \leq (1-\delta) \frac{k}{d} \right) \leq \exp\left( - \frac{\delta^2 k}{ 2d} \right)$$
Hence with probability at most $nd \exp\left( - \frac{\delta^2 k}{ 2d} \right)$, the timestamps $t_{vj}$ are such that the probability we raise a flag when we try to choose a timestamp in the same interval as $T_s$ is more than $\delta n$. Taking the union bound of those events, we get:
$$\frac{n^2}{d} + nd \exp\left( - \frac{\delta^2 k}{ 2d} \right) + \delta n$$
Taking $\delta= \epsilon/(3n)$, $d = 3n^2/\epsilon$ and $k$ large enough, we get that the total probability of raising a flag is at most $\epsilon$.
\end{proof}

Taking those lemmas together, we can conclude the proof of Theorem \ref{thm:reduction}. For that, let $\Alg$ be an $\alpha$-competitive algorithm for $\BigM_{[k]}$ and let $Y$ represent the sequence of the MSP sampled by the offline reduction. Let's represent by $\Alg(Y)$ the weight of the elements selected by $\Alg$ and $\Opt$ the weight of the optimal basis. By the fact that the offline reduction produces an instance in random order, we know that $\E[\Alg(?'Y)] \geq \alpha \Opt$.

Our online reduction, will attempt to construct $Y$ on the fly. If we raise the flag, we will stop the algorithm and pretend we had zero reward. If not, we will continue the reduction and collect $\Alg(Y)$ reward. We will denote by $\Flag$ the event that the flag was raised and by $\NFlag$ its complement. Our total reward will be:

$$\E[\Alg(Y) \cdot \one \{\NFlag\}] = \E[\Alg(Y)] - \E[\Alg(Y) \cdot \one \{\Flag\}] \geq \E[\Alg(Y)] - \Opt \cdot \P[\Flag] \geq (\alpha - \epsilon) \Opt.$$

%\color{red}
\subsection{Proof of Theorem~\ref{thm:gen_pw_ind_PS_to_MSP}}
Let $\M$ be the unknown matroid with $|\M| = n$ that is revealed to the algorithm in an online fashion, which admits an order-independent randomized OME into $\BigM$ with $|\BigM| = M$. First, we obtain the following simple reduction that allows us to focus on a special class of the prophet MSP in which each element takes a weight from the set of weights $W$ with $|W|= m = O(n^2)$. In addition, we can restrict the weight distribution such that each element has a distinct weight from the weight class. 

\begin{lemma}\label{lem:reduction_simple_to_unique_weight}
    If there exists an $\alpha$-approximation to the prophet MSP on $\M$ with weight distribution $\D$ supported over the set of weights $W$ with $\rank(\M) = d$, $|\M| = n$, $|W| = O(n^2)$ with $\max_{w\in W} w \leq 1$ and $\E_{\D}[\opt(\M)] \in \left[\frac{1}{16} , 1 \right]$ such that for all $w\in W$, there exists at most one element assigned weight of $w$ with probability one, then there exists an $\left( \frac{\alpha}{256} -  \frac{1}{32d} \right) $-approximation to prophet MSP on matroid $\M$ with any arbitrary weight distribution.
\end{lemma}
The proof of the above lemma simply follows from Sublemma-4.2 from \cite{dughmi2021matroid} that reduces any arbitrary prophet MSP with $O(\log (|\rank(\M)|))$ many weights and $\E_{\D}[\opt(\M)] \in \left[\frac{1}{16} , 1 \right]$. We then add distinct noise of the order of $O\left( \frac{1}{n^2}\right)$ to ensure that the weight of each element is distinct. The full proof of the reduction is delegated to Section~\ref{sec:proof_simple_reduction}.

%Next, we present the table of notations for the rest of the proof, which summarizes the basic notations used throughout the proof. 

% \begin{table}[h!]
% \centering
% \caption{Table of Notations}
% \begin{tabular}{|p{0.12\linewidth}|p{0.88\linewidth}|}
% \hline
% \textbf{Symbol} & \textbf{Description} \\ \hline
% $\M$ & Input Matroid for MSP. \\ \hline
% $\BigM$ & Host matroid: $\M$ admits a randomized order-independent online matroid embedding. \\ \hline
% $n$ & Size of the input matroid, $|\M| = n$. \\ \hline
% $M$ & Size of the host matorid, $|\BigM| = M$. \\ \hline
% $W$ & Set of distinct weights $W = \{w_1, w_2, \dots, w_m\}$. \\ \hline
% $m$ & Number of distinct weights in $W$, $|W| = m$. \\ \hline
% $\BigM_{[m\cdot N]}$ & Matroid created by making $m\cdot N$ many copies of each element in $\BigM$, $N = \Omega\left (2^{M^2} \right)$.  \\ \hline
% \end{tabular}
% \end{table}

For simplicity, we let $W = \{w_1 , \dots, w_m\}$ and consider the prophet MSP on matroid $\M$ and weight distribution $\D$ supported over the set of weights $W$ satisfying the conditions from Lemma~\ref{lem:reduction_simple_to_unique_weight}. 

\paragraph{Extending $\BigM$ with Copies}We let  $\BigM_{[m\cdot N]}$ be a matroid with $m\cdot N$ parallel copies of each element of $\BigM$ with $|\BigM| = M$ and integer $N = \Omega\left( 2^{M^2}\right)$.  We divide the set of $N\cdot m$ copies into $m$ sets of size $N$, each part corresponding to weight class $w_i$. We use $N_i$ to denote the set of labels corresponding to weight $w_i$ for all $i\in [m]$ with $|N_i| = N$. We sometimes denote $N_i$ by $[N] = \{1,2,\dots, N\}$ whenever it is clear from the context which $w_i$ we are referring to. In addition, the $\ell$-th copy of the weight class corresponding to weight $w_i$ of element $\vec v \in \BigM$ is denoted as $\vec v^{i,\ell }$.  

\paragraph{Reduction to ``Almost" Pairwise Independent Prophet MSP} We first define the weight distribution $\D^*$ over $\BigM_{[m\cdot N]}$ in Definition~\ref{defn:almost_pw_ind_dist}, which is ``almost" pairwise independent. Then in Theorem~\ref{thm:tv_distance}, we show an existence of exact pairwise independent weight distribution $\tilde \D$ over $\BigM_{[m\cdot N]}$ which is ``close" to the distribution defined in Definition~\ref{defn:almost_pw_ind_dist} in total-variation distance. This allows us to utilize the fact that any algorithm $\A$ can not distinguish between the almost pairwise independent weight distribution $\D^*$ and $\tilde \D$ with high probability. Finally, we complete the proof of Theorem~\ref{thm:gen_pw_ind_PS_to_MSP}. 

We begin by defining the almost pairwise independent weight distribution over $\BigM_{[m\cdot N]}$.  
\begin{definition}[Almost P.W. Independent Distribution]\label{defn:almost_pw_ind_dist}
Consider the weight distribution $\D^*$ over the elements of $\BigM_{[m\cdot N]}$ defined as follows:
    \begin{enumerate}
    \item Given an order independent OMM $f': \M \rightarrow \BigM$, we sample a random automorphism $f'' \in \operatorname{Aut}(\BigM)$ and obtain an order independent matroid morphism $f = f'' \circ f'$. 
    \item For any $\vec v\in \M$ with $w(\vec v) = w_i$, let $\vec u = f(\vec v)$. We sample $\ell \sim \operatorname{Unif}(N_i)$ and assign the weight of $w(\vec u^{i,\ell}) = w_i$. 
    \item We assign the weight of the rest of the elements of $\BigM_{[m\cdot N]}$ to be zero.
\end{enumerate}
\end{definition}
We first observe that when $N$ is much larger than $M$, the distribution in the above definition is almost pairwise independent. To see this, we first observe that for any $i\in [m], \ell \in N_i$ and $\vec u\in \BigM$, $\vec u^{i,\ell}\in \BigM_{[m\cdot N]}$ can potentially either take a weight of $w_i$ or zero. For simplicity, now consider any two distinct elements $\vec u_{i,\ell}, \vec u'_{j,\ell'} \in \BigM_{[m\cdot N]}$ and weight distribution $\D$ such that there always exists a pair of elements $\vec v, \vec v'' \in \M$ that are assigned weights of $w_i, w_j$, respectively. Since $f'$ is a random automorphism $\operatorname{Aut}(\BigM)$, we have $\Pr[w(\vec u_{i,\ell}) = w_i] = \Pr[f(\vec v) = \vec u]\cdot \frac{1}{N} = \frac{1}{M\cdot N}$. On the other hand, we have $$\Pr[w(\vec u_{i,\ell}) = w_i \land w(\vec u_{j,\ell}) = w_j] = \Pr[f'(\vec v) = \vec v \land f'(\vec v') =\vec u']\cdot \frac{1}{N^2} = \frac{1}{M(M-1)}\cdot \frac{1}{N^2},$$
which is close to the product $\Pr[w(\vec u_{i,\ell}) = w_i] \cdot \Pr[w(\vec u_{j,\ell}) = w_j] = \frac{1}{M^2\cdot N^2}$. However, in general, we can not guarantee that $\D$ will always assign weights $w_i, w_j$ to some pair of elements of $\vec v, \vec v'$ of $\M$. In addition, the above argument also fails if we have $i=j$ or $\vec u = \vec u'$ as in both of these cases, $\Pr[w(\vec u_{i,\ell}) = w_i \land w(\vec u_{j,\ell}) = w_j] = 0$. Intuitively, we can circumvent these pairwise correlation issues by taking $N$ large enough as it makes pairwise correlations small enough. More precisely, $$|\Pr[w(\vec u_{i,\ell}) = w_i \land w(\vec u_{j,\ell}) = w_j] -\Pr[w(\vec u_{i,\ell}) = w_i]\cdot  \Pr[w(\vec u_{j,\ell}) = w_j] | = O\left( \frac{1}{M^2\cdot N^2}\right).$$
Using this observation, we prove the following technical theorem. 
\begin{theorem}\label{thm:tv_distance}
For $N = \Omega\left(2^{M^2} \right)$, let weight distribution $\D^*$ over $\BigM_{[m\cdot N]}$ be defined as in Definition~\ref{defn:almost_pw_ind_dist}, then there exist a pairwise-independent weight distribution $\tilde \D$ over $\BigM_{[m\cdot N]}$ such that $\operatorname{TV}_{\D^*, \tilde \D} \leq O \left ( \frac{m^3}{M}\right)$.
\end{theorem}

The proof of the above theorem is highly technical and constructs an explicit $\tilde \D$ by a sequence of small perturbations to $\D^*$. For the sake of the uninterrupted flow of the presentation, we delegate it to Appendix~\ref{sec:techncial_proof}. We emphasize that the choice of $N = \Omega(2^{M^2})$ is required due to the limitations of the techniques developed to prove Theorem~\ref{thm:tv_distance}. We conjecture that one can prove the similar theorem for $N = \Omega(\operatorname{Poly}(M))$, which we leave as an intriguing technical open problem. 

\paragraph{Proof of Theorem~\ref{thm:gen_pw_ind_PS_to_MSP}}We now complete the proof of the main theorem. 

\begin{proof}[Proof of Theorem~\ref{thm:gen_pw_ind_PS_to_MSP}]
To prove the main theorem, we first prove the following: if there exists an $\alpha$-approximate algorithm to the prophet MSP instance with a pairwise independent weight distribution for the matroid $\BigM_{[m\cdot N]}$, then there exists an $(\alpha - o(1))$-approximate algorithm for the prophet MSP instance for the matroid $\M$ with weight distribution $\D$ supported over the set of weights $W$ with $\rank(\M) = d$, $|\M| = n$ and $|W| = O(n^2)$ with $\max_{w\in W} w \leq 1$ and $\E_{\D}[\opt(\M)] \in \left[\frac{1}{16}, 1 \right]$ such that for all $w\in W$, there exists at most one element assigned weight of $w$ with probability one. Combining this with Lemma~\ref{lem:reduction_simple_to_unique_weight}, we will conclude the proof of the Theorem~\ref{thm:gen_pw_ind_PS_to_MSP}

Given an order-independent matroid morphism $f': \M \rightarrow \BigM$, we sample a random automorphism $f'' \in \operatorname{Aut}(\BigM)$ and obtain an order-independent matroid morphism $\tilde f = f'' \circ f'$. Given $\tilde f:\M \rightarrow \BigM$, we obtain $f: \M \rightarrow \BigM_{[m]}$ that maps each $\vec v\in \M$ to $\vec u^i \in \BigM_{[m]}$ iff $\tilde f(\vec v) = \vec u$ and $w(\vec v) = w_i$.
We then consider $\BigM_{[m\cdot N]}$, i.e. matroid $\BigM_{[m]}$ with $N$ many copies of each element. 

We first consider an offline reduction as follows: for all $\vec v^i \in \BigM_{[m]}$, sample $N$ many independent arrival times from $\operatorname{Unif}[0,1]$ denoting the uniformly random arrival times of elements of $\BigM_{[m\cdot N]}$. Given any $\vec v \in \M$ and pair $(\vec v, w(\vec v) = w_i)$, with $\vec u^i = f(\vec v)$, let $w(\vec u^{i,\ell}) = w_i $ uniformly random from $\ell \in [N]$. 

Since $f$ does not require any information about the arrival order, the above-described offline reduction is a valid instance of matroid prophet secretary over $\BigM_{[m\cdot N]}$. In addition, the induced weight assignment over $\BigM$ due to offline reduction is identical to the distribution $\D^*$ defined in Definition~\ref{defn:almost_pw_ind_dist}. 

Given the uniformly random arrival of elements of $\M$, we construct an ``almost online implementation" of the above offline reduction similar to the proof of %\pdc{reference missing} 
Theorem~\ref{thm:reduction}. We let $N$ be large enough ($\Omega(n^2/\eps)$) such that the probability of ``almost online implementation" raising a flag is at most $\varepsilon$. In fact, for the proof of Theorem~\ref{thm:tv_distance}, we let $N = \Omega(2^{M^2})$ which satisfies the required condition.

We let $\A$ be an $\alpha$-approximate algorithm for the pairwise-independent prophet MSP on matroid $\BigM_{[m\cdot N]}$. Since there exists a pairwise independent weight distribution $\tilde \D$ over $\BigM$ within the total variation distance of $O \left( \frac{m^3}{M}\right)$, the algorithm $\A$ can not distinguish the weight distribution $\D^*$ from $\tilde \D$ with probability at least $1 - O \left( \frac{m^2}{M}\right)$. 

Let $\mathcal E$ be the event when the algorithm $\A$ can not distinguish between $\D^*$ and $\tilde D$. We note that when event $\mathcal E$ does not hold, the offline optimal can be bounded by $$\E [\opt(\BigM_{m\cdot N})\mid \mathcal E^c]\leq \rank(\BigM_{[m\cdot N]})\cdot \max_{w_i \in W} {w_i} \leq n\cdot 1 = n,$$
where the second inequality follows because $\rank(\BigM_{[m\cdot N]})= n$ and $w_i \leq 1$ for all $i\in [m\cdot N]$.

%\acc{This paragraph sounds weird. I'm not sure what the first sentence means.}
%In addition, we note that when $f$ is a matroid morphism. We define the online reduction as follows: implement almost online reduction as described above. Select arrived element $\vec v \in \M$ if and only if $\A$ selects a copy with the non-zero weight of $f(\vec v) \in \BigM_{[m\cdot N]}$. 

Now, let $S$ be the selected set of elements of $\BigM$ by $\A$ w.r.t. weight distribution $\D^*$. We can bound,
$$\begin{aligned}
    \E [w(S)] &= \E [w(S) \mid \mathcal E ]\cdot \Pr[\mathcal E] + \E [w(S) \mid \mathcal E^c]  \cdot \Pr[\mathcal E^c]\\
    &\leq \E [w(S) \mid \mathcal E ] + \E[\opt(\BigM_{[M\cdot m]})\mid \mathcal E^c] \cdot \frac{m^3}{M}\\
    &\leq  \E [w(S) \mid \mathcal E ] +  \frac{m^3\cdot n}{M}.
\end{aligned}$$
Above, the first inequality holds because $\Pr[\mathcal E^c] \leq \frac{m^2}{M}$ and the second inequality holds because $\E [\opt(\BigM_{[m\cdot N]})\mid \mathcal E^c]\leq n$. Due to our reduction, the performance of the algorithm on the original Prophet MSP instance $\F'$ is lower bounded by $\E[w(S)\mid \mathcal E]$, next we lower bound the expectation $\E[w(S)\mid \mathcal E]$,
$$\begin{aligned}
    \E [w(S) \mid \mathcal E ] &\geq \E[w(S)] - \frac{m^3\cdot n}{M}\\
    &\geq \alpha \cdot \E_{\tilde \D}[\opt(\BigM_{[m\cdot N]})]- o(1) \cdot \opt(\M) \\
    &\geq \opt(\M) \cdot \left( \alpha - \frac{\alpha\cdot m^3}{M} \right) - o(1)\cdot \opt(\M) \geq  \left( \alpha - o(1) \right)\cdot \opt(\M) .
\end{aligned}$$
Above, the second inequality holds because $\A$ is an $\alpha$ approximate algorithm for Prophet MSP with pairwise independent prior and $\frac{m^3\cdot n}{M} = O(m^4 /2^{m}) =  o(1)$. The third inequality holds because $\operatorname{TV}_{\D^*, \tilde \D} \leq O(m^3/M)$.   
Finally, since the performance of the reduction on the original matroid secretary is $$\E[w(S) \mid \mathcal E] \cdot \Pr[\text{Reduction does not Flag}] \geq (1-\eps) \cdot \left( \alpha - o(1) \right) \cdot \opt(\M).$$ This concludes the proof. 
\end{proof}
\color{black}

\section{OMEs Beyond Binary Matroids}

%\pdc{I am proposing to combine Sec 6,7,8 into ``Existence of OMEs Beyond Binary Matroids''?} 
%\pdc{we will need to add some introductory text here}\pdc{how about the following:} 
We next explore OMEs beyond binary matroids. In Section~\ref{sec:graphic_regular}, we show that there cannot be an OME that embeds graphical matroids into graphical matroids or, more generally, regular matroids. Then, in Section~\ref{sec:laminar}, we give an OME that embeds laminar matroids into laminar matroids. Finally, in Section~\ref{sec:bounded_rank}, we show that there is no universal host matroid, that allows embedding of all matroids on $n$ elements of a given rank.

\subsection{Graphic and Regular Matroids}
\label{sec:graphic_regular}

For graphic matroids, there exists an elegant $2e$-approximation algorithm for the MSP in the known-matroid case by Korula-P\'{a}l \cite{korula2009algorithms}. Their algorithm assumes that when an edge arrives, the algorithm learns the pair of vertices it connects as well as the weight. In other words, the algorithm processes in each step $(u,v), w_{uv}$ where $u$ and $v$ are vertices. Even though the full graph is not known in advance, there is enough information about the graph structure to randomly decompose the problem into instances of the single-item secretary problem.

In the online-revealed matroid case, the algorithm has only access to an oracle that tells which subsets of previously arrived edges are independent (contain no cycles). For example, if $3$ edges arrive and the algorithm knows that they are all independent, it is impossible for the algorithm to know if they form a path, a star or if they share not endpoints (recall Figure \ref{fig:graphic_matroid}). Korula-P\'{a}l heavily relies on having vertex information and doesn't easily extend to this model. In fact, we are not aware of any $O(1)$-approximation algorithm for the MSP for graphic matroids in the online-revealed-matroid setting.

In the remainder of this section we investigate whether we can obtain such an algorithm using a OME. A natural idea is to try to construct an embedding where the host matroid $\BigM$ is itself a graphic matroid. If one could do that, it would be possible to combine the reduction in Section \ref{sec:reduction} with the Korula-P\'{a}l algorithm to obtain an algorihtm for the graphic MSP in the online-revealed-matroid setting. Unfortunately, we show below that no such embedding exists:

\begin{theorem}\label{thm:no_big_graphic}
If $\C$ is the class of graphic matroids, there is no online matroid embedding into a host matroid $\BigM$ where $\BigM$ is also graphic.
\end{theorem}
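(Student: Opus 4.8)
The plan is to argue that if an OME for graphic matroids into a graphic host $\BigM$ existed, then $\BigM$ would have to contain an isomorphic copy of the complete binary matroid $\F_2^n$ (for $n$ large), contradicting the fact that $\F_2^n$ contains the Fano plane, which is not graphic (indeed not even regular). So the strategy mirrors the impossibility sketch outlined in the introduction for regular host matroids, and Theorem~\ref{thm:no_big_graphic} will be a special case of the statement that graphic matroids do not admit an OME into any regular host.

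First I would recall that a graphic matroid is binary, and conversely recall that the class of graphic matroids is very rich: in particular, every binary matroid is a restriction of $\F_2^n$, but more to the point, I want to exhibit a \emph{single} online process (a sequence of prefix-restrictions, all of which are graphic) whose union forces the host to realize $\F_2^n$. The key observation is that $\F_2^n$ itself is \emph{not} graphic for $n \geq 3$ in general, so I cannot simply feed $\F_2^n$ to the embedding. Instead I would exploit the online/adversarial nature of the revelation: at each step the algorithm commits to an image $f(u) \in \BigM$ based only on the restriction seen so far, and all prefixes I present will be graphic (e.g. forests, or cycles, which are always graphic), yet the commitments made along the way will pin down the images of a basis $b_1, \dots, b_n$ and then, by presenting circuits $\{b_i : i \in S\} \cup \{v_S\}$ one at a time, force $f(v_S)$ to be ``the unique element completing the circuit'' in $\BigM$. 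Since $\BigM$ is binary (being graphic $\Rightarrow$ regular $\Rightarrow$ binary), a circuit determines its linear dependency up to the single nonzero scalar $1$, exactly as in Lemma~\ref{lem:vector_to_matroid_aut} and Theorem~\ref{thm:omm_binary}; hence $f(v_S) = \sum_{i \in S} f(b_i)$ in any $\F_2$-representation of $\BigM$. Consequently the images $\{f(b_i)\} \cup \{f(v_S) : S \subseteq [n]\}$ form a submatroid of $\BigM$ isomorphic to $\F_2^n$ — provided I can guarantee that every prefix of the presentation order is graphic and that the images $f(b_i)$ are independent and the $f(v_S)$ are pairwise distinct.

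The main obstacle, and the step I would spend the most care on, is ensuring that all the intermediate matroids handed to the embedding are genuinely graphic while the limit object is (a copy of) $\F_2^n$, which is not. The resolution is that $\F_2^n$ need not be presented all at once: I would present the $2^n - 1$ nonzero vectors $v_S$ in an order where each prefix is graphic. A clean way is to take the graph $K_{n+1}$ on vertices $\{0, 1, \dots, n\}$, whose cycle matroid has rank $n$; set $b_i$ = edge $\{0, i\}$, and for each $S$ let $v_S$ be a path/cycle-completing edge — but $K_{n+1}$ only realizes the ``graphic part'' of $\F_2^n$, not all of it. So instead I would argue more abstractly: feed the elements one at a time, maintaining the invariant that the revealed restriction is graphic, and only add an element $v_S$ once the already-revealed set plus $v_S$ is still graphic — if at some finite stage I get stuck, I note that I have already forced enough images to build a copy of $\F_2^m$ for $m$ as large as I please by choosing $n$ large, because the ``graphic core'' $K_{n+1}$ alone has $\binom{n+1}{2}$ elements and its circuit structure already forces, via binarity of $\BigM$, images satisfying all $\F_2$-linear relations among $\{e_i - e_j\}$, i.e. a copy of the graphic matroid $M(K_{n+1})$ — and this is enough: $M(K_{n+1})$ for large $n$ contains (as a minor, hence the host being regular is contradicted via Tutte's excluded-minor characterization) ...

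Actually, the cleanest route, which I would adopt, is: reduce to the statement ``graphic matroids do not admit an OME into a regular host,'' then reuse the argument from Section~\ref{sec:bounded_rank}/the introduction verbatim: an OME for graphic matroids restricted to the sub-class of \emph{binary} circuits would, by the uniqueness-of-circuit-completion property of binary matroids, force $\BigM$ to contain $\F_2^n$; but graphic matroids do \emph{not} form a superclass of binary matroids, so I must instead observe that an OME for graphic matroids, applied along a presentation that reveals $M(K_{m})$ for all $m$, forces $\BigM$ to contain every graphic matroid as a submatroid in a ``binary-coherent'' way, and in particular to contain $M(K_m)$ for all $m$; and since $\{M(K_m)\}$ has unbounded branch-width / contains all graphic matroids as minors, while a regular (hence binary) $\BigM$ that is also graphic cannot contain $\F_2^3 = $ the Fano-free... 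I would make this precise by the following final step: the host $\BigM$, being graphic, is the cycle matroid of some graph $G$; the OME embeds $M(K_m)$ into $M(G)$ for every $m$, i.e. $G$ has $K_m$-subdivisions' cycle structure for all $m$, which is impossible because $M(G)$ has a fixed finite rank or, if rank is unbounded that's fine, but the embeddings are \emph{nested and online}, forcing a single infinite ``limit'' graphic matroid containing $M(K_m)$ for all $m$ coherently — and the contradiction is extracted exactly as in the regular case: the forced images satisfy all $\F_2$-relations, so they span a copy of $\F_2^{m-1}$ inside $M(G)$, but $\F_2^{m-1}$ for $m \geq 4$ contains the Fano plane, which is not graphic. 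That last contradiction — locating the Fano plane inside the forced image and invoking that $\F_2$-relations are rigid because $\BigM$ is binary — is the crux, and everything before it is bookkeeping to guarantee the intermediate restrictions stay graphic, which I handle by presenting only forests and single cycles until the images of a full basis plus all their circuit-completions are committed.
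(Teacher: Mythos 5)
Your overall target --- force the host to contain a copy of $\F_2^n$ and then invoke the Fano plane --- is exactly the paper's strategy for the \emph{stronger} Theorem~\ref{thm:no_big_regular} (regular hosts), of which this theorem is a corollary. But your proposal never supplies a correct mechanism for forcing that copy, and the two mechanisms you float both fail. First, you suggest presenting the $2^n-1$ nonzero vectors of $\F_2^n$ "in an order where each prefix is graphic." No such order exists: the final matroid in the sequence would be (the restriction of) $\F_2^n$ itself, which is not graphic for $n\geq 3$, so at some prefix you would be handing the embedding a non-graphic input, outside the class $\C$. Second, you fall back on presenting $M(K_{m})$, but the images its circuits force in a binary host satisfy only the $\F_2$-relations already present in $M(K_m)$'s circuits; they form a copy of $M(K_m)$ in the host, not of $\F_2^{m-1}$, and $M(K_m)$ is graphic --- the host $M(K_m)$ itself accommodates it --- so no contradiction follows.

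The missing idea is \emph{branching}, i.e., exploiting that the online embedding must commit to the images of a common prefix before knowing which of many possible extensions will arrive. The paper's Lemma preceding Theorem~\ref{thm:no_big_regular} does this: for each $S\subseteq[n]$ it defines a graphic matroid $\M_S$ on $[n+1]$ whose unique circuit is $S\cup\{n+1\}$ (a single cycle plus bridges), all sharing the free matroid on $[n]$ as a prefix. The prefix forces images $b_1,\dots,b_n$; then, for each $S$ separately, extending to $\M_S$ forces an element $b_S=\sum_{i\in S}b_i$ in any $\F_2$-representation of the (binary) host. No single input is $\F_2^n$, but the union of the forced elements over all $2^n$ branches is a copy of $\F_2^n$ sitting inside the one fixed host $\BigM$, and the Fano-plane contradiction applies. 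You should also be aware that the paper proves the graphic-host case directly and far more cheaply: two graphic matroids on four edges (a $K_4$ and a $4$-cycle) share a three-edge independent prefix; a graphic host forces the three image edges to share a vertex to realize the $K_4$ extension, which makes the $4$-cycle extension impossible. As written, your argument has a genuine gap at its crux.
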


\begin{proof}
Assume that such embedding exists and let $G = (V,E)$ be the graph representing $\BigM$. Now, consider two graphic matroids $\M_1$ and $\M_2$ represented respectively by the graphs in the left and right of Figure \ref{fig:graphic_matroid_2}. Observe that when restricted to $\{a,b,c\}$, the matroids are identical since every non-empty subset of elements is independent.
When an OMM observes the restriction to those three elements, it has no way to know in which matroid we are in, so it needs to map those three edges to the same edges of the graph $G$ representing $\BigM$. Let's denote those edges by $a',b',c'$. Let also $d',e',f'$ be the edges in $G$ that the edges of the left matroid are mapped to and let $g'$ be the edge that the $g$ edge in the right matroid is mapped to. Note that $\{a',b',d'\}$, $\{b',c',e'\}$ and $\{a', c', f'\}$ must form cycles in $G$. The only way that this is possible while keeping $\{a',b',c'\}$ independent is if edges $a',b',c'$ all share an endpoint. However, edges $a', b', c', g'$ must form a cycle in $G$ as well, which is not possible if the first three edges share an endpoint.
\end{proof}

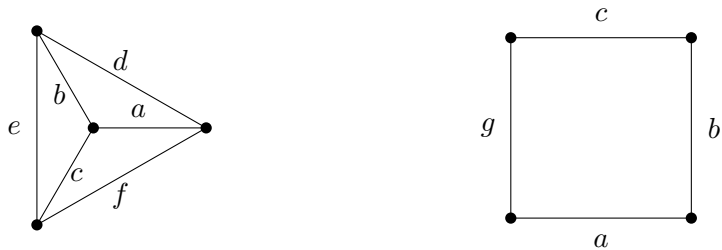
\begin{figure}[h]
\centering
\begin{tikzpicture}[scale=1.5, inner sep=1.5pt] 
 \node[circle,fill] at (0,0) {};
 \node[circle,fill] at (1,0) {};
 \node[circle,fill] at (-.5,.86) {};
 \node[circle,fill] at (-.5,-.86) {};

 \draw (0,0)--(1,0);
 \draw (0,0)--(-.5,.86) ;
 \draw (0,0)--(-.5,-.86);
 
 \draw (1,0)--(-.5,.86)--(-.5,-.86)--(1,0);
 \node at (.4,.15) {$a$};
 \node at (-.3,.3) {$b$};
 \node at (-.15,-.42) {$c$};
 \node at (.24,.6) {$d$};
 \node at (.24,-.6) {$f$};
 \node at (-.7,0) {$e$};

 \begin{scope}[xshift=4cm]

 \node[circle,fill] at (-.3,.8) {};
 \node[circle,fill] at (-.3,-.8) {};
 \node[circle,fill] at (1.3,.8) {};
 \node[circle,fill] at (1.3,-.8) {};
 \draw (-.3,.8)--(1.3,.8)--(1.3,-.8)--(-.3,-.8)--(-.3,.8);
 \node at (.5,-1) {$a$};
 \node at (1.5,0) {$b$};
 \node at (.5,1) {$c$};
 \node at (-.5,0) {$g$};
 % \node at (-1,.-.8) {$b$};
 % \node at (1,.3) {$c$};
 \end{scope}
 
\end{tikzpicture}
\caption{Two graphic matroids whose restriction to $\{a,b,c\}$ coincide.}
\label{fig:graphic_matroid_2}
\end{figure}

Graphic matroids are a special case of regular matroids, for which there exists an $O(1)$-competitive algorihtm by Dinitz-Kortsarz \cite{dinitz2014matroid}. It is then tempting to construct an embedding from graphic into a host matroid $\BigM$ that is regular. However, that is again not possible:

\begin{lemma}
If a host matroid $\BigM$ is regular and admits an online matroid embedding of all rank $n$ graphic matroids, then $\BigM$ must also admit an online matroid embedding of all rank $n$ binary matroids.
\end{lemma}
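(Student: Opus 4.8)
The plan is to leverage the fact that an online embedding of graphic matroids must, in particular, be able to handle matroids that "look graphic" for a long time but whose full structure forces the host to contain a large complete binary matroid. The key observation is that every binary matroid on $n$ elements arises as a restriction of $\F_2^n$, and that the complete graphic matroid of rank $n+1$ (the graphic matroid of the complete graph $K_{n+1}$) sits inside $\F_2^n$ via the standard representation $e \mapsto e_u + e_v$. So I would try to show that the images under the OME of the edges of $K_{n+1}$ must themselves span an isomorphic copy of $\F_2^n$ inside $\BigM$, and then, because every binary matroid of rank $\le n$ is a restriction of $\F_2^n$ and graphic matroids can be revealed online, pull those restrictions back through the embedding.

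Concretely, first I would observe that since $\C$ contains all rank-$n$ graphic matroids and $\BigM$ admits an OME for $\C$, in particular $\BigM$ contains an embedded copy of the graphic matroid of $K_{n+1}$, which is a binary matroid of rank $n$ spanning a copy of $\F_2^n$. The next step is the heart of the argument: I want to show that the restriction of $\BigM$ to (the span of) the image of the edges of $K_{n+1}$ is isomorphic to $\F_2^n$ itself, not merely contains it. The point is that the graphic matroid of $K_{n+1}$ has exactly $\binom{n+1}{2}$ elements and rank $n$, and its circuits are exactly the cycles of $K_{n+1}$; but one can check that $\F_2^n$ is the unique binary matroid of rank $n$ in which the graphic matroid of $K_{n+1}$ embeds as a spanning restriction with this circuit structure — equivalently, the elements of $K_{n+1}$ together with their $\F_2$-sums generate all of $\F_2^n$, and any matroid containing this configuration as a spanning restriction and in which these are the only relations must be $\F_2^n$. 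So $\BigM$ restricted to that span is $\F_2^n$.

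Then I would finish as follows: given an arbitrary rank-$n$ binary matroid $\N$ revealed online according to some ordering $\pi$, I need to produce its embedding into $\BigM$ online. The idea is to run the graphic-matroid OME on a "simulated" graphic matroid — but $\N$ need not be graphic, so this doesn't directly work; instead, I would use that $\N$ is a restriction of $\F_2^n \cong \BigM|_{\text{span}(f(K_{n+1}))}$ and construct the embedding by first committing, at the start, to run the OME on the edges of $K_{n+1}$ in an order consistent with what $\N$ will reveal, using the fact (Theorem~\ref{thm:omm_binary} / Lemma~\ref{lem:single_orbit}) that any two morphisms of a binary matroid into $\F_2^n$ differ by an automorphism of $\F_2^n$, which by Lemma~\ref{lem:vector_to_matroid_aut} is realized inside $\BigM$. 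More carefully: I would present the elements of $\N$ to the graphic OME while interpreting each arriving element as an edge of $K_{n+1}$, where the choice of which edge is made online using the OMM for binary matroids from Theorem~\ref{thm:omm_binary} (which tells us, at each step, a valid image in $\F_2^n$, hence a valid target edge in $K_{n+1}$ since $K_{n+1}$'s graphic matroid spans $\F_2^n$), and then map through the graphic OME into $\BigM$. Consistency of restrictions is inherited from the graphic OME and the prefix-consistency of the binary OMM.

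The main obstacle I expect is the middle step — showing the span of the image of $K_{n+1}$'s edges is \emph{exactly} $\F_2^n$ and not some larger or structurally different binary matroid. A graphic OME need not map $K_{n+1}$'s edges to a spanning set, and even if it does, one must rule out the images satisfying extra dependencies (which would collapse the rank) or fewer (impossible since morphisms preserve rank). The resolution is that matroid morphisms preserve rank exactly, so no dependencies are lost and none are gained beyond those forced by the circuits of $K_{n+1}$; combined with the fact that the cycle space of $K_{n+1}$ over $\F_2$ spans a subspace of dimension $\binom{n+1}{2} - n$, the quotient is forced to be $\F_2^n$. A secondary subtlety is that $\N$'s online arrival order must be matched against a choice of $K_{n+1}$-edges made online without knowing $\N$'s future — this is exactly what the binary OMM of Theorem~\ref{thm:omm_binary} provides, so once the structural claim is in place the rest is bookkeeping.
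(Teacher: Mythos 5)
Your central structural claim --- that the restriction of $\BigM$ to $\span_{\BigM}(f(E(K_{n+1})))$ must be isomorphic to $\F_2^n$ --- is false, and the attempted justification reveals the gap: at that step you only use the existence of a single (effectively offline) embedding of $M(K_{n+1})$ into $\BigM$, not the online property. An embedding of $M(K_{n+1})$ places $\binom{n+1}{2}$ elements into $\BigM$; the span of their image is a subset of $\BigM$'s \emph{ground set} and may consist of exactly those $\binom{n+1}{2}$ elements. The fact that the vectors $e_u+e_v$ linearly generate a copy of $\F_2^n$ does not conjure up $2^n-1$ ground-set elements of $\BigM$ realizing every vector of that span. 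Concretely, the direct sum of all rank-$\le n$ graphic matroids on at most $\binom{n+1}{2}$ elements is itself graphic (hence regular), admits an offline embedding of every rank-$n$ graphic matroid, and contains an embedded $M(K_{n+1})$, yet contains no copy of $\F_2^n$ for $n\ge 3$ (since $\F_2^3$ already contains the Fano plane, which is not regular). So the offline version of the lemma is false, and any proof whose structural step does not essentially use onlineness cannot succeed. Your "resolution" about dependencies being neither gained nor lost addresses the wrong worry: the issue is not extra or missing dependencies among the image elements, but the absence of the extra elements altogether.

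The paper's proof invokes onlineness precisely at this point. It takes the free matroid on $[n]$ as a common prefix, so the online embedding must commit to images $b_1,\dots,b_n\in\BigM$ before seeing anything else, and then considers the $2^n$ mutually incompatible one-element graphic extensions $\M_S$ (for $S\subseteq[n]$) in which the new element $n+1$ forms a circuit with exactly the edges in $S$. Since the committed prefix must be extendable to each $\M_S$, the host must contain, for every $S$, an element $b_S$ with $b_S=\sum_{i\in S}b_i$ in an $\F_2$-representation of $\BigM$; these $2^n$ elements form the copy of $\F_2^n$. Your proposal never exploits that one committed prefix must serve exponentially many futures, which is the only leverage the hypothesis provides. (A secondary, smaller problem: in your final step you reinterpret elements of an arbitrary binary matroid $\N$ as edges of $K_{n+1}$, but the binary OMM of Theorem~\ref{thm:omm_binary} maps into all of $\F_2^n$, and only $\binom{n+1}{2}$ of its $2^n-1$ nonzero vectors correspond to edges, so this simulation is not well-defined even granting the structural claim.)
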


\begin{proof}
We will show that $\BigM$ contains an isomorphic copy of $\F_2^n$. We first define for each $S \subseteq [n]$ a graphic $\M_S$ with ground set $[n+1]$ represented by the graph where the only cycle is formed by the edges with labels in $S \cup \{n+1\}$. Equivalently, the rank function is given by 
$$\rank_{\M_S}(T) = \left\{
\begin{aligned}
  & \abs{T}-1 & & \text{for } T \supseteq S \cup \{n+1\} \\
  & \abs{T} & & \text{otherwise }
\end{aligned} \right.$$
Let also $\M'$ be the matroid on $[n]$ such that $\rank_{\M'}(T) = \abs{T}$ for all subsets $T$. Assuming all ground serts are ordered according to the labels, note that  $\M'$ is a prefix of all matroids $\M_S$. Now, let $b_1, \hdots, b_n \in \BigM$ be the image of the ground set of $\M'$ by the online embedding. Since $\BigM$ is regular, it admits an $\F_2$-representation, so we can think of $b_1, \hdots, b_n$ as linearly independent vectors in $\F_2^n$.

Since this is an online embedding, it can be extended to a morphism $\M_S \rightarrow \BigM$ for each $S \subseteq [n]$. So, for each $S$, let $b_S$ be the element in $\BigM$ $[n+1]$ maps to. If we view $b_S$ as an $\F_2$-vector, we must have $b_S = \sum_{i \in S} b_i$ in $\F_2^n$. As a consequence, the vectors $\{b_S; S \subseteq [n]\}$ form an isomorphic copy of $\F_2^n$.
\end{proof}

We can now derive the following theorem as as corollary:

\begin{theorem}\label{thm:no_big_regular}
If $\C$ is the class of graphic matroids, there is no online matroid embedding into a host matroid $\BigM$ where $\BigM$ is regular.
\end{theorem}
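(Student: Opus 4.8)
The plan is to derive Theorem~\ref{thm:no_big_regular} directly as a corollary of the preceding lemma, exactly as the phrase ``as a corollary'' announces. The lemma already tells us that any host matroid $\BigM$ admitting an online matroid embedding of all rank-$n$ graphic matroids must contain an isomorphic copy of the complete binary matroid $\F_2^n$. So the only remaining work is to observe that $\F_2^n$ (for $n$ large enough, say $n \geq 7$, so that it contains the Fano plane $F_7$ as a restriction) is \emph{not} a regular matroid, and that containing a non-regular restriction obstructs $\BigM$ from being regular.

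Concretely, first I would recall the standard fact (Tutte \cite{tutte1958homotopy}, as the introduction already cites) that the Fano plane is representable over $\F_2$ but over no field of characteristic $\neq 2$; in particular it is not representable over $\Q$, so it is not regular. Since $\F_2^3$ is a restriction of $\F_2^n$ for every $n \geq 3$ and $\F_2^3$ is exactly the Fano plane, $\F_2^n$ contains a non-regular restriction for all $n \geq 3$, hence $\F_2^n$ is itself not regular (the class of regular matroids is closed under restriction). Second, I would invoke the same closure property one more level up: if $\BigM$ were regular, then every restriction of $\BigM$ would be regular, but by the lemma $\BigM$ has a restriction isomorphic to $\F_2^n$, which is not regular --- a contradiction. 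Therefore no regular $\BigM$ can admit such an online matroid embedding, which is precisely the statement.

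Two small points to get right in the write-up. (i) The lemma as stated produces an \emph{isomorphic copy} of $\F_2^n$ inside $\BigM$ but one should make sure this copy is in fact a restriction of $\BigM$ (same ground-set elements, same rank function on them), so that closure of regular matroids under restriction applies; this is immediate from the construction in the lemma's proof, where the copy is literally the set $\{b_S : S \subseteq [n]\}$ of elements of $\BigM$. (ii) One should state for which $n$ the theorem is being claimed --- it suffices to take $n \geq 3$, since already $\F_2^3$ is non-regular. I expect there to be essentially no obstacle here: the entire content has been front-loaded into the lemma and into Tutte's classical result, and Theorem~\ref{thm:no_big_regular} is a two-line deduction. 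If anything, the only thing to double-check is that the paper is content to cite the non-regularity of the Fano plane rather than prove it, which it already does in the introduction.

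\begin{proof}
By the previous lemma, if a regular matroid $\BigM$ admitted an online matroid embedding of all rank-$n$ graphic matroids (for any $n \geq 3$), then $\BigM$ would contain a restriction isomorphic to the complete binary matroid $\F_2^n$. Since the class of regular matroids is closed under restriction, $\F_2^n$ would then be regular. But $\F_2^n$ contains (as a restriction, obtained by selecting the seven nonzero vectors of a three-dimensional coordinate subspace) an isomorphic copy of the Fano plane $F_7$, which is representable over $\F_2$ but not over any field of characteristic different from $2$, and in particular is not representable over $\Q$; hence $F_7$ is not regular, and therefore neither is $\F_2^n$. This contradiction shows that no regular host matroid $\BigM$ admits an online matroid embedding of the class of graphic matroids.
\end{proof}
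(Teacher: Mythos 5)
Your proof is correct and follows exactly the paper's route: invoke the preceding lemma to find a copy of $\F_2^n$ inside $\BigM$, then note that $\F_2^n$ contains the Fano plane, which is not regular by Tutte's theorem, contradicting closure of regular matroids under restriction. The extra care you take about the copy being a genuine restriction and about requiring $n \geq 3$ is a reasonable tightening but does not change the argument.
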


\begin{proof}
By the previous lemma, if such embedding exists then $\BigM$ must contain an isomorphic copy of $\F_2^n$. However, $\F_2^n$ contains an isomorphic copy of the Fano plane which is not representable over $\F_3$ \cite{tutte1958homotopy}. Hence $\BigM$ can't be regular.
\end{proof}

The reader will notice that Theorem \ref{thm:no_big_graphic} can be derived as a trivial corollary of Theorem \ref{thm:no_big_regular} since graphic matroids are regular. Nevertheless, we find the more direct proof of Theorem \ref{thm:no_big_graphic} enlightening and opted to keep it.

\subsection{Laminar Matroids}
\label{sec:laminar}

Another important class of matroids that admits an online embedding is the class of laminar matroids.
We begin this section by recalling the definition of a laminar matroid and proving a useful structural
lemma about them. We then present an OMM for the class of laminar matroids with at most $n$ elements, 
using a host matroid $\BigM$ which is a complete linear matroid of rank $n$ over any field with 
sufficiently many elements. 

% \begin{definition}
% A family of sets, $\mathcal{A}$, is called {\em laminar} if it satisfies the 
% property that for any $A,A' \in \mathcal{A}$, at least one of the sets 
% $A \cap A', \, A \setminus A', \,  A' \setminus A$ is empty. 
% A {\em laminar matroid} $\M$ is one for which there exists a laminar 
% family of sets $\mathcal{A}$ and a function $c : \mathcal{A} \to \N$, such that 
% the independent sets of $\M$ are precisely those sets $I \subseteq M$ such 
% that $|I \cap A| \leq c(A)$ for all $A \in \mathcal{A}$.
% \end{definition}

\begin{lemma} \label{lem:laminar-monomorphism}
    If $f : \M \to \N$ is a matroid monomorphism 
    and $\N$ is laminar, then $\M$ is laminar as well.
\end{lemma}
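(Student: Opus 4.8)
The plan is to reduce the claim to the fact that the class of laminar matroids is closed under restriction. First I would note that, $f$ being injective, the corestriction $\bar f : \M \to \N|_{f(\M)}$ (i.e.\ $f$ with its codomain cut down to its image) is a matroid isomorphism: it is a bijection of ground sets, and the morphism identity $\rank_\N(f(S)) = \rank_\M(S)$ is exactly the statement that it preserves rank. Here $\N|_{f(\M)}$ denotes the restriction of $\N$ to the subset $f(\M)$ of its ground set, which is a matroid because the three rank axioms are statements about subsets and hence are inherited by every subset of the ground set. Since being laminar is visibly preserved under relabeling of the ground set, it therefore suffices to prove that every restriction of a laminar matroid is laminar.

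So fix a laminar matroid $\N$ with ground set $N$, witnessed by a laminar family $\mathcal{A}$ of subsets of $N$ and a capacity function $c : \mathcal{A} \to \Z_+$, and let $M' \subseteq N$. As a witness for $\N|_{M'}$ I would take the trace family
\[
\mathcal{A}' \;=\; \{\, A \cap M' \;:\; A \in \mathcal{A},\ A \cap M' \neq \emptyset \,\},
\qquad
c'(B) \;=\; \min\{\, c(A) \;:\; A \in \mathcal{A},\ A \cap M' = B \,\}.
\]
Taking the minimum in the definition of $c'$ is the one point that needs care: the trace operation $A \mapsto A \cap M'$ need not be injective on $\mathcal{A}$, so several members of $\mathcal{A}$ may collapse to the same set $B$, and their constraints have to be combined into the single tightest one.

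It then remains to verify two things. First, $\mathcal{A}'$ is laminar: given $A_1, A_2 \in \mathcal{A}$, laminarity of $\mathcal{A}$ makes one of $A_1 \cap A_2$, $A_1 \setminus A_2$, $A_2 \setminus A_1$ empty, and intersecting all three with $M'$ shows that the corresponding one of $(A_1 \cap M') \cap (A_2 \cap M')$, $(A_1 \cap M') \setminus (A_2 \cap M')$, $(A_2 \cap M') \setminus (A_1 \cap M')$ is empty as well. Second, $\mathcal{A}'$ and $c'$ define exactly $\N|_{M'}$: a set $I \subseteq M'$ is independent in $\N|_{M'}$ iff it is independent in $\N$ iff $|I \cap A| \le c(A)$ for every $A \in \mathcal{A}$; and since $I \subseteq M'$ we have $I \cap A = I \cap (A \cap M')$, so this system is equivalent to $|I \cap B| \le c'(B)$ for every $B \in \mathcal{A}'$ (empty traces impose no constraint, and among the preimages of a given nonempty $B$ the binding inequality is the one with capacity $c'(B)$). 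This exhibits $\N|_{M'}$ as laminar, and transporting back along the isomorphism $\bar f$ finishes the proof. I do not expect any genuine obstacle here: the argument is essentially routine once the minimum in the definition of $c'$ is handled correctly, and a small example with nested sets of different capacities confirms that a naive transport of $c$ would fail.
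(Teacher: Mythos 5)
Your proof is correct and is essentially the paper's argument: the paper pulls back the laminar family via $f^{-1}(A)$ and assigns each preimage the minimum capacity over the sets of $\mathcal{A}$ that collapse to it, which is exactly your trace family $\{A \cap f(M)\}$ with min-capacities transported along the isomorphism onto the image (since $f^{-1}(A) = \bar f^{-1}(A \cap f(M))$). You simply spell out the verifications (laminarity of the trace family, equivalence of the independence constraints) that the paper asserts without detail, including the one genuinely delicate point — taking the minimum when several sets share a trace — which the paper also handles the same way.
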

\begin{proof}
    Denote the ground sets of $\M, \N$ by $M,N$, respectively.
    If $\mathcal{A}$ is a laminar family of subsets of $N$ 
    such that $\N$ is a laminar matroid with respect to the
    function $c : \mathcal{A} \to \mathbb{Z}_+$, then the family
    of sets $f^{-1}(\mathcal{A})$ consisting of the sets
    $f^{-1}(A)$ for each $A \in \mathcal{A}$ is a laminar
    family, and $\M$ is a laminar matroid with respect to
    the function $\tilde{c} : f^{-1}(\mathcal{A}) \to \mathbb{Z}_+$
    defined by 
    \[
        \tilde{c}(B) = \min \{ c(A); A \in \mathcal{A} \mbox{ and } B = f^{-1}(A)\}. \qedhere
    \]
\end{proof}

Let us also recall the definitions of the span and flat in a matroid.
Given a subset $S$ of the ground set of a matroid $\M$, we define the $\span_\M(S) = \{ x \in \M; \rank_\M(S \cup \{x\}) = \rank_\M(S)\}$. From the properties of the rank function, this is the equivalent to the maximal set containing $S$ that has the same rank as $S$. We say that a subset $S$ is a flat if $S = \span(S)$. For the complete $K$-representable matroid $K^n$, the $\span$ coincides with the notion of the linear span of vector fields and the flats are linear subspaces.
% \begin{definition}

% For a matroid $\M$ with ground set $M$, a {\em flat} in $\M$ 
% is a set $A \subseteq M$ that is maximal for its rank. In
% other words, $A$ is a flat of $\M$ if there is no proper
% superset $B \supsetneq A$ such that $\rank_\M(B) = \rank_\M(A)$.
% \end{definition}

The span of laminar matroids has the following useful property:

\begin{theorem}[Fife and Oxley~\cite{FO17}]\label{thm:span_laminar}
A matroid is laminar if and only if, for all circuits $C_1$ and $C_2$ with $C_1\cap C_2 \neq \emptyset$, either $\span(C_1)\subseteq \span(C_2)$ or $\span(C_2)\subseteq \span(C_1)$.
\end{theorem}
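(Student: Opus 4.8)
The plan is to prove the two implications separately. The forward implication (laminar $\Rightarrow$ circuit spans nested) is short, so I would give it in full; the converse is the substantive direction — it is the content of \cite{FO17} — and I would reduce it to a single clean structural claim.

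\emph{Forward direction.} I would first reduce to the loop-free case, since loops lie in the span of every set and deleting them changes neither the circuits nor the relation in the statement. Fix a laminar presentation $(\mathcal{A},c)$ of $\M$, which we may take to be reduced so that $c$ is strictly increasing along chains of $\mathcal{A}$ (a set whose constraint is implied by a larger one can be discarded). Because a circuit $C$ is a \emph{minimal} dependent set, any $A\in\mathcal{A}$ with $|C\cap A|>c(A)$ must in fact satisfy $C\subseteq A$ and $|C|=c(A)+1$; call such an $A$ a witness for $C$. Any two witnesses of $C$ contain the nonempty set $C$, so laminarity makes them nested, and $C$ therefore has a unique minimal witness $A(C)$ (in fact a unique witness, by strict monotonicity of $c$). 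The heart of this direction is the identity $\span_\M(C)=A(C)$. The inclusion $A(C)\subseteq\span_\M(C)$ is immediate: for $x\in A(C)\setminus C$, every $|C|$-element subset of $C\cup\{x\}$ is either $C$ itself or a subset of $A(C)$ of size $c(A(C))+1$, hence dependent, so $\rank_\M(C\cup\{x\})=|C|-1=\rank_\M(C)$. For the reverse inclusion I would take $x\notin A(C)$ and show that $(C\setminus\{e\})\cup\{x\}$ is independent for every $e\in C$ via a short case analysis on a hypothetical witness $B$ of its dependence: if $x\notin B$ one finds a proper dependent subset of $C$; if $x\in B$ one compares $c(B)$ with $c(A(C))$ and $|C|$, and every case contradicts minimality of $A(C)$, the fact that $C$ is a circuit, loop-freeness, or strict monotonicity of $c$ along chains. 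Granting $\span_\M(C)=A(C)$, the conclusion follows: if $C_1\cap C_2\neq\emptyset$ then $A(C_1)\cap A(C_2)\supseteq C_1\cap C_2\neq\emptyset$, so one of $A(C_1),A(C_2)$ contains the other, and hence so do $\span_\M(C_1),\span_\M(C_2)$.

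\emph{Converse direction.} Again reduce to the loop-free case and set $\mathcal{A}:=\{\span_\M(C):C\text{ a circuit of }\M\}\cup\{M\}$ with $c(A):=\rank_\M(A)$. I would first check that this pair cuts out exactly the independent sets of $\M$: an independent $I$ satisfies $|I\cap A|=\rank_\M(I\cap A)\le\rank_\M(A)=c(A)$, while a dependent $D$ contains a circuit $C$, and then $|D\cap\span_\M(C)|\ge|C|=\rank_\M(C)+1=\rank_\M(\span_\M(C))+1>c(\span_\M(C))$. So everything reduces to the single claim that $\mathcal{A}$ is laminar, i.e.\ that two circuit spans that intersect are nested.

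This last claim is exactly where the hypothesis must be used beyond its literal statement, since two circuits can have overlapping spans without themselves intersecting — and this is the main obstacle. The approach I would take: given circuits $C_1,C_2$ and a non-loop $z\in\span_\M(C_1)\cap\span_\M(C_2)$, first ``contract each $C_i$ toward $z$'' — if $z\notin C_i$, then $C_i\cup\{z\}$ is dependent and contains a circuit $D_i$ with $z\in D_i\subseteq C_i\cup\{z\}$ and $\span_\M(D_i)\subseteq\span_\M(C_i)$. Now $D_1$ and $D_2$ share $z$, so the hypothesis applies to them directly and yields (say) $\span_\M(D_1)\subseteq\span_\M(D_2)$. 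One then propagates this comparison back up to $C_1$ and $C_2$ by repeatedly invoking the hypothesis on circuits of the form (a circuit) $\cup$ (a single element of one span lying in the other span), forcing $\span_\M(C_1)\subseteq\span_\M(C_2)$ or vice versa. Making this propagation terminate is the technical crux; I would handle it by choosing a counterexample with $|\span_\M(C_1)|+|\span_\M(C_2)|$ minimal and producing a strictly smaller one. This termination argument is precisely the core of \cite{FO17}, so in the paper I would either carry it out along these lines or simply invoke \cite{FO17} for it.
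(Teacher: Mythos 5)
The paper does not prove this statement at all --- it is stated with an attribution to Fife and Oxley \cite{FO17} and used as a black box --- so there is no internal proof to compare against; your proposal should be judged on its own. Your forward direction is correct and complete: the normalization making $c$ strictly increasing along chains, the observation that a circuit $C$ has a unique witness $A(C)$ with $C \subseteq A(C)$ and $|C| = c(A(C))+1$, and the identity $\span_\M(C) = A(C)$ all check out (in the case $x \in B$ of your reverse-inclusion argument, loop-freeness forces $c(B) \geq 1$, hence $B$ meets $C \setminus \{e\} \subseteq A(C)$, laminarity then forces $A(C) \subsetneq B$, and strict monotonicity gives $c(B) \geq |C|$, contradicting $|((C\setminus\{e\})\cup\{x\}) \cap B| \leq |C|$). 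This is worth having, since the forward direction is the only one the paper actually invokes (in the proof of Theorem \ref{thm:laminar}, both to identify the intersection of the spans of circuits through $u$ with the span of a single circuit, and in case 3). For the converse, your reduction is sound --- the pair $(\mathcal{A}, c)$ with $\mathcal{A}$ the circuit spans and $c = \rank_\M$ does cut out exactly the independent sets --- but, as you say yourself, the remaining claim that intersecting circuit spans are nested does not follow directly from the hypothesis when the circuits themselves are disjoint, and your ``contract toward $z$ and propagate'' sketch only yields $\span_\M(D_1) \subseteq \span_\M(C_1) \cap \span_\M(C_2)$ for the contracted circuits, not the nesting of $\span_\M(C_1)$ and $\span_\M(C_2)$; the termination/propagation argument is genuinely missing. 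Since you explicitly defer that step to \cite{FO17}, which is exactly what the paper does for the entire theorem, this is an acceptable (and more informative) treatment rather than a flaw.
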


\begin{theorem}
    \label{thm:laminar}
    Let $\C$ be the class of laminar matroids of at most $n$ elements and $\F$ be a field with at least $2^n$ elements. Then there is an OMM from $\C$ into $\F^n$.
\end{theorem}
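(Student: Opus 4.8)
The plan is to build the morphisms $f_{\M,\pi}$ by a greedy online procedure that reads the ground set of $\M$ in the order $\pi$ and maintains an $\F$-representation of the prefix matroid inside $\F^n$; since step $t$ of the procedure looks only at the restriction $\M_t$ of $\M$ to the first $t$ elements, consistency under prefix-restriction will be automatic, and the whole content is to check that the output is a genuine matroid morphism. Write $u_1 = \pi(1), u_2 = \pi(2), \dots$, so $\M_t$ is laminar (the class $\C$ is closed under restriction). When $u_t$ arrives: if $u_t$ is independent of $\{u_1,\dots,u_{t-1}\}$, set $f(u_t) = e_{r+1}$, the standard basis vector with $r = \rank_\M(\{u_1,\dots,u_{t-1}\}) \le n-1$; otherwise, invoke Theorem~\ref{thm:span_laminar} to observe that the circuits of $\M_t$ through $u_t$ pairwise intersect, hence have a chain of spans, so we may fix a circuit $C^*_t \ni u_t$ of $\M_t$ whose span is $\subseteq$-minimal among them, put $F_t = \span_{\M_{t-1}}(C^*_t \setminus \{u_t\})$ and $\tilde F_t = \span_{\F^n}(f(C^*_t \setminus \{u_t\}))$, and choose $f(u_t)$ to be a \emph{generic} vector of $\tilde F_t$ --- a vector of $\tilde F_t$ lying outside $\span_{\F^n}(f(S'))$ for every $S' \subseteq \{u_1,\dots,u_{t-1}\}$ with $u_t \notin \span_{\M_t}(S')$ (when $\dim \tilde F_t = 0$, i.e.\ $u_t$ is a loop, there is nothing to avoid and we set $f(u_t)=0$).

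The first claim is that the generic vector exists. Carrying as inductive hypothesis that $f$ restricted to $\M_{t-1}$ is a morphism --- hence rank- and span-preserving --- the bad sets we must avoid are $\tilde F_t \cap \span_{\F^n}(f(S'))$ over the ``bad'' $S'$, and I will show momentarily that each such intersection is a \emph{proper} subspace of $\tilde F_t$. Thus we must avoid at most $2^{\abs{\M}-1} < 2^n \le \abs{\F}$ proper subspaces of a subspace of dimension $\ge 1$, which is possible because a positive-dimensional vector space over a field with $q$ elements is not the union of $q$ proper subspaces.

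The heart of the matter is the inductive claim that $f|_{\M_t}$ is a morphism; granting the inductive hypothesis on $\M_{t-1}$, this reduces to showing $f(u_t) \in \span_{\F^n}(f(S')) \iff u_t \in \span_{\M_t}(S')$ for all $S' \subseteq \{u_1,\dots,u_{t-1}\}$. The key equivalence is
\[ u_t \in \span_{\M_t}(S') \iff F_t \subseteq \span_{\M_{t-1}}(S') \iff \tilde F_t \subseteq \span_{\F^n}(f(S')). \]
The second ``$\iff$'' is span-preservation of $f|_{\M_{t-1}}$ together with $\tilde F_t = \span_{\F^n}(f(C^*_t \setminus \{u_t\}))$. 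For the first: if $F_t \subseteq \span_{\M_{t-1}}(S')$, then since spans of subsets of $\{u_1,\dots,u_{t-1}\}$ agree in $\M_{t-1}$ and $\M_t$, and $u_t \in \span_{\M_t}(C^*_t\setminus\{u_t\}) \subseteq \span_{\M_t}(F_t)$, we get $u_t \in \span_{\M_t}(S')$; conversely, if $u_t \in \span_{\M_t}(S')$, then (as $u_t \notin S'$) the set $S' \cup \{u_t\}$ contains a circuit $C \ni u_t$ of $\M_t$, and minimality of $\span(C^*_t)$ in the Fife--Oxley chain gives $\span_{\M_t}(C^*_t) \subseteq \span_{\M_t}(C)$, whence $F_t \subseteq \span_{\M_{t-1}}(C\setminus\{u_t\}) \subseteq \span_{\M_{t-1}}(S')$. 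Given this chain, the bad $S'$ are exactly those with $\tilde F_t \not\subseteq \span_{\F^n}(f(S'))$, so genericity forces $f(u_t) \notin \span_{\F^n}(f(S'))$ for all bad $S'$, while $f(u_t) \in \tilde F_t \subseteq \span_{\F^n}(f(S'))$ for all good $S'$ --- exactly the required equivalence; and in the independent-element case $u_t \notin \span_{\M_t}(S')$ for every $S'$ while $e_{r+1}$ lies outside $\span_{\F^n}\{e_1,\dots,e_r\} \supseteq \span_{\F^n}(f(\{u_1,\dots,u_{t-1}\}))$.

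Running this procedure on $(\M,\pi)$ produces $f_{\M,\pi}:\M \to \F^n$, which is a morphism by the induction just sketched, and since step $t$ depends only on $\M|_{\pi([t])}$ the family $\{f_{\M,\pi}\}$ restricts correctly, so it is an OMM into $\F^n$. I expect the main obstacle to be the bookkeeping around the key equivalence: one has to make sure $C^*_t$, and hence $F_t$, is well defined --- which is precisely where laminarity enters, through Theorem~\ref{thm:span_laminar} --- and that spans computed in the prefix matroid coincide with spans in the full matroid on the relevant subsets, so that the genericity imposed online is neither too weak (it would break some later independence) nor unsatisfiable.
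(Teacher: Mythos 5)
Your proposal is correct and follows essentially the same route as the paper's proof: pick the circuit through the new element whose span is minimal (via the Fife--Oxley characterization of laminar matroids), map the new element to a generic point of the image of that span, and justify genericity by observing that a vector space over a field with at least $2^n$ elements cannot be covered by fewer than $2^n$ proper subspaces. The only differences are presentational (you package the verification as a single span-preservation equivalence rather than a case analysis, and you avoid a slightly larger but still sufficiently small family of subspaces).
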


\begin{proof}
The OMM is defined inductively. If $\M$ is an empty matroid
then $f_{\M,\pi}$ is the trivial morphism from the empty matroid
to $\BigM = \F^n$. Otherwise let $n = |\M|$, let $(\M',\pi')$ denote the
restriction of $(\M,\pi)$ to $[n-1]$, let $g$ denote the
morphism $f_{\M',\pi'}$, and let $u = \pi(n)$.
The morphism $f = f_{\M,\pi}$ is defined as follows. For
$u' \neq u$ we set $f(u') = g(u')$. 
If $u$ doesn't belong to any circuit with previously arrived elements, define $f(u)$ to be any element outside the linear span of $f(\M')$. If $u$ is in some circuit, define $A \subseteq \M$ to be the intersection of $\span(C)$ over all circuits $C$ in $\M$ containing $u$. By Theorem \ref{thm:span_laminar}, $A$ is the span of some circuit in $\M$ and hence a flat. Now, choose $f(u)$ to be an element in the linear subspace $V = \span_{\F^n}(f(A \setminus \{u\}))$ that is not contained in $V_S = \span_{\F^n}(f(S))$ for every subset $S$ of $\M$ such that $S \cup \{u\}$ is independent on $\M$.

To justify that $f(u)$ is well-defined, we must
argue that the linear subspace $V$
contains at least one element that is not in
$V_S$ for every subset $S$ such that $S \cup \{u\}$ is independent.
We will use a counting argument that consists
of showing that $V_S \cap V$ is a proper linear
subspace of $V$ for every such $S$, and then
observing that a vector space over a field with
at least $2^n$ elements cannot be expressed as the
union of $2^n$ or fewer proper linear subspaces.
To show that $V_S \cap V$ is a proper linear
subspace of $V$ we argue by contradiction.
Let $C$ be a circuit containing $u$ such that $A = \span_\M(C)$.
As $S \cup \{u\}$ is independent, there must be
some $u' \in C \setminus \{u\}$ such that
$S \cup \{u'\}$ is also independent. Then,
since $g : \M' \to \BigM$ is a matroid
morphism, $g(u')$ is linearly independent
of $g(S)$. In particular, $g(u') \in V \setminus V_S$ and
hence $V_S \cap V$ is a proper linear subspace of $V$ as claimed.

Having justified that $f$ is well-defined,
we must show that it is a matroid morphism.
Consider any set $S$ in $\M$. To show
$\rank_{\BigM}(f(S)) = \rank_{\M}(S)$ we
will proceed by case analysis.
\begin{enumerate}
\item If $u \not\in S$ then
  $f(S) = g(S)$ and we use the
  induction hypothesis that
  $g = f_{\M',\pi'}$ is a matroid morphism.
\item If $u \in S$ and $\rank_{\M}(S) =
  \rank_{\M}(S \setminus \{u\}) + 1$
  then let $B$ be a maximal independent
  subset of $S \setminus \{u\}$. Since
  $g$ is a matroid morphism, every
  $u' \in S \setminus \{u\}$ lies in
  the linear span of $g(B)$. Since
  $B \cup \{u\}$ is independent in $\M$,
  by construction $f(u)$ lies outside
  the linear span of $g(B)$. Hence,
  $\rank_{\BigM}(S) = \rank_{\BigM}(B \cup \{u\}) =
  |B| + 1 = \rank_{\M}(S)$.
\item If $u \in S$ and $\rank_{\M}(S) =
  \rank_{\M}(S \setminus \{u\})$ then
  there is a circuit $C \subseteq S$
  containing $u$. The  $\span(C \setminus \{u\})$
  is a flat $A'$ that contains $u$,
  so $A'$ must be a superset of $A$ by Theorem \ref{thm:span_laminar}.
  Then we have the following chain
  of containments.
  \[
  \span_{\F^n}(g((S \setminus \{u\}))
  \supseteq
  \span_{\F^n}(g((C \setminus \{u\}))
  = \span_{\F^n}(g((A')) \supseteq \span_{\F^n}(g(A \setminus \{u\})) .
  \]
  Since $f(u)$, by construction, belongs to $\span_{\F^n}(g(A \setminus \{u\}))$,
  it belongs to $\span_{\F^n}(g(S \setminus \{u\}))$ and therefore
  \[
  \rank_{\F^n}(f(S)) =
  \rank_{\F^n}(g(S \setminus \{u\})) =
  \rank_{\M'}(S \setminus \{u\}) =
  \rank_{\M}(S).
  \qedhere\] 
\end{enumerate}
\end{proof}

%\pd{Outcommented old version here}

Similarly to the situation of graphic matroids, we can embed laminar matroids into a large linear matroid, but there is no embedding for which $\BigM$ is also laminar:

\begin{theorem}
    If $\C$ is the class of laminar matroids of at most $n$ elements, there is no online matroid embedding into a host matroid $\BigM$ where $\BigM$ itself is laminar.
\end{theorem}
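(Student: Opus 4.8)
The plan is to adapt the two-input argument behind Theorem~\ref{thm:no_big_graphic}: exhibit two laminar matroids that agree on a common prefix, so that any online matroid embedding is forced to map the prefix to a fixed configuration in $\BigM$, and then show that the two ways of extending that prefix force $\BigM$ to contain a pattern that no laminar matroid can contain. The forbidden pattern will be two \emph{distinct} rank-$2$ flats that are spans of circuits and share a common non-loop element, which is impossible in a laminar matroid by the Fife--Oxley characterization (Theorem~\ref{thm:span_laminar}).

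Concretely, assume $n \geq 4$ and suppose such an OME into a laminar host $\BigM$ exists. Take the free matroid $\Free_3$ on $\{a,b,c\}$ as the common prefix, and two laminar matroids extending it: $\M_1$ on $\{a,b,c,d\}$ whose unique circuit is $\{a,b,d\}$ (a triangle, with $c$ added freely so that $\rank \M_1 = 3$), and $\M_2$ on $\{a,b,c,e\}$ whose unique circuit is $\{a,c,e\}$. Each has a single circuit and is therefore laminar, and each has four elements, so both lie in $\C$. Fix orderings $\pi_1 = (a,b,c,d)$ and $\pi_2 = (a,b,c,e)$; the prefix-restriction of $(\M_1,\pi_1)$ and of $(\M_2,\pi_2)$ to $\{a,b,c\}$ is the same ordered matroid $(\Free_3,(a,b,c))$, so by the OME property the embeddings $f_{\M_1,\pi_1}$ and $f_{\M_2,\pi_2}$ assign the same images $a',b',c' \in \BigM$. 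These are images of a three-element independent set under a rank-preserving map, hence independent in $\BigM$.

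Now I would carry out the flat/rank bookkeeping in $\BigM$. Since $\{a,b,d\}$ is a circuit of $\M_1$ (so its image has rank $2$) and every two-element subset of $\{a,b,d\}$ is independent, the image $\{a',b',d'\}$ is a circuit of $\BigM$; as $d' \in \span_{\BigM}\{a',b'\}$ we get $\span_{\BigM}(\{a',b',d'\}) = \span_{\BigM}\{a',b'\} =: L_1$. Likewise $\{a',c',e'\}$ is a circuit of $\BigM$ with $\span_{\BigM}(\{a',c',e'\}) = \span_{\BigM}\{a',c'\} =: L_2$. Because $\{a',b',c'\}$ is independent, $c' \notin L_1$ and $b' \notin L_2$, so $L_1 \neq L_2$; yet $a' \in L_1 \cap L_2$ and both are rank-$2$ flats. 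Since $\BigM$ is laminar, applying Theorem~\ref{thm:span_laminar} to the two circuits $\{a',b',d'\}$ and $\{a',c',e'\}$, which intersect in $a'$, gives $L_1 \subseteq L_2$ or $L_2 \subseteq L_1$; but a proper containment of flats strictly decreases rank, so $L_1 = L_2$, a contradiction. The same argument applies verbatim to every realization of a randomized OME, ruling those out as well.

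The main obstacle is the design step rather than the calculation: one must confirm that no single laminar input already encodes the ``two intersecting triangles'' pattern (indeed $M(K_4)$ minus any single edge is already non-laminar), so two distinct inputs sharing a genuine prefix are really needed; and one must make sure the prefix is long enough to pin down $a',b',c'$ while each extension stays laminar and forces exactly the claimed circuit span. Both points are routine once set up, but getting the configuration minimal is where the care goes.
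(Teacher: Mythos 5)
Your proof is correct, and while it shares the paper's basic skeleton---a free three-element prefix $\{a,b,c\}$ whose image $\{a',b',c'\}$ is pinned down before the algorithm learns which triangle will be completed---the contradiction is reached by a different mechanism. The paper uses \emph{three} extensions (with circuits $\{a,b,d\}$, $\{a,c,d\}$, $\{b,c,d\}$, all on the same ground set $\{a,b,c,d\}$) and argues directly about the laminar family representing $\BigM$: at most one two-element subset of $\{a',b',c'\}$ can be cut out by a set of the laminar family, so at least one of the three required extensions is blocked. You instead use only \emph{two} extensions and invoke the Fife--Oxley characterization (Theorem~\ref{thm:span_laminar}): the images $\{a',b',d'\}$ and $\{a',c',e'\}$ are circuits of $\BigM$ meeting in $a'$, so their spans must be nested, yet they are distinct rank-$2$ flats (distinct because $c'\notin\span\{a',b'\}$ and $b'\notin\span\{a',c'\}$, and nested distinct flats have strictly different ranks). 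Your route buys a more economical counterexample and a fully rigorous final step---the paper's claim that ``there must be some set in the laminar family that separates those elements'' is left somewhat informal---at the cost of relying on the Fife--Oxley theorem rather than first principles; both are legitimate, and the paper already imports that theorem for its laminar OMM construction, so nothing new is assumed. All the checks you flag as ``routine'' (that each $\M_i$ is laminar, that the prefix restrictions agree as ordered matroids, that the images of the triangles are genuine circuits of $\BigM$) do go through as you describe.
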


\begin{proof}
Given a set $S$ let $\U_{n,r}(S)$ denote the uniform matroid with $n$ elements and rank $r$ defined on ground set $S$. Using this notation, consider the following three matroids on ground set $\{a,b,c,d\}$:
$$\U_{3,2}(\{a,b,d\}) \oplus \U_{1,1}(\{c\}) \qquad 
\U_{3,2}(\{a,c,d\}) \oplus \U_{1,1}(\{b\})  \qquad 
\U_{3,2}(\{b,c,d\}) \oplus \U_{1,1}(\{a\}) $$
It is easy to check that all three of those matroids are laminar. Moreover, the restriction to $\{a,b,c\}$ is the matroid $\U_{3,3}(\{a,b,c\})$. Assume now that there is an OMM into a laminar matroid $\BigM$ and let $a', b', c'$ be the elements it maps to. There must be some set in the laminar family of $\BigM$ that separates those elements, otherwise we can't extend it to the three matroids above. For example, in the first matroid we can't swap the roles of $a$ and $c$ preserving rank. Finally, there is at most one two-element subset of $\{a',b',c'\}$ that can be formed by intersecting one of the sets in the laminar family with $\{a',b',c'\}$. 
If the two-element subset that spans $d$ is not that one, then you have no way of extending the embedding to include $d$.
\end{proof}

\subsection{Matroids of Bounded Rank}\label{sec:bounded_rank}

Finally, we show that there is no universal host matroid $\BigM$ such that all matroids of at most $n$ elements admit an OMM into $\BigM$. In fact, we show that even if we restrict to matroids of rank $3$ this is not possible. The geometric intuition is that once we reach rank $3$, we start being able to represent finite projective planes, where elements that haven't arrived yet impose non-trivial constraints on the already arrived elements. Before we get there, it is useful to analyze ranks~$1$ and~$2$.

Let $\CM_{n,r}$ be the matroids of rank at most $r$ defined on at most $n$ elements. We will show that for $r=1,2$ it is trivial to construct an online matroid embedding. As usual, we will construct an OMM and convert it to an OME using Lemma \ref{lemma:omm_to_ome}.

\begin{lemma} There is an online matroid morphism from $\CM_{n,1}$ into $\U_{1,1} \oplus T$.
\end{lemma}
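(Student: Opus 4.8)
The plan is to construct an explicit online matroid morphism from $\CM_{n,1}$, the class of matroids of rank at most $1$ on at most $n$ elements, into the two-element host matroid $\U_{1,1} \oplus \T$. Note that $\U_{1,1} \oplus \T$ has ground set consisting of one non-loop element, call it $x$, and one loop element, call it $0$ (the ground set element of $\T$). A matroid of rank at most $1$ is completely determined by which of its elements are loops: the non-loop elements (if any) are all parallel to each other, and any set containing at least one non-loop element has rank $1$, while any set of loops has rank $0$. Crucially, whether an element $a$ is a loop is detectable immediately upon its arrival, since $\rank_\M(\{a\})$ is part of the prefix-restriction data available online.

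The morphism $f = f_{\M,\pi}$ is then defined in the obvious way: when processing an element $a$, if $\rank_\M(\{a\}) = 0$ (i.e. $a$ is a loop), set $f(a) = 0$; otherwise set $f(a) = x$. This is manifestly an online construction, since the decision for $a$ depends only on $\rank_\M(\{a\})$. The restriction-compatibility condition in the definition of an OMM is immediate, because $f(a)$ depends only on $a$ itself (and not on the rest of the matroid or the ordering), so the restriction of $f_{\M,\pi}$ to any prefix agrees with $f_{\M',\pi'}$.

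It remains to verify that $f$ is a matroid morphism, i.e. that $\rank_{\U_{1,1}\oplus\T}(f(S)) = \rank_\M(S)$ for every $S \subseteq \M$. If every element of $S$ is a loop of $\M$, then $\rank_\M(S) = 0$ and $f(S) \subseteq \{0\}$, which has rank $0$ in $\U_{1,1}\oplus\T$. If $S$ contains at least one non-loop element, then $\rank_\M(S) = 1$ (since $\M$ has rank at most $1$) and $x \in f(S)$, so $\rank_{\U_{1,1}\oplus\T}(f(S)) = 1$ as well. Hence $f$ is a morphism.

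There is no real obstacle here; the only thing to be slightly careful about is the degenerate edge case where $\M$ itself is empty or has rank $0$, which is handled uniformly by the same definition (everything maps to $0$, and all ranks are $0$). The substance of the lemma is simply the observation that the loop/non-loop distinction is the entire structure of a rank-$\le 1$ matroid and that it is revealed online element-by-element.
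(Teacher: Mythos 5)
Your proof is correct and follows exactly the same approach as the paper: map loops to the loop element of $\U_{1,1}\oplus\T$ and non-loops to the non-loop element, noting that this is decidable online from $\rank_\M(\{a\})$ alone. You simply spell out the rank-preservation check that the paper leaves as "simple to check."
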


\begin{proof}
The matroid $\U_{1,1} \oplus T$ has ground set $\{0,1\}$ and the only independent set is $\{1\}$. Given a matroid $\M \in \CM_{n,1}$ and an element $e$ of the ground set of $\M$ we map it to $1$ if $\rank_\M(\{e\}) = 1 $ and to $0$ otherwise. The mapping is constructed online since it only depends on the rank of a single-set element consisting of the element we are processing. It is also simple to check that it is a morphism since $\M$ has rank $1$.
\end{proof}

\begin{lemma} There is an online matroid morphism from $\CM_{n,2}$ into $\U_{n,2} \oplus T$.
\end{lemma}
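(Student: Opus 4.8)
The plan is to imitate the structure of the rank-1 case but with a bit more room in the host matroid. Since rank-2 matroids are essentially ``points on a line'' together with possible loops, the host $\U_{n,2}\oplus\T$ has enough independent elements (the $n$ elements of $\U_{n,2}$, any two of which are a basis) plus one loop (the element $0\in\T$) to absorb loops of $\M$. The morphism will map loops of $\M$ to the loop of $\T$, and all non-loop elements to distinct elements of $\U_{n,2}$ \emph{except} that elements which are parallel in $\M$ (i.e.\ each is in the span of the other) must be collapsed to the same element of $\U_{n,2}$. The subtlety is that ``parallel'' is a relation we can only detect once both elements have arrived, yet we must commit to the image of the first one immediately — but this is fine, because parallelism in a rank-$\le 2$ loopless matroid is an equivalence relation, so we can always choose the representative to be the first-arrived element of its parallel class.

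Concretely, I would process the elements of $\M$ in the order given by $\pi$, maintaining a counter $k$ initialized to $1$ (indexing fresh elements of $\U_{n,2}$). When element $a$ arrives: if $\rank_\M(\{a\})=0$, set $f(a)=0\in\T$. Otherwise, check whether there is a previously-arrived non-loop element $u$ with $\rank_\M(\{a,u\})=1$ (equivalently $\{a,u\}$ is dependent, equivalently $a$ is a loop of the contraction — detectable from the rank function on arrived elements); if so, set $f(a)=f(u)$, picking any such $u$ (they all have the same image by the induction hypothesis, since parallelism is transitive). If no such $u$ exists, set $f(a)$ to be the $k$-th element of $\U_{n,2}$ and increment $k$. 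This is manifestly an online construction: the image of $a$ depends only on the rank function restricted to elements arrived so far.

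The remaining task is to verify $f$ is a matroid morphism, i.e.\ $\rank_{\U_{n,2}\oplus\T}(f(S))=\rank_\M(S)$ for all $S$. I would argue by cases on $\abs{S}$ after noting that adding or removing loops changes neither side (loops map to the loop $0$, which contributes nothing to either rank). So reduce to $S$ consisting of non-loop elements. If $S$ spans a rank-$1$ flat of $\M$ — i.e.\ all elements of $S$ are pairwise parallel — then by construction they share a common image, a single non-loop element of $\U_{n,2}$, so both ranks equal $1$. If $S$ contains two non-parallel elements, then $\rank_\M(S)=2$ (since $\M$ has rank at most $2$); on the image side, these two elements received distinct fresh labels in $\U_{n,2}$ (the later one was not collapsed, precisely because it was not parallel to any earlier element, in particular not to the earlier of these two), and any two distinct elements of $\U_{n,2}$ are independent, so $\rank(f(S))=2$.

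The main obstacle — really the only thing requiring care — is confirming that the parallelism-detection step is sound under the online constraint, namely that if $a$ is parallel to \emph{some} earlier non-loop element $u$ but not to some \emph{other} earlier element, the collapsing choice still yields a morphism. This is handled by transitivity of parallelism in loopless rank-$\le2$ matroids: the earlier elements parallel to $a$ form (part of) one parallel class, already all mapped to the same host element by induction, while elements not parallel to $a$ are mapped to different host elements and remain independent from $f(a)$ in $\U_{n,2}$. Once transitivity is invoked the verification is entirely routine, so I would keep it brief.
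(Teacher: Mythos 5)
Your construction is exactly the one in the paper: map loops to the element $0$ of $\T$, collapse an arriving non-loop element onto the image of any previously arrived parallel element, and otherwise assign a fresh index of $\U_{n,2}$. Your verification (via transitivity of parallelism in loopless rank-$\le 2$ matroids) is correct and in fact more detailed than the paper's, which states the construction and leaves the morphism check implicit.
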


\begin{proof}
The matroid $\U_{n,2} \oplus T$ has ground set $\{0,1,2, \hdots, n\}$ and the independent sets are subsets of at most $2$ elements that don't contain zero. To construct an embedding $f: \M \rightarrow \U_{n,2} \oplus T$, when we process an element $e$, we first check if $\rank_\M(\{e\}) = 0$. If so, we map it to $0$. Otherwise, for every element $a$ processed before $e$, we check if $\rank_\M(\{a,e\}) = 1$. If so, we map $f(e) = f(a)$. Otherwise, we map $e$ to the first unused index in $[n]$.
\end{proof}

Once we reach rank $3$, the situation becomes a lot more interesting, as there exist many non-trivial matroids like finite projective planes (e.g. Fano plane). The richness of the space of rank $3$ matroids will also imply that an online matroid embedding no longer exists.

Before we prove it formally, let's give some geometric intuition. Given points on the plane $\R^2$, we can define a matroid of rank $3$ as follows: (i) every set of one point is independent (ii) every pair of different points is independent; (iii) every triple of points is independent iff it they are not collinear; (iv) no other set is independent.

Now, consider the following online problem: we are presented with labels $a,b,c,d,e,f,g$ in this order. Each label represents a point, but we are not told which point it is. Instead, we are told the dependency relation between them. As they arrive, we are asked to map each label to a $\R^2$ such that the dependency relations are satisfied.

Take the following instance of this problem: the first $6$ points $a,b,c,d,e,f$ arrive, we are told that all triples are independent, i.e., neither of them is collinear. Figure \ref{fig:projective1} shows two possible ways to place those points in the plane satisfying those dependencies. In the first arrangement, the segments $[a,b], [c,d], [e,f]$ all intersect in a single point. In the second arrangement, they don't.

Now, the last label $g$ arrives. If the algorithm chose the arrangement on the left, we can ask to place $g$ such that $\{a,g,b\}, \{c,g,d\}$ are dependent, but $\{e,f,g\}$ are independent. There is no way to satisfy those dependencies in the figure on the left, since $g$ must be in the intersection of the $[a,b]$ and $[c,d]$ but the only way to do so is by also being in the line $[e,f]$. On the other hand, if the algorithm chose the arrangement on the right in Figure \ref{fig:projective1}, we can ask to place $g$ such that $\{a,g,b\}, \{c,g,d\}$ and $\{e,f,g\}$ are dependent. There is no way to satisfy those dependencies in the figure on the right since the lines $[a,b]$, $[c,d]$ and $[e,f]$ must meet on the same point.

Those two dependencies are satisfiable offline, as we can see in Figure \ref{fig:projective2}. The difficulty is that the dependencies on points that have not arrived pose constraints on the relative position of points that have already arrived.

\begin{figure}[h]
\centering
\begin{tikzpicture}[scale=1.2, inner sep=1.5pt] 
 \node[circle,fill] at (0,0) {};
 \node[circle,fill] at (2,0) {};
 \node[circle,fill] at (1,1.4) {};

 \node[circle,fill] at (1,-.3) {};
 \node[circle,fill] at (0.2,0.8) {};
 \node[circle,fill] at (1.8,0.8) {};

 \node at (-.2,-.1) {$a$};
 \node at (2.2,-.1) {$e$};
 \node at (1,1.6) {$c$};

 \node at (2,0.9) {$b$};
 \node at (0,0.9) {$f$};
 \node at (1,-.5) {$d$};

 \draw (0,0)--(2,0);
 \draw (2,0)--(1,1.4);
 \draw (0,0)--(1,1.4);
 \draw (1,-.3)--(1,1.4);
 \draw (0.2,0.8)--(2,0);
 \draw (1.8,0.8)--(0,0);

 \begin{scope}[xshift=5cm]

 \node[circle,fill] at (0,0) {};
 \node[circle,fill] at (2,0) {};
 \node[circle,fill] at (1,1.4) {};

 \node[circle,fill] at (.5,-.3) {};
 \node[circle,fill] at (0.2,0.8) {};
 \node[circle,fill] at (1.8,0.8) {};

 \node at (-.2,-.1) {$a$};
 \node at (2.2,-.1) {$e$};
 \node at (1,1.6) {$c$};

 \node at (2,0.9) {$b$};
 \node at (0,0.9) {$f$};
 \node at (.5,-.5) {$d$};

 \draw (0,0)--(2,0);
 \draw (2,0)--(1,1.4);
 \draw (0,0)--(1,1.4);
 \draw (.5,-.3)--(1,1.4);
 \draw (0.2,0.8)--(2,0);
 \draw (1.8,0.8)--(0,0);
 \end{scope}
 
\end{tikzpicture}
\caption{Two sets of points that induce the same matroid on $\{a,b,c,d,e,f\}$}
\label{fig:projective1}
\end{figure}
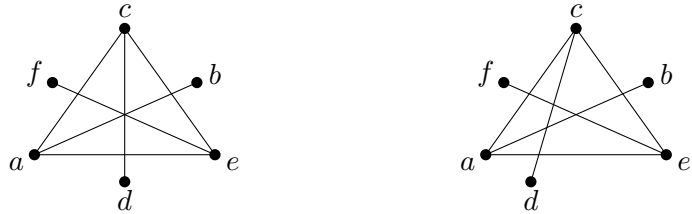

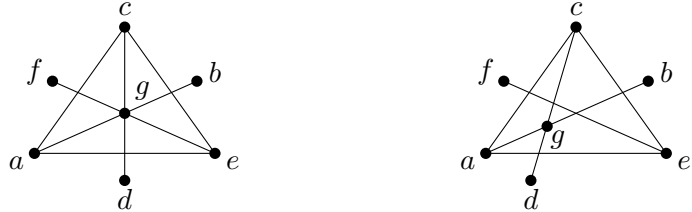
\begin{figure}[h]
\centering
\begin{tikzpicture}[scale=1.2, inner sep=1.5pt] 
 \node[circle,fill] at (0,0) {};
 \node[circle,fill] at (2,0) {};
 \node[circle,fill] at (1,1.4) {};

 \node[circle,fill] at (1,-.3) {};
 \node[circle,fill] at (0.2,0.8) {};
 \node[circle,fill] at (1.8,0.8) {};

 \node[circle,fill] at (1,0.444) {};
\node at (1.2,0.7) {$g$};

 \node at (-.2,-.1) {$a$};
 \node at (2.2,-.1) {$e$};
 \node at (1,1.6) {$c$};

 \node at (2,0.9) {$b$};
 \node at (0,0.9) {$f$};
 \node at (1,-.5) {$d$};

 \draw (0,0)--(2,0);
 \draw (2,0)--(1,1.4);
 \draw (0,0)--(1,1.4);
 \draw (1,-.3)--(1,1.4);
 \draw (0.2,0.8)--(2,0);
 \draw (1.8,0.8)--(0,0);

 \begin{scope}[xshift=5cm]

 \node[circle,fill] at (0,0) {};
 \node[circle,fill] at (2,0) {};
 \node[circle,fill] at (1,1.4) {};

 \node[circle,fill] at (.5,-.3) {};
 \node[circle,fill] at (0.2,0.8) {};
 \node[circle,fill] at (1.8,0.8) {};

 \node[circle,fill] at (.68,.3) {};
 \node at (.8,0.15) {$g$};

 \node at (-.2,-.1) {$a$};
 \node at (2.2,-.1) {$e$};
 \node at (1,1.6) {$c$};

 \node at (2,0.9) {$b$};
 \node at (0,0.9) {$f$};
 \node at (.5,-.5) {$d$};

 \draw (0,0)--(2,0);
 \draw (2,0)--(1,1.4);
 \draw (0,0)--(1,1.4);
 \draw (.5,-.3)--(1,1.4);
 \draw (0.2,0.8)--(2,0);
 \draw (1.8,0.8)--(0,0);
 \end{scope}
 
\end{tikzpicture}
\caption{Once the dependency of point $g$ with respect the remaining points is specified, this constrains the possible geometric arrangements of points $\{a,b,c,d,e,f\}$.}
\label{fig:projective2}
\end{figure}

Our goal with the previous discussion is to provide an intuition for the following proof. Note that while embedding a matroid of rank $3$, we are not restricted to $\R^2$. We could in principle embed it in a matroid that is not representable over any field. 
However, geometric intuition can now be turned into a combinatorial proof of the following statement:

\begin{theorem}\label{thm:no_big_rank_3}
There is no host matroid $\BigM$ for which there is an online matroid morphism from $\CM_{n,3}$ into $\BigM$.
\end{theorem}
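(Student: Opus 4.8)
The plan is to take the geometric intuition in the preceding discussion and make it combinatorial, using the non-Desarguesian/Desargues configuration as the obstruction. Suppose for contradiction that an OMM $f_{\M,\pi} : \M \to \BigM$ exists for the class $\CM_{n,3}$ into some fixed host $\BigM$. I would first isolate a ``core'' rank-3 configuration on a small ground set $\{a,b,c,d,e,f\}$ in which every triple is independent (so the matroid is just $\U_{6,3}$ on these labels, with all points in ``general position''). After processing these six elements the algorithm has committed to images $a',b',c',d',e',f' \in \BigM$, all of rank $3$, with every triple independent. The key point is that $\BigM$ has now committed to whether or not the three rank-2 flats $\span(a',b')$, $\span(c',d')$, $\span(e',f')$ have a common point of rank $1$ (equivalently, whether $\rank(\{a',b',c',d',e',f'\}) $-related incidence holds, detectable via the rank function of $\BigM$ alone). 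Call these flats $L_{ab}, L_{cd}, L_{ef}$.

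\textbf{The forking argument.} The crux is that $\BigM$'s commitment is binary, but there are two future extensions that require opposite commitments, and both extensions keep the matroid in $\CM_{n,3}$ and keep the prefix on $\{a,b,c,d,e,f\}$ fixed. Concretely: in extension~1, element $g$ arrives with $\{a,g,b\}$, $\{c,g,d\}$ dependent but $\{e,f,g\}$ independent; this forces $f(g)$ to be a rank-1 element of $L_{ab}\cap L_{cd}$ (nonempty and a single point since $L_{ab}\neq L_{cd}$ — they are distinct rank-2 flats in a rank-3 setting, as $a',b',c',d'$ are in general position), and then independence of $\{e,f,g\}$ forces that common point to lie \emph{outside} $L_{ef}$. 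In extension~2, $g$ arrives with $\{a,g,b\}$, $\{c,g,d\}$, $\{e,f,g\}$ all dependent; this forces the common point of $L_{ab}, L_{cd}$ to lie \emph{on} $L_{ef}$, i.e. all three rank-2 flats are concurrent. A single host $\BigM$ together with a single choice of images $a',\dots,f'$ cannot satisfy both requirements. The only remaining wiggle room is that $\BigM$ might not ``see'' $L_{ab}\cap L_{cd}$ as a single rank-1 flat — but in a loop-free matroid of the relevant rank, the intersection of two flats is again a flat, and two distinct rank-2 flats whose union has rank 3 meet in a flat of rank exactly $1$; I would verify this from submodularity of the rank function (this is the one place where a short calculation is needed). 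So whichever way $\BigM$ committed after step~6, the adversary picks the extension that is infeasible, contradicting that the OMM extends to all of $\CM_{n,3}$.

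\textbf{What needs care.} Two subtleties. First, one must confirm that both the ``core'' $\U_{6,3}$ and each of the two extensions genuinely lie in $\CM_{n,3}$ — i.e. that assigning those dependency relations to $\{a,b,c,d,e,f,g\}$ yields a bona fide matroid of rank $3$ on at most $n$ elements; this is a routine check (the offline realizations in Figures~\ref{fig:projective1}–\ref{fig:projective2} over $\R^2$ already certify that each is a matroid). Second, and this is the main obstacle I anticipate, one must argue purely matroid-theoretically (not using any representability of $\BigM$) that the images $a',\dots,f'$ are forced into ``general position'' and that the incidence structure of $L_{ab}, L_{cd}, L_{ef}$ is a well-defined binary invariant of the committed images: since $f$ is a \emph{morphism} it preserves rank on $\{a,\dots,f\}$, so every triple of images is independent (rank 3) and every pair spans a rank-2 flat, and since $\BigM$ need not be simple I should pass to $L_{ab} := \span_{\BigM}(\{a',b'\})$ etc. and reason about flats and their ranks via submodularity rather than about ``lines'' in a geometry. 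Once that lemma is in place — distinct rank-2 flats $L_{ab}$, $L_{cd}$ with $\rank(L_{ab}\cup L_{cd}) = 3$ intersect in a rank-1 flat, and $f(g)$ in extension~1 must equal that flat's unique nonloop rank-1 element up to parallel class — the contradiction is immediate. I would finish by noting the corollary: since matroids representable over any field of characteristic $\ge 7$ include all rank-3 matroids arising this way (in fact all of $\CM_{n,3}$ for the relevant range, as the needed configurations are coordinatizable there), there is likewise no OMM for that class into any fixed host.
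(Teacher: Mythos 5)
Your proposal is correct and follows essentially the same route as the paper: the same pair of rank-3 configurations $\M_1,\M_2$ agreeing on a $\U_{6,3}$ prefix on $\{a,b,c,d,e,f\}$, the same fork on the seventh element $g$, and the same submodularity computation showing the two candidate images of $g$ would have to be parallel (rank of their pair equal to $1$), which is incompatible with $\{e,f,g\}$ being dependent in one extension and independent in the other. The only cosmetic difference is that you phrase the argument via the intersection of the rank-2 flats $L_{ab}$ and $L_{cd}$ while the paper works directly with the two images $G_1,G_2$ of $g$; note only that submodularity by itself gives $\rank(L_{ab}\cap L_{cd})\le 1$ rather than $=1$, with equality supplied by the non-loop image of $g$, exactly as your ``what needs care'' paragraph anticipates.
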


\begin{proof} Define matroids $\M_1$ and $\M_2$ on elements $\{a,b,c,d,e,f,g\}$ represented by the points in Figure \ref{fig:projective2}. In those pictures, a triple of elements is independent iff the corresponding points are not collinear.

Those points induce the same matroid on the first $6$ points, which is simply the uniform matroid $\U_{6,3}$, but the matroids differ when we one considers element $g$.

\begin{itemize}
\item in $\M_1$, the sets $\{a,b,g\}$, $\{c,d,g\}$ and $\{e,f,g\}$ are dependent
\item in $\M_2$, the sets $\{a,b,g\}$ and $\{c,d,g\}$ are dependent but $\{e,f,g\}$ is independent
\end{itemize}

Now suppose $\BigM$ is a matroid and we’re trying to construct an online embedding of a matroid $\M$ into $\BigM$, where $\M$ is a rank-3 matroid that could either be $\M_1$ or $\M_2$. When the first six elements of $\M$ arrive, we have no way of distinguishing whether the input sequence is going to be $\M_1$ or $\M_2$. The online embedding algorithm chooses some function $h$ mapping $\{a,b,c,d,e,f\}$ to a six-element subset of $\BigM$. Denote the images of $a,b,c,d,e,f$ in $\BigM$ by capital letters, for example $h(a)=A$. Now suppose there are two different extensions of $h$ to the domain ${a,b,c,d,e,f,g}$, denoted by $h_1$ and $h_2$, such that $h_i$ is an  embedding of $\M_i$ into $\BigM$ for each $i$. Let $G_1 = h_1(g)$ and $G_2 = h_2(G)$. Then the following must hold:

\begin{itemize}
\item the sets $\{A,B,G_1\}$, $\{C,D,G_1\}$ and $\{E,F,G_1\}$ are dependent in $\BigM$
\item the sets $\{A,B,G_2\}$ and $\{C,D,G_2\}$ are dependent but $\{E,F,G_2\}$ is independent in $\BigM$
\end{itemize}

The sets $\{A,B,G1,G2\}$ and $\{C,D,G1,G2\}$ both have rank $2$ in $\BigM$, whereas their union has rank $3$ since it contains the rank-3 set $\{A,B,C,D\}$. By submodularity, the set $\{G1,G2\}$ must have rank 1. Again by submodularity, 
$$\rank_{\BigM}(\{E,F,G1,G2\}) \leq \rank_{\BigM}(\{E,F,G1\}) + \rank_{\BigM}(\{G1,G2\}) - \rank_{\BigM}(\{G1\})$$
which evalutes to $2 + 1 - 1 = 2$. This contradicts the fact that $\{E,F,G2\}$ has rank $3$.
% $$\begin{bmatrix}
% 0 & 2 & 1 & 1 & 2 & 0 & 1\\
% 1 & 1 & 2 & 0 & 0 & 2 & 1\\
% 2 & 0 & 0 & 2 & 1 & 1 & 1
% \end{bmatrix}$$
% $$\begin{bmatrix}
% 0 & 4 & 1 & 1 & 4 & 0 & 1\\
% 1 & 1 & 4 & 0 & 0 & 4 & 1\\
% 2 & 0 & 0 & 2 & 1 & 1 & 1
% \end{bmatrix}$$
\end{proof}

\begin{corollary}\label{cor:representable_counterexample}
For every field $K$ of characteristic $p \geq 7$, there is no host matroid $\BigM$ for which there is an online matroid embedding from all $K$-representable matroids into $\BigM$.    
\end{corollary}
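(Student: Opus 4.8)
The plan is to reuse the proof of Theorem~\ref{thm:no_big_rank_3} almost verbatim. That argument never uses anything about the class $\CM_{n,3}$ beyond the presence of the two specific rank-$3$, seven-element matroids $\M_1$ and $\M_2$ (and the fact that their restrictions to the common six-element subset $\{a,b,c,d,e,f\}$ agree, both being $\U_{6,3}$). Hence the same submodularity computation shows: for \emph{any} class $\C$ that is closed under restriction and contains both $\M_1$ and $\M_2$, no host matroid $\BigM$ admits an online matroid morphism --- in particular none admits an online matroid embedding --- from $\C$ into $\BigM$. Since the class $\C_K$ of $K$-representable matroids is closed under restriction, it therefore suffices to prove that $\M_1,\M_2 \in \C_K$ whenever $\mathrm{char}(K)=p\ge 7$, i.e.\ that both matroids are $K$-representable.

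For this it is enough to produce, for each prime $p\ge 7$, an $\F_p$-representation of $\M_1$ and of $\M_2$: every field of characteristic $p$ has $\F_p$ as its prime subfield, so composing an $\F_p$-representation with the inclusion $\F_p \hookrightarrow K$ yields a $K$-representation. I would get all of these at once by first writing down an \emph{integer} representation of each matroid --- seven vectors in $\Z^3$ --- chosen so that every triple of the seven vectors that must be independent has a $3\times 3$ determinant whose absolute value has no prime factor $\ge 7$ (in fact one can keep all of these determinants in $\{1,\dots,6\}$ in absolute value), while each triple that must be dependent (the three-element circuits) has determinant identically zero. Reducing such a configuration modulo any prime $p\ge 7$ leaves the small determinants nonzero and the zero ones zero, hence preserves all independence/dependence relations and gives the desired $\F_p$-representation.

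Concretely, for $\M_1$ I would put the common point $g$ of its three three-element ``lines'' at $(0,0,1)$ and take $a=(0,1,0),b=(0,1,1)$ (which together with $g$ span $\{x_1=0\}$), $c=(1,0,3),d=(1,0,4)$ (spanning $\{x_2=0\}$ with $g$), $e=(1,-1,0),f=(1,-1,1)$ (spanning $\{x_1+x_2=0\}$ with $g$). The three three-element circuits then vanish identically (one coordinate, or the sum of two coordinates, is $0$ on each), while every ``transversal'' triple --- one vector from each of the three lines --- has determinant $t-s-r$ with $s,r\in\{0,1\}$ and $t\in\{3,4\}$, hence lies in $\{1,2,3,4\}$, so the restriction to $\{a,\dots,f\}$ is $\U_{6,3}$ and no further dependency is created. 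For $\M_2$ I would set $g=(0,0,1)=ab\cap cd$, take $a=(0,1,0),b=(0,1,1)$ on $\{x_1=0\}$, $c=(1,0,3),d=(1,0,4)$ on $\{x_2=0\}$, and $e=(1,1,0),f=(1,2,2)$ with distinct ``slopes'' in the first two coordinates, so that $\{e,f,g\}$ is independent; again the only three-element circuits are $\{a,b,g\}$ and $\{c,d,g\}$, and a short tabulation shows every remaining $3\times3$ determinant lies in $\{\pm1,\dots,\pm6\}$.

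The only real work is that tabulation: checking that, simultaneously, all the ``should-be-nonzero'' determinants of the chosen configurations avoid every prime $\ge 7$ and the intended collinearities hold exactly. This is also precisely where the hypothesis $p\ge 7$ bites: in small characteristic the examples genuinely fail to be representable --- e.g.\ $\M_1$ would have to embed into $\mathrm{PG}(2,2)$, the Fano plane, which has only seven points and far too many collinear triples to contain a copy of $\U_{6,3}$ --- so this family of examples cannot be pushed below characteristic $7$.
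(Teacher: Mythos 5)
Your proposal is correct and takes essentially the same approach as the paper: both reduce the corollary to the observation that the two rank-$3$ matroids $\M_1,\M_2$ from the proof of Theorem~\ref{thm:no_big_rank_3} are representable over every field of characteristic $p\geq 7$, witnessed by explicit $3\times 7$ matrices (the paper simply exhibits a different but equally valid pair of matrices). Your determinant bookkeeping checks out --- all the ``must-be-independent'' minors of your configurations lie in $\{\pm 1,\dots,\pm 6\}$ and the three-element circuits have vanishing minors, so reduction mod any $p\geq 7$ (and inclusion of the prime subfield $\F_p$ into $K$) gives the required representations.
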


\begin{proof}
%\pd{Something's broken in this sentence} 
Observe that the matroids $\M_1$ and $\M_2$ in the proof of Theorem \ref{thm:no_big_rank_3} are representable in every field of characteristic at least $7$. For example, they can be represented respectively by the columns of the following matrices:
$$\begin{bmatrix}
0 & 2 & 1 & 1 & 2 & 0 & 1\\
1 & 1 & 2 & 0 & 0 & 2 & 1\\
2 & 0 & 0 & 2 & 1 & 1 & 1
\end{bmatrix}
\quad \text{ and } \quad
\begin{bmatrix}
0 & 2 & 1 & 1 & 2 & 0 & 1\\
1 & 1 & 2 & 0 & 0 & 1 & 1\\
2 & 0 & 0 & 2 & 1 & 1 & 1
% 0 & 4 & 1 & 1 & 4 & 0 & 1\\
% 1 & 1 & 4 & 0 & 0 & 4 & 1\\
% 4 & 0 & 0 & 4 & 1 & 1 & 3
\end{bmatrix}
$$
\end{proof}

We know that because of the results in Section \ref{sec:binary_matroid}, Corollary \ref{cor:representable_counterexample} is false for $\F_2$. We leave it as an open question whether there is an online matroid embedding for class of $\F_3$ and $\F_5$-representable matroids.

\section{Approximate Embeddings}
\label{sec:conclusion}

Like metric embeddings, it is useful to extend the notion of matroid embedding to allow \emph{distortion}. We define an (offline) $\beta$-approximate embedding 
$f:\M \rightarrow \N$ as a map between ground sets that approximately preserves rank:
$$\frac{1}{\beta} \cdot \rank_\M(S) \leq \rank_\N(f(S)) \leq \rank_\M(S), \forall S \subseteq \M$$
We also define a randomized $\beta$-approximate matroid embedding as a family of functions $f_r:\M \rightarrow \N_r$ indexed by a random variable $r \sim R$ such:
\begin{equation}\label{eq:approx_1}
 \rank_{\N_r}(f_r(S)) \leq \rank_\M(S) \text{ a.s. } \forall S \subseteq \M
 \end{equation}
\begin{equation}\label{eq:approx_2}
 \E[\rank_{\N_r}(f_r(S))] \geq \frac{1}{\beta} \rank_\M(S), \forall S \subseteq \M
 \end{equation}
Their online counterparts can be defined in the natural way: given a class of matroids $\C$ and host matroid $\BigM$ then an online $\beta$-approximate randomized embedding is a family of functions: $f_{\M, \pi, r}:\M \rightarrow \BigM$ defined for each $\M \in \C$, $\pi:\M \rightarrow [n]$ and $r \in R$ such that: (i) it is a randomized $\beta$-approximate matroid embedding for each fixed $\M, \pi$; (ii) satisfied the prefix-restriction property defined in Section \ref{sec:OMEs}.

For example, if we can design a randomized order-independent $\beta$-approximate embedding for class $\C$ into a matroid $\BigM$ and there is a known $\alpha$-competitive algorithm for the known-matroid MSP on $\BigM$, then we can use the reduction in Section \ref{sec:reduction} to convert it into an $(\alpha \beta)$-competitive algorithm for the online-revealed-matroid MSP on class $\C$.\\

It is worth noting that every loop-free matroid $\M$  of rank $n$ admits a trivial $n$-approximate embedding into the free matroid $\Free_1 := U_{1,1}$. In the next paragraph, we observe below that we can't obtain better than $n$ for any graphic host matroid. This is related to the notion called \emph{partition property}.

\paragraph{Offline Embedding and the Partition Property} The notion of the $\alpha$-partition property was defined by Abdolazimi et al \cite{AbdolazimiKKG23} to generalize a property exploited by Korula-P\'{a}l \cite{korula2009algorithms} in their algorithm for the graphic matroid secretary problem. Translating it to our notation, we say that a matroid $\M$ satisfies the $\alpha$-partition property if there is an (offline) $\alpha$-approximate randomized embedding into a free matroid (i.e., a matroid where every non-empty subset is independent). The name partition property comes from the fact that partition matroids are the class of matroids that admit an (exact) morphism to a free matroid. Their respective lower and upper bounds translate to the following results: 

\begin{theorem}[Korula-Pal \cite{korula2009algorithms}] If $\M$ is a graphic matroid, then it admits a $2$-approximate randomized embedding into a free matroid.
\end{theorem}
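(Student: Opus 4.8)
The plan is to exhibit a randomized map from a graphic matroid $\M$ to a free matroid that, in expectation, preserves at least half the rank of every subset. I would follow the original Korula--P\'{a}l decomposition idea, translated into the language of approximate embeddings. Fix a graph $G=(V,E)$ representing $\M$, and recall that a spanning forest of $G$ has one edge per non-root vertex of each connected component. The key combinatorial fact to use is that one can orient (a spanning forest of) $G$ so that every edge points toward a distinct endpoint, and that for a uniformly random such structure each edge ``survives'' with probability close to $1/2$.

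Concretely, here are the steps I would carry out. First, pick a uniformly random permutation $\sigma$ of $V$ (or equivalently a uniformly random total order on the vertices), and orient each edge $\{x,y\}\in E$ from the $\sigma$-larger endpoint to the $\sigma$-smaller endpoint. This assigns to each edge $e$ a ``head'' $h_\sigma(e)\in V$. Define the host free matroid $\N_\sigma=\Free_V$ to have ground set $V$ (every subset independent), and set $f_\sigma(e)=h_\sigma(e)$. Second, verify the one-sided bound \eqref{eq:approx_1}: for any $S\subseteq E$, the rank of $f_\sigma(S)$ in $\N_\sigma$ is $|f_\sigma(S)|=|\{h_\sigma(e):e\in S\}|$, and I claim this is at most $\rank_\M(S)$, i.e.\ at most the number of edges in a spanning forest of the subgraph $(V,S)$; this holds because distinct heads correspond to vertices that are non-minimal in their component of $(V,S)$ under $\sigma$, and in each connected component the $\sigma$-minimal vertex is never a head, so the number of heads appearing among $S$ is at most $(\text{\# vertices touched by }S) - (\text{\# components of }(V,S)\text{ touching an edge})=\rank_\M(S)$. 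Third, prove the expectation bound \eqref{eq:approx_2}: fix $S$ and a spanning forest $F\subseteq S$ with $|F|=\rank_\M(S)$; for each edge $e=\{x,y\}\in F$, conditioned only on the relative order of $x$ and $y$, the head $h_\sigma(e)$ is $x$ with probability $1/2$ and $y$ with probability $1/2$, and I want to say that with probability at least $1/2$ the edge $e$ contributes a head not already ``claimed'' by another edge of $F$. The cleanest route is to root each tree of $F$ at its $\sigma$-minimal vertex; then each edge of $F$ has a well-defined parent endpoint and child endpoint, and each non-root vertex is the child endpoint of exactly one edge of $F$, so the map $e\mapsto(\text{child endpoint of }e)$ is injective on $F$; hence $|f_\sigma(F)|$ would be exactly $|F|$ if heads always equalled child endpoints. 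The head equals the child endpoint exactly when $\sigma$ orders the child below the parent — but that is not automatic since $\sigma$ is a global order, not the tree order. So instead I would argue directly: for each vertex $v$, let $N(v)$ be its neighbors via $F$; $v$ is a head of some edge of $F$ iff $v$ is not the $\sigma$-minimum of $\{v\}\cup N(v)$, which happens with probability $1-\frac{1}{1+\deg_F(v)}=\frac{\deg_F(v)}{\deg_F(v)+1}\ge \frac12$ whenever $\deg_F(v)\ge 1$; summing, $\E[|f_\sigma(F)|]=\sum_{v:\deg_F(v)\ge1}\frac{\deg_F(v)}{\deg_F(v)+1}\ge\frac12\sum_v \deg_F(v)\,\one\{\deg_F(v)\ge1\}\cdot\frac{1}{\deg_F(v)}$ — wait, that overcounts, so I'd instead bound $\E[|f_\sigma(F)|]\ge \sum_{v:\deg_F(v)\ge1}\frac{\deg_F(v)}{\deg_F(v)+1}\ge\frac{|V(F)|}{2}$ and relate $|V(F)|$ to $|F|$ using that $F$ is a forest, giving $|V(F)|\ge |F|+1\ge |F|$ per tree, hence $\E[\rank_{\N_\sigma}(f_\sigma(S))]\ge\E[|f_\sigma(F)|]\ge\frac12|F|=\frac{1}{2}\rank_\M(S)$.

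The main obstacle I anticipate is getting the expectation bound with the right constant cleanly, because the naive ``each edge keeps its head w.p.\ $1/2$'' reasoning double-counts collisions between edges sharing a vertex; the fix above is to count per-vertex rather than per-edge, using that a vertex is a head iff it is not the local $\sigma$-minimum among its $F$-neighborhood, which decouples the events enough to sum. A secondary point requiring care is making the construction \emph{online} (and order-independent) so that the reduction of Section~\ref{sec:reduction} applies: when an edge arrives, the algorithm only knows the matroid structure on arrived edges, not the vertex identities, so I cannot literally commit to a vertex label. The resolution is to sample, for each arriving edge, a fresh uniform real label and let $f$ send the edge to a copy of a host element indexed consistently with the connectivity structure observed so far — but in fact Korula--P\'{a}l's guarantee is stated here only as an offline claim, so for this particular Theorem it suffices to present the offline randomized embedding above and invoke it as the translation of their result; I would remark that order-independence can be recovered because the distribution of $f_\sigma$ depends only on $\M$, not on the arrival order.
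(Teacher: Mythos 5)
Your construction is correct, and it is worth noting that the paper itself offers no proof of this statement --- it is imported verbatim from Korula--P\'{a}l, so the only in-paper point of comparison is the remark in Section~\ref{sec:conclusion} describing their embedding as a mixture of two deterministic ``basis-indexed'' maps (the smallest-index and largest-index variants of Lemma~\ref{lemma:approx-embedding-property-1} with the basis taken to be a path). Your route is a clean single-randomization alternative: one uniformly random vertex order, each edge mapped to one designated endpoint, with the rank upper bound coming from the ``one non-head vertex per component'' count and the lower bound from per-vertex linearity of expectation ($v$ is a head iff it is not the local extremum of its closed $F$-neighborhood, probability $\deg_F(v)/(\deg_F(v)+1)\ge 1/2$), followed by $|V(F)|>|F|$ for a forest. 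This decoupling by vertices rather than edges is exactly the right fix for the collision double-counting you flag, and the whole argument goes through. One small inconsistency to repair: you orient each edge from the $\sigma$-larger endpoint to the $\sigma$-smaller one, which makes the head the \emph{smaller} endpoint, yet both your rank upper bound (``the $\sigma$-minimal vertex is never a head'') and your expectation bound (``$v$ is a head iff it is not the $\sigma$-minimum of $\{v\}\cup N(v)$'') assume the head is the \emph{larger} endpoint; pick one convention (the argument is symmetric under swapping min and max) and use it throughout. The digression about rooting trees at their $\sigma$-minimal vertices is a dead end you correctly abandon and could simply delete, and the closing discussion of onlineness is not needed for this (offline) statement, as you note.
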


\begin{theorem}[Abdolazimi et al.~\cite{AbdolazimiKKG23}, Dughmi et al.~\cite{DughmiKP24}] If $\M$ is the complete binary matroid of rank $n$  and if it admits an $\beta$-approximate randomized embedding into a free matroid, then $\beta \geq \Omega(n / \log n)$. Moreover, there exist a linear matroid $\M$ of rank $n$ such that $\beta$-approximate randomized embedding into a free matroid only exist for $\beta \geq \Omega(n)$.
\end{theorem}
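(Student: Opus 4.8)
The plan is to first unpack what a $\beta$-approximate randomized embedding into a \emph{free} matroid actually is. Since $\rank_{\Free_m}(T)=\abs{T}$ for every $T\subseteq\Free_m$, a map $f_r:\M\to\Free_{m_r}$ is nothing but a coloring of the non-loop elements of $\M$, and $\rank_{\Free}(f_r(S))$ is just the number of distinct colors occurring on $S$; write this as $\kappa_r(S)$. Condition~(\ref{eq:approx_1}) says $\kappa_r(S)\le\rank_\M(S)$ for all $S$, which is precisely the statement that every \emph{partial transversal} of the color classes — a set meeting each class in at most one element — is independent in $\M$; call such a coloring \emph{valid}. Condition~(\ref{eq:approx_2}) then reads $\E_r[\kappa_r(S)]\ge\rank_\M(S)/\beta$ for every $S$. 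The one combinatorial fact I would record immediately: a valid coloring of $\F_2^n$ uses at most $n$ colors (one representative per class is a partial transversal, hence independent, hence of size $\le n$), so the $2^n-1$ nonzero vectors split into $\le n$ classes, the largest has $\ge (2^n-1)/n$ elements, and therefore spans a subspace of dimension $\ge n-\lceil\log_2 n\rceil$.

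To obtain the lower bound I would test condition~(\ref{eq:approx_2}) against a random witness set, exploiting that it must hold for \emph{all} $S$ simultaneously. Let $B$ be a uniformly random basis of $\F_2^n$, so $\rank_\M(B)=n$ deterministically; averaging~(\ref{eq:approx_2}) over $B$ gives
\[
\frac{n}{\beta}\;\le\;\E_B\E_r[\kappa_r(B)]\;=\;\E_r\,\E_B[\kappa_r(B)],
\]
and it suffices to show that \emph{every} valid coloring $c$ of $\F_2^n$ has $\E_B[\kappa_c(B)]=O(\log n)$, which forces $\beta=\Omega(n/\log n)$. To estimate $\E_B[\kappa_c(B)]$ I would generate $B$ greedily, $b_{t+1}$ uniform outside $V_t:=\langle b_1,\dots,b_t\rangle$, so that $\E_B[\kappa_c(B)]=\sum_{t=0}^{n-1}q_t$ where $q_t$ is the chance that $b_{t+1}$ gets a color not seen among $b_1,\dots,b_t$. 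The heart of the matter is that validity makes the already-seen colors ``absorbing'': if $b_1,\dots,b_t$ display $s$ colors they contain an $s$-element independent transversal $I$, and for \emph{every} choice of representatives of the remaining colors the union with $I$ must stay independent; this rigidity (whose extreme case is the echelon/``staircase'' valid coloring, whose classes have sizes $1,2,4,\dots,2^{n-1}$) confines the vectors of a not-yet-seen color to a coset-type set of exponentially shrinking relative size once $t\gtrsim n-\log_2 n$, so $q_t$ decays geometrically and $\sum_t q_t=O(\log n)$. Tightness is certified by the construction of Theorem~\ref{thm:upper_bound_binary_partition} (a block-staircase coloring relative to a random basis, with blocks of length $\Theta(\log n)$).

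The statement about a rank-$n$ \emph{linear} matroid needing $\beta=\Omega(n)$ is the same argument over a larger alphabet: take $\M$ to be the complete $\F$-representable (projective-geometry) matroid of rank $n$ over a field $\F$ with $\abs{\F}\ge n$. A valid coloring still uses $\le n$ colors, a class of rank $r$ now has at most about $\abs{\F}^{r-1}$ points so the largest class spans dimension $\ge n-1$, and — rerunning the greedy-random-basis estimate over $\F$ — the ``geometric tail'' governing where a random basis vector's leading coordinate lands has ratio $1/\abs{\F}\le 1/n$ instead of $1/2$, so a uniformly random basis of $\F^n$ meets only $O(1)$ color classes of any valid coloring in expectation; linearity of expectation then gives $\beta=\Omega(n)$.

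I expect the decisive and least routine step to be the structural claim that a uniformly random basis of $\F_2^n$ meets only $O(\log n)$ classes of an \emph{arbitrary} valid coloring — i.e.\ that ``$\le n$ classes with all transversals independent'' pins the classes into a nested, echelon-like shape in which all but the top $\Theta(\log n)$ ``levels'' are exponentially unlikely to be hit. Everything else (the reformulation, the $\le n$-colors fact, the span pigeonhole, porting the argument to large fields, and the linearity-of-expectation wrap-up) is mechanical once that lemma is in hand. If the uniform-basis witness turns out to be awkward to analyze directly, one can fall back on the minimax identity
\[
\min_{\text{embedding}}\beta\;=\;\sup_{\nu}\;\frac{\E_{S\sim\nu}[\rank_\M(S)]}{\displaystyle\max_{c\text{ valid}}\E_{S\sim\nu}[\kappa_c(S)]},
\]
where $\nu$ ranges over distributions on subsets (this follows from Sion's theorem, the objective being convex in the embedding and linear-fractional in $\nu$), which guarantees that \emph{some} witness distribution certifies the $\Omega(n/\log n)$ bound; finding the right $\nu$ is then the real crux.
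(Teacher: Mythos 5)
This theorem is not proved in the paper at all: it is imported verbatim (translated into the paper's notation) from Abdolazimi et al.\ and Dughmi et al., so there is no in-paper argument to compare against. Judged on its own, your write-up gets the framing right --- the reformulation of a randomized embedding into $\Free_m$ as a \emph{valid coloring} (every partial transversal independent), the observation that $m\le n$, and the choice of witness (average condition~(\ref{eq:approx_2}) over a uniformly random basis $B$, so that it suffices to show $\E_B[\kappa_c(B)]=O(\log n)$ for \emph{every} valid coloring $c$). This is indeed the architecture of the known proofs. But the entire mathematical content of the theorem lives in that one lemma, and your sketch of it does not hold up. You write that validity confines the unseen colors "to a coset-type set of exponentially shrinking relative size once $t\gtrsim n-\log_2 n$, so $q_t$ decays geometrically and $\sum_t q_t=O(\log n)$." Controlling $q_t$ only for $t\gtrsim n-\log_2 n$ is vacuous: those last $O(\log n)$ steps contribute $O(\log n)$ to $\sum_t q_t$ even with the trivial bound $q_t\le 1$, while the first $n-\log_2 n$ steps, left uncontrolled, could each contribute a constant and make the sum $\Omega(n)$. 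What is actually needed (and what holds for the echelon coloring) is $q_t\approx 1/t$ for essentially \emph{all} $t$, i.e.\ a harmonic rather than geometric tail, and this rests on a genuine structural theorem about valid colorings --- roughly, that at most $k$ classes can have size exceeding $2^{n-k}$, so that after the trivial union bound only $O(\log n)$ "large" classes can each be hit with constant probability. You assert that validity "pins the classes into a nested, echelon-like shape" but give no argument; the constraint is genuinely subtle (pairs of classes are unconstrained, constraints only appear for triples and beyond, e.g.\ $C_3\cap(C_1+C_2)=\emptyset$), and ruling out, say, many parallel cosets of a common subspace as classes requires real work. This is the crux of the cited result and is missing.

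The second claim ($\beta=\Omega(n)$ for some linear matroid) has the same status: the complete rank-$n$ matroid over a field with $|\F|\ge n$ is a plausible candidate and your heuristic ($\log_{|\F|} n=O(1)$) explains why the echelon coloring only buys $O(1)$ there, but again the bound must hold against \emph{every} valid coloring, which is exactly the unproven structural step over a larger field. Finally, the Sion/minimax "fallback" does not rescue the argument: it only asserts that a certifying distribution $\nu$ exists \emph{if} the lower bound is true, which is circular as a proof strategy. In short: correct reduction and correct witness, but the key lemma --- the actual theorem --- is unproved, and the specific decay mechanism you propose for it is quantitatively wrong.
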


Since composing an $\beta$-approximate embedding with a $\alpha$-approximate embedding we obtain an $(\alpha \beta)$-approximate embedding, we can strenghten the lower bound to also allow for embedding into graphic matroids.

\begin{corollary}\label{cor:graphic_distortion}
If $\M$ is the complete binary matroid of rank $n$  and if it admits an $\beta$-approximate randomized embedding into a graphic matroid, then $\beta \geq \Omega(n / \log n)$. Moreover, if there is a linear matroid $\M$ such that it admits a $\beta$-approximate randomized embedding into a graphic matroid, then $\beta = \Omega(n)$.
\end{corollary}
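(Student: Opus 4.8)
The plan is to obtain the corollary from the two lower bounds quoted above by a \emph{composition} argument, exactly as signalled in the paragraph preceding the statement: a $\beta$-approximate randomized embedding followed by a $2$-approximate randomized embedding is a $(2\beta)$-approximate randomized embedding, and by the Korula--P\'al theorem every graphic matroid admits a $2$-approximate randomized embedding into a free matroid. So the whole content is a short lemma on composing approximate embeddings, plus two applications.

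First I would record the composition step. Suppose $\M$ admits a $\beta$-approximate randomized embedding $\{f_r : \M \to \N_r\}_{r \sim R}$ in which every $\N_r$ is graphic, and for each such $\N_r$ let $\{g_{r,s} : \N_r \to \mathbf{F}_{r,s}\}_{s}$ be the $2$-approximate randomized embedding into a free matroid $\mathbf{F}_{r,s}$ guaranteed by Korula--P\'al. I would sample $r \sim R$, then sample the inner randomness $s$, and set $h_{r,s} = g_{r,s} \circ f_r : \M \to \mathbf{F}_{r,s}$, and then argue that $\{h_{r,s}\}$ is a $(2\beta)$-approximate randomized embedding into a free matroid. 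Condition (\ref{eq:approx_1}) holds because almost surely, for every $S \subseteq \M$,
\[
\rank_{\mathbf{F}_{r,s}}(h_{r,s}(S)) \;\le\; \rank_{\N_r}(f_r(S)) \;\le\; \rank_\M(S).
\]
For (\ref{eq:approx_2}) I would use the tower rule: conditioning on $r$, the inner guarantee gives $\E_s[\rank_{\mathbf{F}_{r,s}}(h_{r,s}(S))] \ge \tfrac12 \rank_{\N_r}(f_r(S))$ pointwise in $r$, so taking expectation over $r$ and then invoking the outer guarantee,
\[
\E_{r,s}\!\left[\rank_{\mathbf{F}_{r,s}}(h_{r,s}(S))\right] \;\ge\; \tfrac12\, \E_r\!\left[\rank_{\N_r}(f_r(S))\right] \;\ge\; \tfrac{1}{2\beta}\, \rank_\M(S).
\]

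With the composition step in hand, the corollary would follow immediately. For the first claim, take $\M = \F_2^n$; if it admits a $\beta$-approximate randomized embedding into a graphic matroid, the composition above produces a $(2\beta)$-approximate randomized embedding of $\F_2^n$ into a free matroid, so the Abdolazimi et al.\ / Dughmi et al.\ bound forces $2\beta \ge \Omega(n/\log n)$, i.e.\ $\beta \ge \Omega(n/\log n)$. For the ``moreover'' part, I would take $\M$ to be the rank-$n$ linear matroid witnessing the $\Omega(n)$ lower bound in the same theorem; the identical argument gives $2\beta \ge \Omega(n)$, hence $\beta = \Omega(n)$. The only non-bookkeeping point — and the one I expect to be the main subtlety — is the two-stage randomness: the inner $2$-approximate embedding must be available for \emph{every} graphic matroid that can occur as some $\N_r$, which is precisely why we need the Korula--P\'al theorem in its ``for all graphic matroids'' form rather than just in expectation over $r$; and the inner guarantee (\ref{eq:approx_2}) must be applied conditionally on $r$ before averaging, so that the factors $\tfrac12$ and $\tfrac1\beta$ multiply cleanly rather than getting entangled. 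Everything else is routine.
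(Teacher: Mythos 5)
Your proposal is correct and follows exactly the route the paper intends: the paper's entire justification for Corollary~\ref{cor:graphic_distortion} is the one-sentence remark preceding it that composing a $\beta$-approximate embedding with an $\alpha$-approximate embedding yields an $(\alpha\beta)$-approximate embedding, combined with the Korula--P\'al and Abdolazimi et al./Dughmi et al.\ theorems. Your write-up simply fills in the details of that composition (the a.s.\ chain for condition~(\ref{eq:approx_1}) and the tower-rule argument for condition~(\ref{eq:approx_2})), which is precisely what the paper leaves implicit.
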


As we discussed, constructing an $n$-approximate embedding is trivial. For binary matroids, we complement the lower bound of Dughmi et al.~\cite{DughmiKP24} with an algorithm to construct a $O(n / \log n)$ embedding into the free matroid. Moreover, this embedding can be computed online.

\paragraph{Matching Upper Bound that is Also Online Computable}

We now construct a randomized approximate embedding from a binary matroid to $\Free_n$. Using Theorem \ref{thm:omm_binary} we can identify every received element $x \in \M$ with a vector in $\F_2^n \setminus \{0\}$. To map $x$ to a partition matroid, we sample a random basis $b_1, \hdots, b_n$ of $\F_2^n$ and define the function $f_b:\F_2^n \setminus \{0\} \rightarrow [n]$ that maps each $x \in \F_2^n$ to the smallest index $i$ such that $b_i$ is in the unique circuit in $\{x, b_1,\hdots, b_n\}$.

We first show that this embedding doesn't increase the rank (first property of an approximate embedding in equation \eqref{eq:approx_1}):

\begin{lemma}\label{lemma:approx-embedding-property-1} Given any loop-free matroid $\M$ of rank $n$ and a basis $b_1,\hdots, b_n$ of $\M$, let $f(x)$ be the smallest index $i$ such that $b_i$ is in the unique circuit formed by $\{x, b_1, \hdots, b_n\}$.  The function $f_b : \M \rightarrow \Free_n$ is such that $\rank_{\Free}(f_b(S)) \leq \rank_{\M}(S), \forall S \subseteq \M$.
\end{lemma}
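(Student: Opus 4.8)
The plan is to show that for every subset $S \subseteq \M$ we have $\rank_{\Free}(f_b(S)) = |f_b(S)| \le \rank_\M(S)$, i.e., that the number of distinct indices hit by $S$ under $f_b$ never exceeds the rank of $S$ in $\M$. The natural route is to exhibit, for any $S$, an injection from $f_b(S)$ into a maximal independent subset of $S$ (or more precisely to show $|f_b(S)|$ is at most the size of such a set). First I would fix the basis $b_1,\dots,b_n$ and recall that since $\M$ is loop-free of rank $n$ and the $b_i$ form a basis, every element $x \in \M$ lies in a \emph{unique} circuit $C(x) \subseteq \{x,b_1,\dots,b_n\}$, and $f_b(x) = \min\{ i : b_i \in C(x)\}$; equivalently $f_b(x)$ is the smallest index $i$ such that $x \in \span_\M(b_1,\dots,b_i) \setminus \span_\M(b_1,\dots,b_{i-1})$. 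This ``first time $x$ enters the span'' reformulation is the key object.

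Given this reformulation, the core claim is: if $J = f_b(S) = \{i_1 < i_2 < \cdots < i_m\}$, then $\rank_\M(S) \ge m$. To see this, pick for each index $i_j \in J$ some element $x_j \in S$ with $f_b(x_j) = i_j$; by the reformulation, $x_j \in \span_\M(b_1,\dots,b_{i_j})$ but $x_j \notin \span_\M(b_1,\dots,b_{i_j-1})$. I would then argue that $\{x_1,\dots,x_m\}$ is independent in $\M$. Order them by increasing index; for each $j$, $x_j \notin \span_\M(b_1,\dots,b_{i_j - 1}) \supseteq \span_\M(b_1,\dots,b_{i_{j-1}}) \supseteq \span_\M(x_1,\dots,x_{j-1})$, where the last containment holds because $x_1,\dots,x_{j-1}$ all lie in $\span_\M(b_1,\dots,b_{i_{j-1}})$. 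Hence $x_j$ is not in the span of the earlier ones, so by the usual characterization of independence via the rank/closure operator, $\{x_1,\dots,x_m\}$ is an independent set of size $m$ contained in $S$, giving $\rank_\M(S) \ge m = |f_b(S)| = \rank_{\Free}(f_b(S))$.

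The main obstacle — really the only subtle point — is establishing the ``first entry into the span'' characterization of $f_b$ and making sure it is consistent with the stated definition via the unique circuit. I would verify that $b_i$ appears in the unique circuit $C(x)$ of $\{x,b_1,\dots,b_n\}$ precisely when $x \in \span_\M(b_1,\dots,b_n)$ forces a dependence using $b_i$, and that the \emph{smallest} such $i$ is exactly the first index at which $x$ falls into $\span_\M(b_1,\dots,b_i)$; this is a standard fact about circuits relative to a basis (the fundamental circuit), but it must be stated carefully since $\M$ is an abstract matroid, not a representable one. Everything after that is the short monotonicity/independence argument above, with no computation.
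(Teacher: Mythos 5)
Your overall strategy is the same as the paper's: pick one representative in $S$ for each index in the image $f_b(S)$ and show that these representatives form an independent set via a span--nesting argument. However, your key reformulation of $f_b$ is wrong, and you explicitly identified it as the only subtle point. You claim that the smallest index $i$ with $b_i$ in the fundamental circuit $C(x)$ equals the smallest $i$ with $x \in \span_\M(b_1,\dots,b_i)$. In fact the latter equals the \emph{largest} index of $C(x)\setminus\{x\}$, not the smallest: writing $C(x) = \{x\} \cup \{b_i : i \in I\}$, one has $x \in \span_\M(\{b_i : i \in J\})$ for $J \subseteq [n]$ exactly when $I \subseteq J$, so the first prefix $\{b_1,\dots,b_i\}$ whose span captures $x$ is the one with $i = \max I$. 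Concretely, in $\F_2^2$ with $b_1 = e_1$, $b_2 = e_2$ and $x = e_1 + e_2$, the fundamental circuit is $\{x,b_1,b_2\}$, so $f_b(x) = 1$, whereas the smallest $i$ with $x \in \span_\M(b_1,\dots,b_i)$ is $2$. The correct characterization, which the paper uses, is via suffix spans: $f_b(x) = j$ iff $x \in \span_\M(\{b_j,\dots,b_n\}) \setminus \span_\M(\{b_{j+1},\dots,b_n\})$.

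The good news is that your independence argument survives the fix essentially unchanged, with the nesting reversed. With $J = f_b(S) = \{i_1 < \dots < i_m\}$ and representatives $x_j \in S$ with $f_b(x_j) = i_j$, process them in \emph{decreasing} order of $j$: each $x_l$ with $l > j$ lies in $\span_\M(\{b_{i_l},\dots,b_n\}) \subseteq \span_\M(\{b_{i_j+1},\dots,b_n\})$, while $x_j \notin \span_\M(\{b_{i_j+1},\dots,b_n\})$, so $x_j$ is not spanned by the later representatives and $\{x_1,\dots,x_m\}$ is independent, giving $\rank_\M(S) \ge m = \rank_{\Free}(f_b(S))$. (Alternatively, you could salvage your prefix version by observing that the lemma's statement is invariant under reversing the order of the basis, so proving it for the max-index map proves it for the min-index map; but as written you assert a false equivalence rather than invoke this symmetry.) After this correction your proof coincides with the paper's, which carries out the same representative-plus-exchange argument using the suffix-span characterization.
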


\begin{proof} Given $S$, choose $S' \subseteq S$ be such that $\rank_{\Free}(f_b(S)) = \rank_{\Free}(f_b(S')) = \abs{S'}$. We will show that $S'$ is independent in $\M$ and hence $\abs{S'} = \rank_\M(S') \leq \rank_\M(S)$.

To show that, observe that for any element $x \in \M$, $f_b(x) = j$ iff $x \in \span_\M(\{b_j, b_{j+1}, \hdots, b_n\}) \setminus \span_\M(\{b_{j+1}, \hdots, b_n\})$. If that is the case, then: $\span_\M(\{x,b_{j+1}, \hdots, b_n\}) = \span_\M(\{b_j, b_{j+1}, \hdots, b_n\})$.

Now, construct $x_1, \hdots, x_n$ such that $x_j$ is the element in $S'$ that maps to $j$ if there is such element and $x_j = b_j$ otherwise. Then $f_b(x_j) = j$. We can show by induction that $\span_\M(\{x_j, \hdots, x_n\}) = \span_\M(\{b_j, \hdots, b_n\})$, by replacing elements of $b_1,\hdots, b_n$ one by one by their corresponding $x_j$ element and observing that the spans are preserved. In particular, $\{x_1, \hdots, x_n\}$ must be independent since they span the entire matroid. As a consequence the original set $S'$ is independent.
\end{proof}

We now show a lower bound on the expected rank of $f_b(S)$ which will lead to the second property of an approximate embedding:

% \textcolor{orange}{[PD: Do we want to add a sentence here, analogous to the sentence before the previous lemma? I guess it's not exactly property (2), but almost.]}
 
\begin{lemma}\label{lemma:approx-embedding-property-2} For the complete binary matroid $\F_2^n$, the embedding $f_b$ for a random basis $b_1,\hdots, b_n$ of $\F_2^n$ satisfies: 
 $$\E[\rank_{\Free}(f_b(S))] \geq (1-1/e) \log_2(\rank_\M(S)) - O(1), \forall S \subseteq [n]$$
\end{lemma}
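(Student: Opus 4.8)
## Proof proposal for Lemma \ref{lemma:approx-embedding-property-2}

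The plan is to lower bound $\E[\rank_{\Free}(f_b(S))]$ by counting, in expectation, how many distinct indices in $[n]$ receive at least one element of $S$ under $f_b$. Recall that by the analysis in Lemma \ref{lemma:approx-embedding-property-1}, $f_b(x) = j$ precisely when $x \in \span_{\M}(\{b_j, \dots, b_n\}) \setminus \span_{\M}(\{b_{j+1}, \dots, b_n\})$; equivalently, reversing the order of the basis, each element $x$ is assigned the ``last'' basis vector needed to span it. The rank of $f_b(S)$ in the free matroid is just the number of distinct values taken by $f_b$ on $S$, i.e. $|f_b(S)|$. So I want to show $\E[|f_b(S)|] \geq (1-1/e)\log_2(\rank_{\M}(S)) - O(1)$.

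First I would reduce to the case where $S$ is independent in $\F_2^n$ of size $r = \rank_{\M}(S)$: replacing $S$ by a basis of its span only decreases $f_b(S)$, so a lower bound for independent sets suffices. (Actually one must be slightly careful since $f_b$ depends on all of $S$, not its span — but $|f_b(S')| \le |f_b(S)|$ for $S' \subseteq S$, so taking $S'$ a maximal independent subset is fine.) Now fix an independent set $S = \{x_1, \dots, x_r\} \subseteq \F_2^n$. Sampling a uniformly random ordered basis $b_1, \dots, b_n$ and then reading it in reverse, the key observation is that $f_b(x)$ is determined by the smallest prefix $\{b_n, b_{n-1}, \dots\}$ (in reverse order) whose span contains $x$. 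I would think of building the random basis in reverse: $b_n$ uniform among nonzero vectors, then $b_{n-1}$ uniform outside $\span(b_n)$, and so on, and track which elements of $S$ get ``captured'' at each step. The number of indices hit is the number of steps $j$ at which the newly added basis vector captures at least one previously-uncaptured element of $S$.

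The main technical step — and the main obstacle — is to show that this capture process hits $\Omega(\log r)$ distinct indices in expectation, with the $(1-1/e)$ constant. Here is the idea I would pursue: consider the subspace $W = \span(S)$, of dimension $r$. Restrict attention to the ``trace'' of the random basis on $W$: at each reverse step, the accumulated span intersected with $W$ is some subspace $W_j \subseteq W$, and these form an increasing chain $0 = W_0 \subseteq W_1 \subseteq \cdots \subseteq W$ with $\dim W_{j} - \dim W_{j-1} \in \{0,1\}$. An index $j$ is ``hit by $S$'' whenever $\dim W_j > \dim W_{j-1}$ \emph{and} $W_j \setminus W_{j-1}$ contains an element of $S$ — but in fact since $S$ spans $W$, every time the dimension jumps there is a decent chance some $x_i$ lies in the new coset. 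More cleanly: the dimension of $W_j$ must increase from $0$ to $r$, so there are exactly $r$ ``increment steps''; I need to show that a constant fraction (in a $\log$ sense) of these increments actually land an element of $S$ in the new part of the span. The cleanest route is probably: whenever $\dim(W_j) = t$ has just increased, the probability (over the continued randomness) that the \emph{next} increment captures a fresh element of $S$ is at least $1 - (\text{fraction of } W/W_j\text{-directions avoiding all of } S)$; since $S$ still spans a complement of dimension $r - t$, a union bound / inclusion argument shows this probability is at least something like $1 - (1 - 2^{-(r-t)})$ ... which is too weak. The right bound is to note that among the $2^{r-t}-1$ nonzero cosets directions, at least one contains an $x_i$, giving probability $\geq \frac{1}{2^{r-t}-1}$ of a capture at the very next increment — still too weak directly, but summing a ``coupon-collector on dimension'' argument across the $\log r$ scales where $r - t$ ranges over dyadic values gives the $(1-1/e)\log_2 r$ bound after accounting for the $1/e$ loss from the event that a given scale fails to produce a hit.

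So concretely the steps are: (1) reduce to $S$ independent, $|S| = r$; (2) set up the reverse-basis capture process and express $\rank_{\Free}(f_b(S))$ as the number of ``hit'' indices; (3) partition the $r$ dimension-increments into $\Theta(\log r)$ dyadic blocks according to the remaining codimension $r - t$; (4) show that within each block, with probability $\geq 1 - 1/e$ at least one increment captures a fresh element of $S$, using that $S$ spans and a geometric-series bound on the capture probabilities; (5) conclude by linearity of expectation over the $\Theta(\log r)$ blocks, absorbing lower-order terms into the $O(1)$. The delicate part is step (4): making the independence between blocks work out (the process is Markov, so I'd condition on the state entering each block) and getting the constant right rather than an unspecified constant — this is where I expect to spend most of the effort, and where a slightly different bookkeeping (e.g. tracking $\log_2$ of the codimension as it halves) may be needed to nail the $(1-1/e)$ factor claimed in the statement.
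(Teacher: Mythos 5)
Your setup is sound: reducing to a maximal independent subset $S'$ with $|S'|=r=\rank_\M(S)$, and rewriting $\rank_{\Free}(f_b(S'))$ as the number of levels of the random flag on $\span(S')$ whose new coset contains an element of $S'$, is exactly the right combinatorial picture. But the quantitative core of your argument---step (4)---is missing, and the route you sketch does not close it. Your per-increment capture bound $\geq 1/(2^{r-t}-1)$ is exponentially small in the codimension $r-t$; over a dyadic block of about $2^k$ increments at codimension $\approx 2^k$ this sums to roughly $2^k\cdot 2^{-2^k}$, nowhere near the $1-1/e$ per block you need. No bookkeeping of dyadic scales rescues this, because the true structure of the bound is per \emph{level}, not per block: the level at codimension $j$ is hit with probability about $1-(1-2^{-j})^r$, which exceeds $1-1/e$ for every $j\le\log_2 r$, and summing over just these $\log_2 r$ levels already gives the claim. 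To compute that per-level probability you must control the joint distribution of the elements of $S'$ relative to the random flag, and this is the idea your proposal lacks.

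The paper obtains it by a change of basis: an automorphism of $\F_2^n$ carries the random basis $b_1,\dots,b_n$ to the standard basis $e_1,\dots,e_n$ and carries $S'$ to $\{b_1,\dots,b_r\}$, so the quantity to bound becomes the expected number of distinct values of ``position of the first nonzero coordinate'' among $r$ random basis vectors. Since a uniform vector has its first nonzero coordinate at position $j$ with probability $2^{-j}$, this expectation is $\sum_{j=1}^{n}\bigl(1-(1-2^{-j})^r\bigr)\ge(1-1/e)\log_2 r$ for i.i.d.\ uniform vectors. A secondary point you would also have to handle is that the $b_i$ are not i.i.d.\ (they are conditioned to be linearly independent); the paper couples them to i.i.d.\ uniform vectors via resampling, showing the first $n/2$ of them agree with the i.i.d.\ sequence except with probability $2^{-n/2}$, which is why it truncates $r$ at $n/2$ and absorbs the discrepancy into the $O(1)$ term. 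Without the change-of-basis step (or an equivalent exact computation of the flag/set joint distribution), your proof does not reach the stated constant, or indeed any nontrivial constant times $\log_2 r$.
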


\begin{proof} 

\emph{Step 1: Sampling the basis}
We consider the following procedure for sampling an uniformly random basis $b_1, \hdots, b_n$ of $\F_2^n$. For each $i$ we sample an uniform independent vector from $\F_2^n$ and call it $b'_i$. Now, if $b'_i$ is not in $\span(b_1,\hdots, b_{i-1})$ then we set $b'_i = b_i$. Otherwise we resample until we get an element not in the span of the previous elements and set it as $b_i$. Nevertheless, we still record $b'_i$ as the first element sampled such that $b'_1,\hdots, b'_n$ are i.i.d. vectors. Now, let's define $\mathcal{E}$ as the event that $b_i = b'_i$ for $i \leq n/2$. We observe that:
$$\P(\mathcal{E})=(1-2^{-n})(1-2^{-n+1}) \hdots (1-2^{-n/2-1}) \geq 1-\textstyle \sum_{k=1}^{n/2} 2^{-n+k-1}\geq 1-2^{-n/2}$$
Finally, define $B$ to be the $n \times n$ matrix over $\F_2$ that has $b_i$ as the $i$-th column. Since $b_1,\hdots, b_n$ is a basis, $B$ is invertible. \\
%Let $\tilde{b}_1, \hdots, \tilde{b}_n$ be the columns of $B^{-1}$. We observe that $\tilde{b}_1, \hdots, \tilde{b}_n$ is also an uniformly random basis of $\F_2^n$.\\

\emph{Step 2: Change of basis } Fix a set $S \subseteq \F_2^n$ such that $r = \min(\rank_{\F_2^n}(S),n/2)$ and let $S'=\{s_1, \hdots, s_r\}$ be an independent subset of $S$. Our goal is to bound $\E[\rank_{\Free}(f_b(S'))].$ 

Let also $X$ be a $n\times n$ invertible matrix over $\F_2$ where the first $r$ columns correspond to $s_1, \hdots, s_r$. Define now $\tilde{b}_i = X B^{-1} e_i$ (where $e_i$ is the standard basis). Since the columns of $B$ are uniformly random and $X$ is invertible, $\tilde{b}_1, \hdots, \tilde{b}_n$ is also a uniformly random basis. With that we observe that:
$$\E[\rank(f_{\tilde{b}_1..\tilde{b}_n}(S))] \geq \E[\rank(f_{\tilde{b}_1..\tilde{b}_n}(S'))] = \E[\rank(f_{e_1..e_n}(\{b_1, \hdots, b_r\}))] $$
since $b_i = BX^{-1}s_i$, $e_i = BX^{-1} \tilde{b}_i$ and $z \mapsto B X^{-1} z $ is a matroid isomorphism of $\F_2^n$.\\

\emph{Step 3: Conditioning on $\mathcal{E}$} The vectors $b_1, \hdots, b_r$ are not sampled i.i.d. but the vectors $b'_1, \hdots, b'_r$ are and conditioned on $\mathcal{E}$ they are the same. So we write:
$$\begin{aligned}
\E[\rank(f_{e_1..e_n}(\{b_1, \hdots, b_r\}))] & \geq
\E[\rank(f_{e_1..e_n}(\{b_1, \hdots, b_r\})) \mid \mathcal{E}] \cdot \P[\mathcal{E}] \\ & = \E[\rank(f_{e_1..e_n}(\{b'_1, \hdots, b'_r\})) \mid \mathcal{E}] \cdot \P[\mathcal{E}]\\ & \geq   
\E[\rank(f_{e_1..e_n}(\{b'_1, \hdots, b'_r\})) ] - n/2^{n/2}
\end{aligned}\\$$

\emph{Step 4: Bounding the rank in the free matroid} The rank in the free matroid is simply the number of elements in the image, so we can re-write:
$$\begin{aligned}\E[\rank(f_{e_1..e_n}(\{b'_1, \hdots, b'_r\})) ] & = \sum_{j=1}^n \P[j \in f_{e_1..e_n}(\{b'_1, \hdots, b'_r\}))] \\ & = \sum_{j=1}^n \left( 1-\prod_{i=1}^r \P[j \neq f_{e_1..e_n}(b'_i)] \right) 
\\ & = \sum_{j=1}^n \left( 1-( 1-2^{-j})^r\right)
\end{aligned}$$
since $\P[j \neq f_{e_1..e_n}(b'_i)]$ is the probability that the first non-zero entry of the random vector $b'_i$ is not $j$, which happens with probability $(1-2^{-j})$. Finally observe that for $j \leq \log_2(r)$ we have $( 1-2^{-j})^r \leq (1-1/r)^r \leq 1/e$. Putting this together we obtain that:
$$
\E[\rank(f_{e_1..e_n}(\{b'_1, \hdots, b'_r\})) ] \geq (1-1/e) \log_2(r) \geq (1-1/e) \log_2(\rank_{\F_2^n}(S)/2)
$$
This completes the proof.
\end{proof}

\begin{theorem}\label{thm:upper_bound_binary_partition}
For any loop-free binary matroid $\M$ there is a $O(n / \log n)$-online approximate embedding to the partition matroid.
\end{theorem}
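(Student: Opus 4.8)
The plan is to obtain the desired embedding as the composition of the exact online embedding of Theorem~\ref{thm:omm_binary} with the randomized map $f_b$ into the free matroid that is analyzed in Lemmas~\ref{lemma:approx-embedding-property-1} and~\ref{lemma:approx-embedding-property-2}, and then to do the bookkeeping that converts the logarithmic guarantee of Lemma~\ref{lemma:approx-embedding-property-2} into the claimed $O(n/\log n)$ distortion. Throughout, $n$ denotes the number of elements of $\M$ (so $\rank_\M(\M)\le n$).

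First I would fix the online matroid morphism $g=g_{\M,\pi}:\M\to\F_2^n$ from Theorem~\ref{thm:omm_binary}, which by Theorem~\ref{thm:order_independent_omm} may additionally be taken order-independent. Since $\M$ is loop-free, $\rank_\M(\{x\})=1$ for every element $x$, so $g(x)$ is a nonzero vector; thus $g$ maps the ground set of $\M$ into $\F_2^n\setminus\{0\}$, which is exactly the domain on which $f_b$ is defined. Next, sample a uniformly random basis $b=(b_1,\ldots,b_n)$ of $\F_2^n$ once, at the outset, and set $\hat f:=f_b\circ g:\M\to\Free_n$. Because $f_b$ acts pointwise — the index $f_b(v)$ depends only on $v$ and $b$ — the family $\hat f_{\M,\pi}$ inherits the prefix-restriction property from $g$, so $\hat f$ is an online randomized embedding; order-independence of $g$ together with the independence of the draw of $b$ makes $\hat f$ order-independent as well. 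Note that $\Free_n$ is a partition matroid (each part a singleton), so $\hat f$ maps into a partition matroid as required.

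It remains to verify the two defining inequalities of a randomized $\beta$-approximate embedding. For~\eqref{eq:approx_1}, apply Lemma~\ref{lemma:approx-embedding-property-1} to the loop-free rank-$n$ matroid $\F_2^n$ with the basis $b$: it gives $\rank_{\Free}(f_b(T))\le\rank_{\F_2^n}(T)$ for every $T\subseteq\F_2^n$ and every realization of $b$. Taking $T=g(S)$ and using that $g$ is a morphism yields $\rank_{\Free}(\hat f(S))\le\rank_{\F_2^n}(g(S))=\rank_\M(S)$ almost surely. For~\eqref{eq:approx_2}, I combine two lower bounds on $\E[\rank_{\Free}(\hat f(S))]$. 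Applying Lemma~\ref{lemma:approx-embedding-property-2} to the set $g(S)\subseteq\F_2^n$ and using $\rank_{\F_2^n}(g(S))=\rank_\M(S)$ gives $\E[\rank_{\Free}(\hat f(S))]\ge(1-1/e)\log_2(\rank_\M(S))-O(1)$; and since $\M$ is loop-free, whenever $S\neq\emptyset$ the image $\hat f(S)$ is a nonempty subset of $[n]$, so $\E[\rank_{\Free}(\hat f(S))]\ge 1$. Writing $r=\rank_\M(S)\le n$: if $r$ is below a fixed absolute constant then $r/\E[\rank_{\Free}(\hat f(S))]\le r=O(1)$; otherwise, for $n$ large enough that the additive error term is at most $1$, the first bound gives $\E[\rank_{\Free}(\hat f(S))]\ge c\log_2 r$ for an absolute constant $c>0$, so $r/\E[\rank_{\Free}(\hat f(S))]\le r/(c\log_2 r)$, and since $x\mapsto x/\log_2 x$ is increasing for $x\ge e$ and $r\le n$, this is at most $n/(c\log_2 n)=O(n/\log n)$. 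In all cases $\rank_\M(S)/\E[\rank_{\Free}(\hat f(S))]=O(n/\log n)$, so $\hat f$ is an online $O(n/\log n)$-approximate embedding into the free (hence partition) matroid, which proves the theorem.

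The substantive work has already been done in Lemma~\ref{lemma:approx-embedding-property-2}, so I do not expect a genuine obstacle here; the only points requiring care are (i) that Lemma~\ref{lemma:approx-embedding-property-2} is stated for the complete binary matroid $\F_2^n$ whereas a general binary matroid may have rank strictly less than $n$ — harmless, since we feed the lemma the image $g(S)\subseteq\F_2^n$, whose rank equals $\rank_\M(S)$ — and (ii) that the additive $O(1)$ (and $n/2^{n/2}$) error terms in Lemma~\ref{lemma:approx-embedding-property-2} render its bound vacuous for small $r$, which is exactly why the trivial bound $\E[\rank_{\Free}(\hat f(S))]\ge 1$ is needed to cover small-rank sets, and why the distortion guarantee is stated asymptotically in $n$.
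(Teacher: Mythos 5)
Your proposal is correct and follows essentially the same route as the paper's own proof: compose the online morphism of Theorem~\ref{thm:omm_binary} with the random-basis map $f_b$, invoke Lemma~\ref{lemma:approx-embedding-property-1} for the upper bound on rank and Lemma~\ref{lemma:approx-embedding-property-2} for the expected lower bound, and finish with the observation that $r/\max(1,c\log r - c') = O(n/\log n)$. The extra care you take with small ranks and with applying Lemma~\ref{lemma:approx-embedding-property-2} to the image $g(S)$ simply makes explicit what the paper leaves implicit.
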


\begin{proof} For a loop-free binary matroid $\M$ use the composition of the embedding $f_b$ for a random basis $b$ of $\F_2^n$ from the previous lemma with the online morphism $\M \rightarrow \F_2^n$ in Theorem \ref{thm:omm_binary}. Since both can be computed online, their composition can also be computed online.

By Lemma \ref{lemma:approx-embedding-property-1} the embedding satisfies the first property of an approximate embedding. By Lemma \ref{lemma:approx-embedding-property-2} a set of $S$ of rank $r$ in $\M$ is mapped to a set with expected rank $\max(1, c \log(r)-c')$ for constants $c,c'$. Finally, observe that $r/\max(1, c \log(r)-c') \leq O(n / \log(n))$.
\end{proof}
%\color{red}

\paragraph{No Constant-Approximate Embedding from Graphical Matroid to Free Matroid}

Next, we show that there is no constant approximate online embedding of a graphical matroid into a free matroid when the elements are revealed online with access to an independence oracle. This is in contrast to the result of \cite{korula2009algorithms} that allows us to construct an approximate offline embedding. 
\begin{theorem}\label{thm:no_embedding_graphical}
    There is no constant approximate online embedding from the class of graphical matroids into a free matroid.
\end{theorem}
\begin{proof}
First, we construct the underlying graph that is revealed online to an algorithm that embeds the corresponding graphical matroid into a free matroid. Suppose the underlying graph has two disjoint paths, $e_1, e_2, \dots,  e_d$ and $f_1,\dots, f_d$. Let the endpoints of the edge $e_i = (v_i , v_{i+1})$ and $f_i = (u_i, u_{i+1})$. We consider two special edges $e_* = (v_1, u_1)$ and $f_* = (v_{d+1}, u_{d+1})$. In addition, we have a set of edges $g_i = (v_i , u_i)$ for all $i \in \{2, \dots , d\}$. We make $d$ many identical copies of the above graph and $k$-th copy of vertices and corresponding edges are denoted as $u_i^k, v_i^k, f_i^k, g_i^k$  and $f_*^k, e_*^k$ for all $i\in [d]$ and $k\in [d]$. 

To prove the lemma, we need to define an arrival order $\pi$  over the edges such that any algorithm that maps these edges into a free matroid, and equivalently, a simple partition matroid, can not obtain $O(1)$-approximate embedding. 

Suppose the arrival order of edges is as follows: first the set of edges $e^{k}_1 , \dots, e^{k}_d; f^{k}_1,\dots, f^{k}_d$ and $e^k_*$ arrives in uniformly random order for $k\in [d]$. Since these edges form an independent set, the algorithm can not distinguish the identity of these edges. Therefore, any constant approximate algorithm needs to map the arrived edges into $O(d^2)$ many disjoint parts of the underlying simple partition matroid. Let the algorithm map the arrived edges into $C\cdot d^2$ many parts. We first observe that at most $ C^2\cdot d^2$ many parts with more than $\frac 1 {C^2}$ edges are assigned to them by the algorithm. We let $P$ be the set of parts that contains at most $\frac{1}{C^2}$ many edges.  In addition, since the algorithm can not distinguish between the arrived edges, we can assume that the algorithm adds these edges into $C\cdot d^2$ many parts uniformly at random.

Next, for any fix $k$, we lower bound the probability that the algorithm embeds all $e^{k}_1 , \dots, e^{k}_d$; $f^{k}_1,\dots, f^{k}_d$ and $e^k_*$ into separate parts --- denoted as the event $\mathcal E$. %First, we let $X_e^i$ and $X_f^i$ be a random variable that denotes the index of the part in which the edges $e_i, f_i$ are embedded, respectively. 
We let $i_1, \dots, i_d$,  $j_1,\dots, j_d$ and $\ell^*$ be the parts in which algorithm embeds the edges $e^k_1, \dots, e^k_d$, $f_1^k, \dots, f_d^k$ and $e^k_*$ respectively. We can lower the probability of the event $\mathcal E$ by $i_q, j_q \in P$ for all $q\in [d]$ intersecting with the event $\mathcal E$. Since each part in $P$ contains at most $\frac{1}{C^2}$ many edges, we can lower bound the probability of the event $\mathcal E$ by,
$$ \prod_{i=1}^{d+1} \left(1 - \frac{1}{(C-C^2)d^2 - \frac{i}{C^2}} \right) \geq \left(1 - \frac{d}{(C-C^2)d^2 - \frac{d}{C^2}} \right)\geq \left(1 - O\left ( \frac{1}{d} \right) \right).$$

For the rest of the proof, we condition on the event that the algorithm embeds all $e^{k}_1 , \dots, e^{k}_d$; $f^{k}_1,\dots, f^{k}_d$ and $e^k_*$ into separate parts. Therefore, from now on, we also drop the superscript $k$ as it is fixed in the rest of the proof. In addition, we let $P_i$, $Q_i$ and $P_*$ be the parts that contains the edges $e_i$, $f_i$ and $e_*$ for $i\in [d]$. 

Next, let the edge $f_*$ arrive followed by the edge $e_*$. Since, the edges $e_1 , \dots, e_d; f_1,\dots, f_d$ and $e_*, f_*$ form a cycle, the algorithm is forced to add the edge $e_*$ into one of the existing parts. Since all the edges are symmetric to the algorithm, at this stage, we can assume that the algorithm adds the edge $f_*$ into one of the parts uniformly at random that contains one of the edges $e_1 , \dots, e_d$; $f_1,\dots, f_d$ and $e_*$.  We define an event $\mathcal F$ as the event where the edge $f_*$ does not belong to the parts $P_{d}, P_{d-1}, \dots, P_{d - \sqrt d}$  and $Q_{d}, Q_{d-1}, \dots, Q_{d - \sqrt d}$. We note that $\Pr[\mathcal F] \geq 1 - \frac{2}{\sqrt d}$. For the rest of the sketch, we condition on the event $\mathcal F$. 

Next, we define the arrival order of the edges as: $g_d, g_{d-1}, g_{d-2}, \dots, g_{d-\sqrt d}$. Upon arrival of the edge $g_i$ for any $i\in \{d,\dots , d-\sqrt d\}$, we consider the following cycles 
$$ \begin{aligned} \mathcal C &= (e_*, e_1,\dots, e_{i-1}, g_i, u_{i-1}, u_{i-2},\dots, u_1)\\
\mathcal C' &= ( g_i, v_i, v_{i+1}, \dots, v_d, f_*, u_d, u_{d-1},\dots, u_{i}).
\end{aligned}$$ Under the event $\mathcal F$, we claim that the edge $g_i$ has to be added to the part $P^*$ that contains the edge $f_*$. 

Suppose the algorithm does not add the edge $g_i$ to the part $P^*$. In this case, we have either $g_i \in \{P_1,\dots, P_{i-1}, Q_1,\dots, Q_{i-1}, \bar P\} \setminus P^*$  or $g_i \in \{P_{i+1},\dots, P_d, Q_{i+1}, \dots, Q_d \}$. In the first case, we can observe that all the edges in $C'$ belong to different parts which breaks the assumption that the algorithm comes up with a valid embedding. In the second case, all the edges in $C$ are in different parts which again breaks the same assumption. Therefore, the edge $g_i$ has to be in the part $P^*$. We can further observe that if $g_i$ is mapped to the part $P^*$ then it leads to a valid (approximate) embedding. 

The above argument shows that all the edges $g_i: i\in \{d-\sqrt d, \dots, d\}$ have to be in the same part while conditioned on $\mathcal E$ and $\mathcal F$ while the rank of $g_i: i\in \{d-\sqrt d, \dots, d\}$ is $\sqrt d$. Since $\Pr[\mathcal E \cap \mathcal F] \geq 1 - O(1/\sqrt d)$,  the algorithm is at most $\sqrt d$-approximate. This rules out any constant approximate algorithm. 
\end{proof}
\color{black}

\bibliographystyle{alpha}
\bibliography{msp}
\newpage
\appendix
\section{Reduction to a Special Case of Prophet MSP}\label{sec:proof_simple_reduction}
%Proof of Lemma~\ref{lem:reduction_simple_to_unique_weight}

In this appendix, we provide a proof of  Lemma~\ref{lem:reduction_simple_to_unique_weight}. First, we present a simple reduction from \cite{dughmi2021matroid}. 

\begin{lemma}[Sublemma-4.2 from \cite{dughmi2021matroid}]\label{lem:red_dughhmi}
    If there exists an $\alpha$-approximate Prophet MSP on matroid $\M$ with $\textbf{Rank}(\M) = d$ and weight distribution $\D$ supported over $\left \{ \frac{1}{256 \cdot d}, \frac{2}{256\cdot d}, \frac{2^2}{256 \cdot d}, \dots ,1  \right \}$ with the offline optimum $\E_{\D}[\opt(\M)] \in \left[\frac{1}{16} , 1 \right]$ then there exists $\frac{\alpha}{256}$-approximate prophet MSP on matroid $\M$ with any arbitrary weight distribution.
\end{lemma}
Given the simple reduction in Lemma~\ref{lem:red_dughhmi}, we prove the following simple reduction that allow us to focus on the weight distribution that assigns distinct weight to each element of the matroid. We consider weight class, $$W = \left \{ \frac{1}{256 \cdot d} + \frac{i-1}{256 d\cdot n^2}: i \in \{0,1,\dots , 256 d\cdot n^2+1 \}\right\} \cup \left \{ \frac{1}{256 \cdot d}, \frac{2}{256\cdot d}, \frac{2^2}{256 \cdot d}, \dots ,1  \right \} .$$
Above, $n$ denotes the set of elements of the matroid $\M$.

Given any prophet MSP instance consists of matroid $\M$ and arbiritary weight distribution, we reduce it to a prophet MSP instance $\F$ consists of matroid $\M$ and weight distribution $\D$ supported on $\left \{ \frac{1}{256 \cdot d}, \frac{2}{256\cdot d}, \frac{2^2}{256 \cdot d}, \dots ,1  \right \}$ with the offline optimum $\E_{\D}[\opt(\M)] \in \left[\frac{1}{16} , 1 \right]$. Then we construct prophet MSP instnace $\F'$ over the same matroid but slightly perturbed weight distribution as follows: we find a random permutation $\pi$ over the set of elements of the matroid. We sample the weights of the elements using the distribution $\D$. Then for all $i\in [n]$, we subtract $\frac{i-2}{256 \cdot d\cdot n^2}$ from the weight of element $e$, iff the element $e$ appears at the $i$-th position on the permutation $\pi$.

We observe that in the perturbed matroid secretary instance, the weight distribution is supported over the set of weights $W$. In addition, for any $w \in W$, there is at most one element whose weight is assigned to be $w$. This simply follows from the fact that for any two elements $e, e'$ with distinct weight in the draw from distribution $\D$ will be assigned a different weight as the prtrubution to both elements is smaller than $\frac{1}{256d \cdot n}$. In addition, the perturbation is distinct for two distinct elements $e,e' \in V$. Therefore, the pair of elements $e,e'$ with the identical weight in the draw from $\D$ are assigned different weights in the perturbed instance.

\begin{proof}[Proof of Lemma~\ref{lem:reduction_simple_to_unique_weight}]
    To prove the lemma, we prove the following statement: if there is an $\alpha$ approximate algorithm for perturb prophet matroid secretary instance $\F'$ then we can construct $\alpha - O\left( \frac{1}{d} \right)$-approximate algorithm for prophet matroid secretary instance $\F$.

    Suppose, we are given an algForithm $\A$ that is $\alpha$-approximate for the instance $\F'$. Our goal is to design an algorithm for the instance $\F$ using $\A$ that is $\alpha - \frac{1}{d}$-approximate. We give the following online reduction from $\F$ to $\F'$: in the instance $\F$, upon arrival of the element $e$ at the position $i\in [n]$, we perturb the weight of the element $e$ by subtracting $\frac{i-1}{256\cdot d\cdot n^2}$. We note that since the arrival order of the elements is uniformly at random, the modified weight distribution is identical to that of weight distribution in the perturbed instance. Hence, we feed the perturbed weight of the arrived element to the algorithm $\A$. We then select an arrived element $e$ iff the algorithm $\A$ selects the element $e$ in the perturbed instance. 
    
    Let $S$,  $\opt$ and $\opt'$ be the selected elements by $\A$, the optimal value of the instance $\F$, and the optimal value of the perturbed instance $\F'$. We denote the modified weight as $w'(\cdot)$. By construction, we have,
    $$\begin{aligned}
      w(S) \geq w'(S) \geq \alpha\cdot \opt' \geq \alpha \cdot  \left(1- \frac{1}{32\cdot d} \right) \cdot \opt.
    \end{aligned}
    $$
    Above, the first inequality holds because $w'(e)\leq w(e)$. The second inequality holds because $\A$ is $\alpha$-approximate for thee perturbed distribution. The final inequality holds because the total decrease in the weight $\sum_{e\in V} (w(e) - w'(e)) \leq \frac{1}{512\cdot d}$ and $\opt \geq \frac{1}{16}$ by assumption on $\D$. This concludes the proof. 
\end{proof}

\section{Exact from Approximate Pairwise Independence} 
\label{sec:techncial_proof}
In this section, we present the proof of Theorem~\ref{thm:tv_distance}, and construct an exact pairwise-independent weight distribution on $\BigM_{[m\cdot N]}$ that is close to the weight distribution $\D^*$ defined in Definition~\ref{defn:almost_pw_ind_dist} on $\BigM_{[m\cdot N]}$. First, in Section~\ref{sec:proc_to_pw_ind}, we present our construction that iteratively applies small perturbations as defined in Procedure~\ref{proc:one} and Procedure~\ref{proc:two} to turn the weight distribution $\D^*$ into an exact pairwise-independent weight distribution $\tilde \D$. Then in Section~\ref{sec:proc_parmeters}, we obtain tight upper and lower bounds on the parameters of Procedure~\ref{proc:one} and Procedure~\ref{proc:two}. Afterwards, in Sections~\ref{sec:proc_1_analysis} and \ref{sec:proc_2_analysis}, we bound the total perturbations due to Procedures~\ref{proc:one} and~\ref{proc:two}, respectively.  Finally, Section~\ref{sec:wrap-up} shows how this implies the theorem.

\subsection{Constructing an Exact-Pairwise Independent Distribution}\label{sec:proc_to_pw_ind}
    
    %\pdc{Rest of this section still needs a careful pass} %\pdc{We probably want to recall some notation}  
    
    Before we present the procedure to obtain an exact pairwise independent weight distribution, we recall the notations from Theorem~\ref{thm:tv_distance}. We have a prophet MSP instance on matroid $\M$ and weight distribution $\D$ supported over the set of weights $W=\{ w_1 , w_2 , \dots, w_m \}$ satisfying the conditions from Lemma~\ref{lem:reduction_simple_to_unique_weight}. Given the prophet MSP instance we let $\D^*$ be the weight distribution over $\BigM$ defined in Definition~\ref{defn:almost_pw_ind_dist}. For any $i, j \in [m]$, we let,
    $$\begin{aligned}
    p_i& = \Pr_{w\sim \D}[\exists \vec v\in \M: w(\vec v) = w_i]\\
    p_{ij} &= \Pr_{w\sim \D}[\exists \vec v\in \M: w(\vec v) = w_i \land \exists \vec v'\in \M: w(\vec v') = w_j ].\end{aligned}$$    
    %\pdc{what is $\D$?} 
    Since $\D$ is arbitrarily correlated over the support $W$, we can potentially have $p_{ij}\neq p_i \cdot p_j$. 

    We first observe that for any $i\in [m], \ell \in N_i$ and $\vec u\in \BigM$, $\vec u^{i,\ell}\in \BigM_{[m\cdot N]}$ can potentially take either weight of $w_i$ or zero. Therefore, we define Bernoulli random variables $X^{\vec u}_{i\ell}$ for all $\vec u^{i,\ell}\in \BigM_{[m\cdot N]}$ such that $X^{\vec u}_{i\ell}= 1$ iff $w(\vec u^{i,\ell}) = w_i$. We note that the Bernoulli random variables $X^{\vec u}_{i\ell}$ for all $\vec u^{i,\ell}\in \BigM_{[m\cdot N]}$ contain the full information of the weight distribution over $\BigM$.

We next observe that,
$$\begin{aligned}
    \E[X^{\vec u}_{i,\ell} = 1] &= \Pr \left[ \exists \vec v \in \M: w(\vec v) = w_i \land f(\vec v) = \vec u \right]\cdot \Pr_{\ell' \sim \operatorname{Unif}(N_i)}[\ell' = \ell]\\
    &=\Pr \left[ \exists \vec v \in \M: w(\vec v) = w_i \right] \cdot \Pr\left[f(\vec v) = \vec u  \right]\cdot \frac{1}{N}\\
    &= \frac{p_i}{M \cdot N}.
\end{aligned}$$
Above, the second equality holds because the weight assignment to $\M$ is independent of the matroid morphism $f$. The second equality holds because $f = f''\circ f'$ and $f''$ is a uniformly random automorphism from $\operatorname{Aut}(\BigM)$, which implies $\Pr[f(\vec v) = \vec u] = \frac{1}{|\BigM|} = \frac{1}{M}$. 
Similarly, we compute the pairwise joint distribution of the Bernoulli random variables. First, for any elements $\vec u, \vec u' \in \BigM$, $i, j\in [m]$ and $\ell\in [N_i], \ell' \in [N_j]$, we have,
\begin{equation*}
    \E[X^{\vec u}_{i,\ell} = 1 \land X^{\vec u'}_{j,\ell'} =1 ]=0 
\end{equation*}
if $i=j$ or $\vec u = \vec u'$ because there can be at most one element $\vec v\in \M$ with $w(\vec v) =w_i$ and $f$ is a matroid morphism and two distinct elements from $\M$ with weight $w_i$ and $w_j$ can not be mapped to $\vec u \in \BigM$ via $f$. 

Finally, for any pair of elements ${\vec u}_{i,\ell}, {\vec u'}_{j,\ell}$, for $i\neq j$ and $\vec u \neq \vec u' $, we have,
$$\begin{aligned}
    \E[X^{\vec u}_{i,\ell} = 1\land X^{\vec u'}_{j,\ell'} =1] &= \Pr \left[ \exists\vec  v \in \M: w(\vec v) = w_i \land  \exists \vec v' \in \M: w(\vec v') = w_j \land  f(\vec v) = \vec u \land f(\vec v') = u' \right]\cdot \frac{1}{N^2}\\
    &= p_{ij} \cdot \Pr[f(\vec v) = \vec u \land f(\vec v') = u'] \cdot \frac{1}{N^2} \\
    &= \frac{p_{ij}}{M\cdot (M-1)\cdot N^2}.
\end{aligned}$$
Above, the second equality holds because the weight assignment to $\M$ is independent of the matroid morphism $f$. The second equality holds because $f = f''\circ f'$ and $f''$ is a uniformly random automorphism from $\operatorname{Aut}(\BigM)$, which implies $\Pr[f(\vec v) = \vec u \land f(\vec v') = u'] =  \frac{1}{M \cdot (M-1)}$. Given the definition of $X^{\vec u}_{i,\ell}$, since it captures the complete information about the weight distribution over $\BigM_{[m\cdot M]}$, we prove the existence of Bernoulli random variables $\tilde X_{i,\ell}^{\vec u}:\forall \vec u^{i,\ell}$ with pairwise-independent joint distribution $\tilde {\vec X}$ s.t. $\operatorname{TV}_{\vec X, \tilde {\vec X}} \leq O \left ( \frac{m^3}{M}\right)$. This immediately implies the proof of Theorem~\ref{thm:tv_distance}.

%we prove the following theorem that proves Theorem~\ref{thm:tv_distance}. %\pdc{here we say we prove theorem 8.1, but we never proof theorem 8.1?}
%\begin{theorem}
%For $N = \Omega\left(2^{N^2} \right)$, let $X_{i,\ell}^{\vec u}:\forall \vec u^{i,\ell}$ be Bernoulli random variables sampled from a joint distribution $\vec X$ described as above. Then there exists Bernoulli random variables $\tilde X_{i,\ell}^{\vec u}:\forall \vec u^{i,\ell}$ with pairwise-independent joint distribution $\tilde {\vec X}$ s.t. $\operatorname{TV}_{\vec X, \tilde {\vec X}} \leq O \left ( \frac{m^3}{M}\right)$. This further implies that given weight assignment $\D^*$ defined in Definition~\ref{defn:almost_pw_ind_dist} over $\BigM_{[m\cdot N]}$, there exist a pairwise independent weight assignment $\tilde \D$ over $\BigM_{[m\cdot N]}$ such that $\operatorname{TV}_{\D^*, \tilde \D} \leq O \left ( \frac{m^3}{M}\right)$.
%\end{theorem}

It is difficult to give a closed form construction of $\tilde {\vec X}$ in a ``simple" procedure, therefore,  we give a sequential process that iteratively generates Bernoulli random variables $ X_{i,\ell}^{\vec u}(k):\forall \vec u^{i,\ell} \in \BigM_{[m\cdot N]}$ for $k=0,\dots, k^*$ starting from %\pdc{I cannot parse the following, there seems to be a typo} 
$X_{i,\ell}^{\vec u}(0)= X_{i,\ell}^{\vec u}:\forall \vec u^{i,\ell}\in \BigM_{[m\cdot N]}$ such that at the end of the process, $\tilde X_{i,\ell}^{\vec u}(\bar k):\forall \vec u^{i,\ell}$ are pairwise independent. For any $i,j \in [m]$, and distinct $\vec u, \vec u' \in \BigM$ $i\neq j$, we define bias
    $$\begin{aligned}
        \varepsilon_{ij} &= | \E[X^{\vec u}_{i,\ell} = 1\land X^{\vec u'}_{j,\ell'} =1] -  \E[X^{\vec u}_{i,\ell} = 1] \cdot \E[X^{\vec u'}_{j,\ell'}=1 ] | = \frac{|\bar{p}_{ij} - p_i\cdot p_j|}{M^2\cdot N^2},
    \end{aligned}$$
    where $\bar {p}_{ij} = \frac{M}{M-1}\cdot p_{ij}$. The above expression captures the closeness of the random variables $X_{i,\ell}^{\vec u}:\forall \vec u^{i,\ell} \in \BigM_{[n\cdot M]}$ from being pairwise independent. The above bias $\eps_{ij}$ does not depend on the choice of vectors $\vec u, \vec u'$ and their corresponding labels $\ell, \ell'$. On the other hand, for $i\in [m]$ and distinct $\vec u, \vec u' \in \BigM$, we let,
    \begin{equation*}
         \varepsilon_{i} = | \E[X^{\vec u}_{i,\ell} = 1\land X^{\vec u'}_{i,\ell'} =1]  -  \E[X^{\vec u}_{i,\ell} = 1] \cdot \E[X^{\vec u'}_{i,\ell'} =1] | = \frac{p_i^2}{M^2\cdot N^2}.
    \end{equation*}
    
    %We note that $X_{i,\ell}^{\vec u}:\forall \vec u^{i,\ell} \in \BigM$  are pairwise independent iff $\varepsilon_{ij} , \varepsilon_i = 0$ for all $i, j \in [m]$.  
 
    Our high-level idea is to sequentially transform the random variables starting from $\{X_{i,\ell}^{\vec u}(0):\forall \vec u^{i,\ell} \in \BigM_{[m\cdot N]} \}$ distributed as $\D(0) = \D^*$ to $\{X_{i,\ell}^{\vec u}(k):\forall \vec u^{i,\ell} \}$ for bounded $k$ such that at each step, the resultant distribution becomes ``closer" to being pairwise independent. More formally, we prove the following lemma. 
\begin{lemma}\label{lem:pairwise_ind}
    Let $\{\tilde X_{i,\ell}^{\vec u}: \vec u_{i,\ell} \in \BigM_{{[m\cdot N]}}\}$ be the random variables obtained after applying Procedure~\ref{proc:one} and Procedure~\ref{proc:two} to $\{ X_{i,\ell}^{\vec u}: \vec u_{i,\ell} \in \BigM_{[m\cdot N]}\}$, then $\{\tilde X_{i,\ell}^{\vec u}: \vec u_{i,\ell} \in \BigM_{[m\cdot N]}\}$ are pairwise-independent. 
\end{lemma}

\begin{proof}[Proof of Lemma~\ref{lem:pairwise_ind}]
    We define initial bias $\eps_{ij}(0) = \eps_{ij}$ and $\eps_i(0) = \eps_i$.  We let the evolution of the weight distribution starting from $\D^* = \D(0) \rightarrow \D(1) \rightarrow \dots \rightarrow \D(\bar k)$, where $\bar k = O(m^2)$ in Procedure~\ref{proc:one}. For any $k>0$, we let,
    $$\begin{aligned}
         \varepsilon_{i}(k) &= | \E[X^{\vec u}_{i,\ell}(k) = 1\land X^{\vec u'}_{i,\ell'}(k) =1]  -  \E[X^{\vec u}_{i,\ell}(k) = 1] \cdot \E[X^{\vec u'}_{i,\ell'}(k) =1] |\\
         \varepsilon_{ij}(k) &= | \E[X^{\vec u}_{i,\ell}(k) = 1\land X^{\vec u'}_{j,\ell'}(k) =1] -  \E[X^{\vec u}_{i,\ell}(k) = 1] \cdot \E[X^{\vec u'}_{j,\ell'}(k) = 1 ] |. 
    \end{aligned}$$
    Through the process of transforming Bernoulli random variables in Procedure~\ref{proc:one}, we make sure that $\eps_{ij}(k)$ and $\eps_i(k)$ do not depend on $\vec u, \vec u' \in \BigM$ and their labels $\ell\in [N]$ and $\ell' \in [N] $ and rather only depends on the weight class $i, j \in [m]$.

    Suppose we have completed $k$ many iterations of Procedure~\ref{proc:one} and have obtained distribution $\D(k)$ over $\{X_{i,\ell}^{\vec u}(k): \forall \vec u^{i,\ell} \in \BigM_{[m\cdot N]}\}$. In addition, we have for any $i\in [m], \vec u \in \BigM$ and $\ell \in [N]$, we have $\E[X^{\vec u}_{i,\ell}] = p_i(k)$ and any distinct $i, j \in [m], \vec u, \vec u' \in \BigM$ and any labels $\ell , \ell' \in [N]$, we have $\E[X^{\vec u}_{i,\ell} \cdot X^{\vec u'}_{j,\ell'}] = p_{ij}(k)$.
    Given $X_{i,\ell}^{\vec u}(k):\forall \vec u^{i,\ell}\in \BigM_{[m\cdot N]}$, and $i\neq j$ with $\eps_{ij}(k) \neq 0$, we consider the two cases: $\eps_{ij}(k) > 0$ and $\eps_{ij}(k) < 0$ separately and resolve their pairwise correlations (See Procedure~\ref{proc:one}). %\pdc{at a quick glance I could not see the two cases below, it says first and then it has  Claim 8.3} 

    \begin{tcolorbox}
    \paragraph{Procedure 1:} \customlabel{proc:one}{1}
    \paragraph{Initialize $k = 0$ and $\{ X^{\vec u}_{i,\ell}(0) = X^{\vec u}_{i,\ell}: \vec u^{i,\ell} \in \BigM_{[m\cdot N]}\}$}
     
    \paragraph{Run until for all distinct $i,j \in [m]$, $p_{ij}(k)= p_i(k) \cdot p_j(k)$:}
     
    \paragraph{Case-1 ($p_{ij}(k)> p_i(k) \cdot p_j(k)$):} We define $X_{i,\ell}^{\vec u}(k+1):\forall \vec u^{i,\ell}\in \BigM_{[m\cdot N]}$ as follows: 
     
    \begin{enumerate}    
        \item Assign $X_{i,\ell}^{\vec u}(k+1) = X_{i,\ell}^{\vec u}(k):\forall \vec u^{i,\ell}\in \BigM_{[m\cdot N]}$.
        \item With probability $q_{ij}$, sample $Z\sim \operatorname{Ber}(1/2)$ and do the following:
        \begin{enumerate}
            \item If $Z=1$ then assign $X_{i,\ell}^{\vec u}(k+1) = 1 $ with probability $\frac{1}{N}$ independently $\forall \vec u \in  \BigM, \forall \ell \in [N]$  and $X_{j,\ell}^{\vec u}(k+1) = 0:\forall \vec u \in  \BigM, \forall \ell \in [N]$.
            \item Otherwise assign $X_{i,\ell}^{\vec u}(k+1) = 0:\forall \vec u \in  \BigM_{[m\cdot N]}, \ell \in [N]$ and $X_{j,\ell}^{\vec u}(k+1) = 1$ with probability $\frac{1}{N}$ independently $\forall \vec u \in  \FF_2^d, \ell \in [N]$.
        \end{enumerate}
    \end{enumerate}
    
    \paragraph{Case-2 ($p_{ij}(k)< p_i(k) \cdot p_j(k)$):} We define $X_{i,\ell}^{\vec u}(k+1):\forall \vec u^{i,\ell}$ as follows: 
    \begin{enumerate}
        \item Assign $X_{i,\ell}^{\vec u}(k+1) = X_{i,\ell}^{\vec u}(k):\forall \vec u^{i,\ell}\in \BigM_{[m\cdot N]}$.
        \item With probability $ q_{ij}$, let $X_{i,\ell}^{\vec u}(k+1) = 1:\forall \vec u \in  \BigM, \ell \in [N]$ and $X_{j,\ell}^{\vec u}(k+1) = 1:\forall \vec u \in  \BigM, \ell \in [N]$.
    \end{enumerate}
    \end{tcolorbox}

First, we can straightforwardly observe that while obtaining $\{X_{i,\ell}^{\vec u}(k): \forall \vec u^{i,\ell} \in \BigM_{[m\cdot N]}\}$ we ensure that for any fix $i \in [N]$, $\vec u, \vec u' \in \BigM$ and $\ell, \ell' \in [N]$, we have $\E[X_{i,\ell}^{\vec u}(k+1)] = \E[X_{j,\ell'}^{\vec u'}(k+1)]$. In addition, given any fix distinct $i, j \in [m]$, for any distinct  $\vec u, \vec u' \in \BigM$ and labels $\ell, \ell' \in [N]$, $\E[X_{i,\ell}^{\vec u}(k+1) \cdot X_{j,\ell'}^{\vec u'}(k+1)]$ does not depend on $\vec u,\vec u'$ and their labels $\ell, \ell' \in [N]$.

In both cases, we let $q_{ij}$ such that we obtain $\eps_{ij}(k+1) = 0$. We will define the closed form and an upper bound on $q_{ij}$ by writing the conditions to obtain $\eps_{ij}(k+1) = 0$ in Section~\ref{sec:proc_parmeters}. However, we prove the following crucial property that shows that the above procedure stops after iterating over all pairs of indices $i\neq j$ and leads to the distribution where $X^{\vec u}_{i,\ell}(k)$ and $X^{\vec u'}_{j,\ell'}(k)$ are independent as long as $i \neq j$.
\begin{claim}\label{claim:monotonicity_correlation}
    For any pair $r,r'$ such that $r\neq r'$, we have $\eps_{rr'}(k+1)\leq \eps_{rr'}(k)$. In addition, for any $\vec u\in \BigM$, $\ell, \ell' \in [N]$ and distinct $r, r' \in [m]$, we have, $$ |\E[X_{r,\ell}^{\vec u}(k+1)\cdot X_{r',\ell'}^{\vec u}(k+1)] - \E[X_{r,\ell}^{\vec u}(k+1)]\cdot\E[ X_{r',\ell'}^{\vec u}(k+1)] |\leq\E[X_{r,\ell}^{\vec u}(k)\cdot X_{r',\ell'}^{\vec u}(k)]- \E[X_{r,\ell}^{\vec u}(k)]\cdot\E[ X_{r',\ell'}^{\vec u}(k)] $$
\end{claim}
\begin{proof}% \pdc{couldn't it just be Proof. otw I think it should say Claim 8.2}
    We can easily observe that when $r,r' \notin \{i,j\}$ then $\eps_{rr'}(k+1) = \eps_{rr'}(k)$. In addition, when $r = i, r' =j$, the choice of $q_{ij}$ ensures that $\eps_{ij}(k+1) = 0 < \eps_{ij}(k+1)$. Hence, we can focus on the case, $r' = i$. We note that for any pair of $\vec u, \vec u'\in \BigM$,
    $$\begin{aligned}
        &|\E[X^{\vec u}_{r,\ell}(k+1) \cdot X^{\vec u}_{r',\ell ' }(k+1)] - \E[X_{r,\ell}^{\vec u}(k+1)]\cdot\E[ X_{r',\ell'}^{\vec u}(k+1)] | \\
        &= | \E[X^{\vec u}_{i,\ell}(k+1) =  X^{\vec u'}_{r,\ell'}(k+1) =1] -    \E[X^{\vec u}_{i,\ell}(k+1) = 1] \cdot \E[X^{\vec u'}_{r,\ell'}(k+1) =1] |\\
        &=|(1-q_{ij})\cdot \E[X^{\vec u}_{i,\ell}(k) =  X^{\vec u'}_{r,\ell'}(k) =1] + \alpha \cdot q_{ij} \cdot  \E[X^{\vec u'}_{r,\ell'}(k) =1] \\
        &-    ((1-q_{ij})\E[X^{\vec u}_{i,\ell}(k) = 1] + q_{ij}\cdot \alpha) \cdot \E[X^{\vec u'}_{r,\ell'}(k) =1] | \qquad \qquad (\alpha =1 \text{ or 
 } \alpha = 1/2)\\
        &= (1- q_{ij})\cdot \eps_{ir}(k).
    \end{aligned} $$
    Next, we observe that for the case when $\vec u \neq \vec u'$, $|\E[X^{\vec u}_{i,\ell}(k') \cdot X^{\vec u}_{r',\ell ' }(k')] - \E[X_{i,\ell}^{\vec u}(k')]\cdot\E[ X_{r',\ell'}^{\vec u}(k')] | = \eps_{ir}(k') $ for $k'\in \{k, k+1\}$. This further implies that $\eps_{i,r}(k+1)\leq \eps_{i,r}(k)$. 
%\pdc{why does this show the claim? it dies not look like the claim at all?}
\end{proof}
%\pdc{This transition here is a bit strange. is this referring to the calculations in the proof? why would it then be outside of the proof?}
%The above calculation simply follows from the definition of the procedure that obtains random variables at the $(k+1)$-th iteration, and the choice of $\alpha$ depends on which case the procedure falls into during the $(k+1)$-th iteration.  
Claim~\ref{claim:monotonicity_correlation} implies that that once Procedure~\ref{proc:one} iterates over all possible distinct pairs $i, j \in [m]$ over $\bar k$ many iterations, for the resultant , the resultant $X^{\vec u}_{i,\ell}$ and $X^{\vec u'}_{j,\ell'}$ are independent as long as $i \neq j$. 
Suppose the above procedure terminates after $\bar k$ many iterations. We note that $\bar k \leq \frac{m\cdot (m-1)}{2}$ due to Claim~\ref{claim:monotonicity_correlation}. Once the above procedure terminates for all $i\neq j$, the only correlation we have left is between $X^{\vec u}_{i,\ell}$ and $X^{\vec u'}_{i,\ell'}$ for $\vec u, \vec u' \in \BigM$, $i\in [m]$ and $\ell, \ell \in [N]$ and between $X^{\vec u}_{i,\ell}, X^{\vec u}_{j, \ell'}$ for $\vec u \in \M$, $i, j \in [m]$ and $\ell, \ell' \in [N]$.   

%\pdc{TODO: It seems to me that lemma 8.1 could/should naturally be split into two. first state the lemma as is. then say we prove in two steps. combine everything up to here (about what procedure 1 does in one lemma/sublemma, and everything about what lemma 2 does (below) in a second. it seems to me that this would lead to a more modular, easier to follow proif, do you agree?}

%\pdc{in what follows I would refer to Procedure 2 not "the" Procedure 2 (several occurrences)}

We now describe Procedure~\ref{proc:two} to resolve the rest of the correlations, which is similar to our Procedure~\ref{proc:one}. In Procedure~\ref{proc:two}, we initialize $\{ \tilde X^{\vec u}_{i,\ell}(0) = X^{\vec u}_{i,\ell}(\bar k): \vec u^{i,\ell} \in \BigM_{[m\cdot N]}\}$ and at each iteration, we resolve the correlation between a pair of Bernoulli random variables %\pdc{this notation sound stranfe in my ears, a pair ... there exists} $\exists \tilde X^{\vec u}_{i,\ell}(k )   \tilde X^{\vec u'}_{j,\ell'}(k)$ 
which exhibits pairwise correlation. This is in contrast to Procedure~\ref{proc:one} that resolved the pairwise correlation between the sets of random variables corresponding to pairs of weight classes $i,j \in [m]$.

    \begin{tcolorbox}
    \paragraph {Procedure 2: } 
    \customlabel{proc:two}{2}
    \paragraph {Initialize $k = 0$ and $\{ \tilde X^{\vec u}_{i,\ell}(0) =\tilde X^{\vec u}_{i,\ell}(\bar k): \vec u^{i,\ell} \in \BigM_{[m\cdot N]}\}$}
     
    \paragraph{Run until $\{ \tilde X^{\vec u}_{i,\ell}(k): \vec u^{i,\ell} \in \BigM_{[m\cdot N]}\}$ are Pairwise Independent:}
     
    \paragraph {$\exists \tilde X^{\vec u}_{i,\ell}(k )   \tilde X^{\vec u'}_{j,\ell'}(k)$ not independent for some $i, j\in [m]$,  $\vec u,\vec u'\in \BigM$ and $\ell, \ell' \in [N]$}
     
    \paragraph {Case-1: ($\E [\tilde X^{\vec u}_{i,\ell}(k )\cdot   \tilde X^{\vec u'}_{j,\ell}(k )] >\E [\tilde X^{\vec u}_{i,\ell}(k ) ]\cdot  \E[\tilde X^{\vec u'}_{j,\ell}(k )] $):} 
     
    \begin{enumerate}    
        \item Assign $X_{i,\ell}^{\vec u}(k+1) = X_{i,\ell}^{\vec u}(k):\forall \vec u^{i,\ell}\in \BigM_{[m\cdot N]}$.
        \item With probability $q(k)$, sample $Z\sim \operatorname{Ber}(1/2)$ and do the following:
        \begin{enumerate}
            \item If $Z=1$ then assign $X_{i,\ell}^{\vec u}(k+1) = 1$ and $X_{j,\ell'}^{\vec u'}(k+1) = 0$.
            \item Otherwise assign $X_{i,\ell}^{\vec u}(k+1) = 0$ and $X_{j,\ell'}^{\vec u'}(k+1) = 1$.
        \end{enumerate}
    \end{enumerate}
    
    \paragraph {Case-2: ($\E [\tilde X^{\vec u}_{i,\ell}(k )\cdot   \tilde X^{\vec u'}_{j,\ell}(k )] < \E [\tilde X^{\vec u}_{i,\ell}(k ) ]\cdot  \E[\tilde X^{\vec u'}_{j,\ell}(k )] $):} 
    \begin{enumerate}
        \item Assign $X_{i,\ell}^{\vec u}(k+1) = X_{i,\ell}^{\vec u}(k):\forall \vec u^{i,\ell}\in \BigM_{[m\cdot N]}$.
        \item With probability $ q(k)$, let $X_{i,\ell}^{\vec u}(k+1) = X_{j,\ell'}^{\vec u'} = 1$.
    \end{enumerate}
    \end{tcolorbox}

At any iteration $k$ of Procedure~\ref{proc:two}, let $\tilde X_{i,\ell}^{\vec u}(k)$ and $\tilde X_{j,\ell'}^{\vec u'}(k)$ are the selected random variables whose correlation is being resolved. We set $q(k)$ such that we end up with $\E [\tilde X^{\vec u}_{i,\ell}(k )\cdot   \tilde X^{\vec u'}_{j,\ell}(k )] = \E [\tilde X^{\vec u}_{i,\ell}(k ) ]\cdot  \E[\tilde X^{\vec u'}_{j,\ell}(k )]$. Similar to Claim~\ref{claim:monotonicity_correlation}, we show that the correlation between all pair of random variables can only go lower. 
\begin{claim}
    For any $k\geq 0$ and pair of random variables  $X(k), X'(k) \in \{\tilde X_{i,\ell}^{\vec u}(k): \vec u^{i,\ell} \in \BigM_{[m\cdot N]}\}$, we have, $$|\E [X(k +1)\cdot   X'(k +1)] - \E [X(k+1) ]\cdot  \E[X'(k+1)]| \leq  |\E [X(k)\cdot   X'(k )] - \E [X(k) ]\cdot  \E[X'(k)]|.$$
\end{claim}
\begin{proof}%\pdc{here unlike above we just use Proof., which seems fine}
 Let $\tilde X_{i,\ell}^{\vec u}(k)$ and $\tilde X_{j,\ell'}^{\vec u'}(k)$ are the selected random variables whose correlation is being resolved.   For any $X(k),X'(k) \notin \{\tilde X_{i,\ell}^{\vec u}(k),\tilde X_{j,\ell'}^{\vec u'}(k)\}$, the condition in the claim trivially satisfies. On the other hand, if $X(k) = \tilde X_{i,\ell}^{\vec u}(k)$ and $X'(k) = \tilde X_{j,\ell'}^{\vec u'}(k)$ then we have $\E[X(k+1)\cdot X'(k+1)] = \E[X(k+1)]\cdot \E[X'(k+1)]$ which implies the condition in the claim. 
 
 To complete the proof, we focus on the case when $X(k) = \tilde X_{i,\ell}^{\vec u}(k)$ and $X'(k+1) \notin \{\tilde X_{i,\ell}^{\vec u}(k),\tilde X_{j,\ell'}^{\vec u'}(k)\}$. The proof of this case is identical to the proof of Claim~\ref{claim:monotonicity_correlation} but we give a detailed proof for the sake of completeness. We observe that,
 $$\begin{aligned}
     &|\E[X(k+1)\cdot X'(k+1)] - \E[X(k+1)]\cdot \E[X'(k+1)]|\\
     =& |(1 - q(k))\cdot \E[X(k)\cdot X'(k)] + q(k)\cdot \alpha \cdot \E[X'(k)]- ((1-q(k)) \cdot \E[X(k)] + q(k) \cdot \alpha) \cdot \E[X'(k)]|\\
     =& (1 - q(k))\cdot|\E[X(k)\cdot X'(k)] - \E[X(k)]\cdot \E[X'(k)]|\\
     \leq & |\E[X(k)\cdot X'(k)] - \E[X(k)]\cdot \E[X'(k)]|.
 \end{aligned}$$
 In the above calculations, $\alpha = 1/2$ if $\E[\tilde X_{i,\ell}^{\vec u}(k) \cdot \tilde X_{j,\ell'}^{\vec u'}(k)] > \E[\tilde X_{i,\ell}^{\vec u}(k)] \cdot \E[\tilde X_{j,\ell'}^{\vec u'}(k)]$ and $\alpha = 1$ otherwise. %The rest of the steps simply follows from the definition of Procedure~\ref{proc:two}. 
\end{proof}
The above claim implies that at any iteration $k$, if $\tilde X_{i,\ell}^{\vec u}(k)$ and $\tilde X_{j,\ell'}^{\vec u'}(k)$ are the selected random variables whose correlation is being resolved, either $i = j$ or $\vec u = \vec u'$ since we have $\E [\tilde X^{\vec u}_{i,\ell}(0 )\cdot   \tilde X^{\vec u'}_{j,\ell}(0 )] =\E [\tilde X^{\vec u}_{i,\ell}(0 ) ]\cdot  \E[\tilde X^{\vec u'}_{j,\ell}(0 )] $ for distinct $\vec u, \vec u' \in \BigM$ and distinct $i, j \in [m]$ at the end of Procedure~\ref{proc:one}. This implies that Procedure~\ref{proc:two} stops after iterative over pair of random variables $X_{i,\ell}^{\vec u},X_{j,\ell'}^{\vec u} $ for $\vec u\in \BigM, i,j \in [m], \ell, \ell' \in [N]$ and $X_{i,\ell}^{\vec u},X_{i,\ell'}^{\vec u'} $ for $\vec u, \vec u' \in \BigM, i \in [m], \ell, \ell' \in [N]$. This completes the proof fo the lemma that the resultant random variables $\{\tilde X_{i,\ell}^{\vec u}: \vec u_{i,\ell} \in \BigM_{{[m\cdot N]}}\}$ at the end of Procedure~\ref{proc:one}, followed by Procedure~\ref{proc:two} are pairwise independent.
\end{proof}

\subsection{Parameters of Procedure~\ref{proc:one} and Procedure~\ref{proc:two}}\label{sec:proc_parmeters}
Next, in order to complete the proof of Theorem~\ref{thm:tv_distance}, we first bound $q_{ij}$ and $q(k)$, i.e, the probability by which Procedure~\ref{proc:one} and Procedure~\ref{proc:two} deviate from the original random variables $\{ X_{i,\ell}^{\vec u}: \vec u_{i,\ell} \in \BigM_{{[m\cdot N]}}\}$. We next obtain upper and lower bound on the parameters of the Procedures~\ref{proc:one} and \ref{proc:two}, respectively. 
%\pdc{why is it a closed form? i see inequalities} closed-form solution to $q_{ij}$ that allows us to obtain the lower and the upper bound on both $q_{ij}, q(k)$ in the following claim. 
\begin{lemma}\label{claim:bound_perturb_prob}
    Let $(i,j)$ be the indices selected at iteration $k$ of Procedure~\ref{proc:one}, then we have, 
    \begin{enumerate}
        \item When $p_{ij}(k)> p_i(k)\cdot p_j(k)$, we have $ \frac{N\cdot \eps_{ij}(k)}{2(p_i(k) + p_j(k))} \leq q_{ij} \leq  \frac{N\cdot \eps_{ij}(k)}{p_i(k) + p_j(k)}$, and
        \item When $p_{ij}(k) < p_{i}(k)\cdot p_{j}(k) $, we have $\varepsilon_{ij} \leq q_{ij} \leq  \frac{\eps_{ij}(k)}{(1 -p_i(k) - p_j(k) )}$
    \end{enumerate}
In addition, for $k\geq 0$, let $\tilde X_{i,\ell}^{\vec u}(k)$ and $\tilde X_{j,\ell'}^{\vec u'}(k)$ are the selected random variables whose correlation is being resolved at the $k$-th iteration of Procedure~\ref{proc:two}. Let $p_1(k) = \E[\tilde X_{i,\ell}^{\vec u}]$ and $p_2(k) = \E[\tilde X_{j,\ell'}^{\vec u'}]$, then we have, 
    \begin{enumerate}
        \item When $\E [\tilde X^{\vec u}_{i,\ell}(k )\cdot   \tilde X^{\vec u'}_{j,\ell}(k )] >\E [\tilde X^{\vec u}_{i,\ell}(k ) ]\cdot  \E[\tilde X^{\vec u'}_{j,\ell}(k )] $, we have $$ \frac{|\E [\tilde X^{\vec u}_{i,\ell}(k )\cdot   \tilde X^{\vec u'}_{j,\ell}(k )] -\E [\tilde X^{\vec u}_{i,\ell}(k ) ]\cdot  \E[\tilde X^{\vec u'}_{j,\ell}(k )] |}{2(p_1(k) + p_2(k))}\leq q(k) \leq  \frac{|\E [\tilde X^{\vec u}_{i,\ell}(k )\cdot   \tilde X^{\vec u'}_{j,\ell}(k )] -\E [\tilde X^{\vec u}_{i,\ell}(k ) ]\cdot  \E[\tilde X^{\vec u'}_{j,\ell}(k )] |}{p_1(k) + p_2(k)},$$
        \item When $\E [\tilde X^{\vec u}_{i,\ell}(k )\cdot   \tilde X^{\vec u'}_{j,\ell}(k )] <\E [\tilde X^{\vec u}_{i,\ell}(k ) ]\cdot  \E[\tilde X^{\vec u'}_{j,\ell}(k )] $, we have $$|\E [\tilde X^{\vec u}_{i,\ell}(k)\cdot   \tilde X^{\vec u'}_{j,\ell}(k )] -\E [\tilde X^{\vec u}_{i,\ell}(k )|\leq q(k) \leq  \frac{|\E [\tilde X^{\vec u}_{i,\ell}(k)\cdot   \tilde X^{\vec u'}_{j,\ell}(k )] -\E [\tilde X^{\vec u}_{i,\ell}(k )| ]\cdot  \E[\tilde X^{\vec u'}_{j,\ell}(k )] |}{1- p_1(k) - p_2(k)},$$
    \end{enumerate}
\end{lemma}
\begin{proof}
    In both cases, we want to find $q_{ij}$ such that $p_{ij}(k+1) =p_i(k+1)\cdot p_j(k+1) $. We analyze both cases of Procedure~\ref{proc:one} separately, %\pdc{is this how we format cases elsewhere in the paper?}
    
     \paragraph{Case-1 ($p_{ij}(k) > p_i(k)\cdot p_j(k)$)} In this case, by construction, we have,
    \begin{equation*}
        p_i(k+1) = (1- q_{ij})\cdot p_{i}(k) + \frac{q_{ij}}{2N} \quad  \text{ and  } \quad p_{ij}(k+1) = (1- q_{ij})\cdot p_{ij}(k). 
     \end{equation*}
    For simplicity in notations, we let $a_{ij} = \frac 1 {4N^2} - \frac{1}{2N}\cdot (p_i(k) + p_j(k))$ and $b_{ij} = \frac{1}{2N}\cdot (p_i(k) + p_j(k)) + \eps_{ij}(k) + p_i(k)\cdot p_j(k)$. By constraint, $p_{ij}(k+1) =p_i(k+1)\cdot p_j(k+1) $, we get,
    $$\begin{aligned}
        &(1- q_{ij})\cdot p_{ij}(k) = \left((1- q_{ij})\cdot p_{i}(k) + \frac{q_{ij}}{2N} \right)\cdot \left((1- q_{ij})\cdot p_{j}(k) + \frac{q_{ij}}{2N} \right)\\
        \implies & \left(\frac 1 {4N^2} - \frac{1}{2N}\cdot (p_i(k) + p_j(k))\right)\cdot q_{ij}^2 + \left(\frac 1 {2N}\cdot (p_i(k) + p_j(k)) + \eps_{ij}(k) + p_i(k)\cdot p_j(k) \right)\cdot q_{ij} - \eps_{ij}(k) = 0\\
        \implies& q_{ij} = \frac{ - b_{ij} + \sqrt{b_{ij}^2 + 4\cdot \eps_{ij}(k)\cdot a_{ij}}}{2\cdot a_{ij}} \implies q_{ij} = \frac{\eps_{ij}(k)}{ b_{ij} + \sqrt{b_{ij}^2 + 4\cdot \eps_{ij}(k)\cdot a_{ij}}} \leq \frac{N\cdot \varepsilon_{ij}(k)}{p_i(k) + p_j(k)}.
    \end{aligned}$$
    Above the last inequality follows because $b_{ij}\geq \frac 1 2 \cdot ( p_{i}(k) + p_j(k))$. In addition, we get $q_{ij}\geq \frac{\varepsilon_{ij}}{2\cdot b_{ij} + 2\varepsilon_{ij}\cdot a_{ij}} \geq \frac{\varepsilon_{ij}}{2\cdot (p_i(k) + p_j(k))}$. We conclude the proof of this case.     
    
    \paragraph{Case-2 ($p_{ij}(k) < p_i(k)\cdot p_j(k)$)} In this case, by construction, we have,
    \begin{equation*}
        p_i(k+1) = (1- q_{ij})\cdot p_{i}(k) + q_{ij} \quad  \text{ and  } \quad p_{ij}(k+1) = (1- q_{ij})\cdot p_{ij}(k) +q_{ij}. 
    \end{equation*}
    For simplicity in notations, we let $a_{ij} = 1 + p_i(k)\cdot p_j(k) - (p_i(k) + p_j(k))$ and $b_{ij} = (1 - p_i(k) - p_j(k)$  $- p_i(k)\cdot p_j(k) + \eps_{ij}(k) )$. By constraint, $p_{ij}(k+1) =p_i(k+1)\cdot p_j(k+1) $, we get,
    $$\begin{aligned}
        &(1- q_{ij})\cdot p_{ij}(k) +q_{ij} = \left( (1- q_{ij})\cdot p_{i}(k) + q_{ij} \right)\cdot \left( (1- q_{ij})\cdot p_{j}(k) + q_{ij} \right)\\
        \implies & \left(1 + p_i(k)\cdot p_j(k) - (p_i(k) + p_j(k))\right)\cdot q_{ij}^2 - \left(1 - p_i(k) - p_j(k) - p_i(k)\cdot p_j(k) + \eps_{ij}(k) \right)\cdot q_{ij} + \eps_{ij}(k) = 0\\
        \implies& q_{ij} = \frac{  b_{ij} - \sqrt{b_{ij}^2 - 4\cdot \eps_{ij}(k)\cdot a_{ij}}}{2\cdot a_{ij}} \implies q_{ij} = \frac{\eps_{ij}(k)}{ b_{ij} + \sqrt{b_{ij}^2 - 4\cdot \eps_{ij}(k)\cdot a_{ij}}}.
    \end{aligned}$$
    Above the last inequality follows because $b_{ij}\geq (1 - p_{i}(k) - p_j(k))$.
 %   To conclude the proof of the claim, we need to show that for any $k\in [m^2]$, $p_i(k+1)\geq  p_i(k)$. We first observe that $p_i(k) \leq \frac 1 2$ for large enough $N,d >0$. This implies that for any $k>0$, we have $p_i(k+1) \geq (1-q_{ij}) p_i(k) + q_{ij}/2 \geq p_i(k) $. This concludes the proof. 

The proof of the second part for Procedure~\ref{proc:two} is identical to the proof of claim for the first part by replacing $\varepsilon_{ij}(k) = |\E [\tilde X^{\vec u}_{i,\ell}(0 )\cdot   \tilde X^{\vec u'}_{j,\ell}(0 )] -\E [\tilde X^{\vec u}_{i,\ell}(0 ) ]\cdot  \E[\tilde X^{\vec u'}_{j,\ell}(0 )] |$, $p_i(k), p_j(k)$ by $p_1(k), p_2(k)$ (defined in the statement of the lemma.) and $q$ by $N\cdot q$. This concludes the proof.
\end{proof}

\subsection{Analysis of Procedure~\ref{proc:one}}\label{sec:proc_1_analysis}
Next, we bound the marginals and bias in the set of random variables  $\{ X_{i,\ell}^{\vec u}(\bar k): \vec u_{i,\ell} \in \BigM_{{[m\cdot N]}}\}$ at the end of Procedure~\ref{proc:one}. Here, $\bar k$ denotes the number of iterations performed of Procedure~\ref{proc:one}.  We prove the following claim that lower bounds the marginals $p_i(\bar k)$ for all $i\in [m]$.
\begin{claim}\label{claim:bound_procedure1}
    For any $k\geq 0$ and $i\in [m]$, we have that $p_{i}(k)\geq p_i(0)$ at the end of the $k$-th iteration of Procedure~\ref{proc:one}. In addition, we have that,
    \begin{enumerate}
        \item For any pair of distinct $i,j \in [m]$, $q_{ij} \leq \frac{1}{(M-1)}$ when $p_{ij}(0)> p_{i}(0) \cdot p_j(0)$ and $q_{ij} \leq \frac{1}{(M-1)^2 \cdot N^2}$ when $p_{ij}(0) < p_{i}(0) \cdot p_j(0)$. 
        \item For any $i\in [m]$, $p_i(\bar k) \leq p_i(0) + \frac{m-1}{N \cdot (M-1)}$ at the end of Procedure~\ref{proc:one}.
    \end{enumerate}   
\end{claim}
\begin{proof}
     We first observe that $p_i(k) \leq \frac 1 {2N}$ for large enough $N, M >0$ which follows inductively. This implies that for any $k>0$, we have $p_i(k+1) = p_i(k) $ if weight class $i$ is not processed at the iteration $k$ of Procedure~\ref{proc:one}. Otherwise if weight classes $i,j$ is processed for some $j\in [m]$, we have $ p_i(k+1 )\geq  (1-q_{ij})\cdot  p_i(k) + q_{ij}/2 \geq p_i(k) $. This concludes the proof of the first part. 
     
     Second, we observe that for any distinct $i, j \in [m]$, we can bound $q_{ij}$ as follows:
     $$\begin{aligned}
         q_{ij} &\leq \frac{N\cdot \eps_{ij}(k)}{(p_i(k)+ p_j(k))}\leq \frac{\eps_{ij}(0)}{(p_i(0)+ p_j(0))}\\
         &\leq \frac{N\cdot |\frac{M}{M-1}\cdot {p}_{ij} - p_i\cdot p_j|}{M^2\cdot N^2\cdot \min(p_i(0), p_j(0))} \leq \frac{N\cdot M}{M-1} \cdot \frac{\min(p_i(0), p_j(0))}{M\cdot N \cdot \min(p_i(0), p_j(0))}\\
         &= \frac{1}{(M-1)}.
     \end{aligned}$$
     Above, the first inequality holds due to Claim~\ref{claim:bound_perturb_prob}, the second inequality holds because  of Claim~\ref{claim:monotonicity_correlation} and the first part of the claim $p_{i}(k)\geq p_i(0)$.  In the other case, we have, 
     $$\begin{aligned}
         q_{ij} &\leq \frac{\eps_{ij}(k)}{(1 - p_i(k) -  p_j(k))}\leq \frac{\eps_{ij}(0)}{1-p_i(k) - p_j(k)} \leq \frac{N\cdot |\frac{M}{M-1}\cdot {p}_{ij} - p_i\cdot p_j|}{M^2\cdot N^2\cdot (1-1/M)} \leq \frac{1}{(M-1)^2 \cdot N^2}.
     \end{aligned} $$
          
     Finally, the probability $p_i(\bar k)$ at the end of Procedure~\ref{proc:one} can be bound by union bound on the event that at least one of the Bernoulli $Z\sim \operatorname{Ber}(q_{ij})$ for some $j\neq i \in [m]$ turns out to be $1$ throughout Procedure~\ref{proc:one}. This implies,
     \begin{equation*}
         p_i(\bar k) \leq p_i(0) + \frac{1}{N} \cdot \sum_{j\neq i} q_{ij} \leq p_i(0) + \frac{m-1}{(M-1)\cdot N}. 
     \end{equation*}
     This completes the proof. 
     %\pdc{add: This completes the proof.}
\end{proof}
Next, we analyze the type of remaining correlation after the end of Procedure~\ref{proc:one}. We recall that at the end of Procedure~\ref{proc:one}, the only correlation we have left is between $X^{\vec u}_{i,\ell}$ and $X^{\vec u'}_{i,\ell'}$ for $\vec u, \vec u' \in \BigM$, $i\in [m]$ and $\ell, \ell \in [N]$ and between $X^{\vec u}_{i,\ell}, X^{\vec u}_{j, \ell'}$ for $\vec u \in \M$, $i, j \in [m]$ and $\ell, \ell' \in [N]$. We make the following claim. 
\begin{claim}\label{obs:correlation_proc1}
    Let Procedure~\ref{proc:one} run for $\bar k$ many iterations. The following statements holds:
    \begin{enumerate}
        \item For any $i\in [m]$ and pair of $\vec u, \vec u' \in \BigM$ and $\ell, \ell'\in [N]$, the random variables $X_{i,\ell}^{\vec u}(\bar k)$ and $X_{i,\ell'}^{\vec u'}(\bar k)$ are positively correlated. In addition, $$\E [ X^{\vec u}_{i,\ell}(\bar k )\cdot    X^{\vec u'}_{i,\ell'}(\bar k )] -\E [ X^{\vec u}_{i,\ell}( \bar k ) ]\cdot  \E[ X^{\vec u'}_{i,\ell'}(\bar k )] \leq \frac{m}{4\cdot N^2 \cdot (M-1)}. $$
        \item For any distinct $i,j \in [m]$,  $\vec u \in \BigM$ and $\ell, \ell' \in [N]$, we have $X_{i,\ell}^{\vec u}(\bar k)$ and $X_{j,\ell'}^{\vec u'}(\bar k)$ are negatively correlated. In addition,  $$\E [ X^{\vec u}_{i,\ell}(\bar k ) ]\cdot  \E[ X^{\vec u'}_{j,\ell}(\bar k )] - \E [ X^{\vec u}_{i,\ell}(\bar k )\cdot    X^{\vec u'}_{j,\ell}(\bar k )]  \leq \frac{1}{N^2 \cdot M^2}. $$
    \end{enumerate}
\end{claim}
\begin{proof}
We first consider any $i\in [m]$ and pair of $\vec u, \vec u' \in \BigM$ and $\ell, \ell'\in [N]$, we observe that before we start Procedure~\ref{proc:one}, $\E [X^{\vec u}_{i,\ell}(0 )\cdot    X^{\vec u'}_{i,\ell'}(0 )] =0$ as there is at most one element $\vec v\in \M$ that takes weight $w_i$. Therefore, at iteration $\bar k$, $ X^{\vec u}_{i,\ell}(\bar k )\cdot   X^{\vec u'}_{j,\ell}(\bar k ) = 1$ holds iff  $ X^{\vec u}_{i,\ell}(\bar k )\cdot   X^{\vec u'}_{j,\ell}(\bar k )$ with probability $q_{ij}$ while resolving correlation between $i, j \in [m]$ where $p_{ij} < p_i \cdot p_j$ or with probability $\frac{q_{ij}}{4\cdot N^2}$ while resolving correlation between $i,j \in [m]$ where $p_{ij} > p_i \cdot p_j.$ This implies, 
$$\begin{aligned}
    \E[ X^{\vec u}_{i,\ell}(\bar k )\cdot    X^{\vec u'}_{j,\ell}(\bar k )] &= \sum_{j\neq i: (p_{ij} < p_i \cdot p_j)} q_{ij} + \sum_{j\neq i: (p_{ij} > p_i \cdot p_j)} \frac{q_{ij}}{4N^2}\\
    & \leq \sum_{j\neq i: (p_{ij} < p_i \cdot p_j)} \frac{1}{(M-1)^2 \cdot N^2} +  \sum_{j\neq i: (p_{ij} > p_i \cdot p_j)} \frac{1}{4N^2} \cdot \frac{1}{(M-1)}\\
    &\leq \frac{m}{4\cdot N^2 \cdot (M-1)}.
\end{aligned}$$
Above, the first inequality follows from Claim~\ref{claim:bound_procedure1} and the last inequality follows from the fact that $ \frac{1}{4N^2} \cdot \frac{1}{(M-1)} > \frac{1}{(M-1)^2 \cdot N^2} $ for $M>3$. The positive correlation between $ X^{\vec u}_{i,\ell}(\bar k ),    X^{\vec u'}_{j,\ell}(\bar k )$ follows from the fact that Procedure~\ref{proc:one} mixes the original distribution with positively correlated distribution among  $ X^{\vec u}_{i,\ell}(\bar k ),    X^{\vec u'}_{j,\ell}(\bar k )$ with marginal higer than the original marginals. This above bound on $\E[ X^{\vec u}_{i,\ell}(\bar k )\cdot    X^{\vec u'}_{j,\ell}(\bar k )]$ concludes the proof of the first part. 

Finally, for any distinct $i,j \in [m]$,  $\vec u \in \BigM$ and $\ell, \ell' \in [N]$, we have $X_{i,\ell}^{\vec u}(\bar k)\cdot X_{j,\ell'}^{\vec u'}(\bar k) =1$, only if while resolving correlation between $i,j $ during Procedure~\ref{proc:one}, it ends up assigning $X_{i,\ell}^{\vec u}(k+1) = 1:\forall \vec u \in  \BigM, \ell \in [N]$ and $X_{j,\ell}^{\vec u}(k+1) = 1:\forall \vec u \in  \BigM, \ell \in [N]$. In this case, $q_{ij}$ satisfies, $p_{ij}(k+1) = p_{i}(k) \cdot p_j(k+1)$. Since,  $\E [X_{i,\ell}^{\vec u}( k)\cdot X_{j,\ell'}^{\vec u'}( k)] < p_{ij}(k)$ at round $k$, $X_{i,\ell}^{\vec u}( k), X_{j,\ell'}^{\vec u'}( k)$ remains negatively correlated at iteration $k$ while the correlation between $i, j$ is being resolved. During all the other iterations of Procedure~\ref{proc:one}, Claim~\ref{claim:monotonicity_correlation} ensures that $X_{i,\ell}^{\vec u}(\bar k), X_{j,\ell'}^{\vec u'}(\bar k)$ remains negatively correlated. 
\end{proof}

\subsection{Analysis of Procedure~\ref{proc:two}}\label{sec:proc_2_analysis}
Given the structure and bound on the correlation between $X^{\vec u}_{i,\ell}$ and $X^{\vec u'}_{i,\ell'}$ for $\vec u, \vec u' \in \BigM$, $i\in [m]$ and  $\ell, \ell' \in [N]$ and between $X^{\vec u}_{i,\ell}, X^{\vec u}_{j, \ell'}$ for $\vec u \in \M$, $i, j \in [m]$ and $\ell, \ell' \in [N]$ in Proposition~\ref{obs:correlation_proc1}, we next show that the probability that Procedure~\ref{proc:two} alters the original distribution is small which is crucial to bound the total variation distance between the original and the final distribution. 

First, we observe that the order in which Procedure~\ref{proc:two} resolves correlation does not affect its termination. Therefore, we first analyze the total probability of deviation from the original distribution while resolving the positive correlations. For simplicity, we assume that for the first $M\cdot N-1$ many iterations of Procedure~\ref{proc:two}, we resolve correlation between $X_{i,1}^{\vec u}$ with $X_{i,\ell}^{\vec u'}$ for all ${\vec u'}^{i,\ell'} \neq {\vec u}^{i,1}\in \BigM_{[m\cdot N]}$. We first bound $\sum_{k\leq MN-1} q_k$. Before, we obtain the bound on $\sum_{k\leq MN-1} q_k$, we prove the following crucial claims:
\begin{claim}\label{claim:prob_evol}
    For $k\leq  MN - 1$, we have $$\E[\tilde X_{i,1 }^{\vec u}(k)]  = \prod_{p<k}(1 - q_{p})  \cdot \E[\tilde X_{i,1 }^{\vec u}(0)] +\frac{1}{2} \cdot \sum_{p\leq k} q_p\cdot  \left( \prod_{k \geq j>p}\left (1 - \frac{q_{j}}{2} \right) \right).$$
\end{claim}
\begin{proof}
    Since, fir $k<MN-1$, we resolves the correlation between $X_{i,1}^{\vec u}$ with $X_{i,\ell}^{\vec u'}$, at each iteration $p<k$, we assign  $X_{i,1}^{\vec u}(p) =1$ independently at each round with probability $q_{p}/2$. Finally, we have $\tilde X_{i,1 }^{\vec u}(k) = 1$ iff we have $\tilde X_{i,1}^{\vec u}(p) =1$ and Procedure~\ref{proc:two} does not assign $\tilde X_{i,1}^{\vec u}(j)$ for all $j>p$. Combining the above argument for the set of disjoint events $\tilde X_{i,1}^{\vec u}(p-1) =0$ and $\tilde X_{i,1}^{\vec u}(j) =1$ for all $j\geq p$, we obtain the proof of the claim.
\end{proof}
Next, we prove the monotonicity of the probability $q_k$ for $k = 1,\dots , MN-1$.
\begin{claim}\label{claim:monotonicity_proc2}
    For any $k< MN-1$, we have $q_{k} > q_{k+1}$. 
\end{claim}
\begin{proof}
    This claim simply follows from the fact that $\E[\tilde X_{i,1 }^{\vec u}(k+1)] \geq \E[\tilde X_{i,1 }^{\vec u}(k)]$,  $\E [\tilde  X^{\vec u}_{i,1}(k +1)\cdot    \tilde X^{\vec u'}_{i,\ell}(k+1)] -\E [ \tilde X^{\vec u}_{i,1}( k+1) ]\cdot  \E[ \tilde X^{\vec u'}_{i,\ell}( k +1 )] = (1 - q_k)\cdot( \E [\tilde  X^{\vec u}_{i,1}(k)\cdot    \tilde X^{\vec u'}_{i,\ell}(k)] -\E [ \tilde X^{\vec u}_{1,\ell}( k) ]\cdot  \E[ \tilde X^{\vec u'}_{i,\ell}( k )])$ and $q_{k+1} \leq \frac{\E [\tilde  X^{\vec u}_{i,1}(k +1)\cdot    \tilde X^{\vec u'}_{i,\ell}(k+1)] -\E [ \tilde X^{\vec u}_{i,1}( k+1) ]\cdot  \E[ \tilde X^{\vec u'}_{i,\ell}( k +1 )]}{\E[\tilde X_{i,1}^{\vec u}(k)]}$ due to Lemma~\ref{claim:bound_perturb_prob} and Claim~\ref{claim:monotonicity_correlation}.
\end{proof}
Finally, we are now ready to bound the the probability $q_k$. 
\begin{lemma}\label{lem:dev_bound_pos_proc2}
    For $k\leq MN-1$, we have $$q_k \leq \frac{\E [\tilde  X^{\vec u}_{i,1}(0)\cdot    \tilde X^{\vec u'}_{i,\ell}(0)] -\E [ \tilde X^{\vec u}_{i,1}( 0) ]\cdot  \E[ \tilde X^{\vec u'}_{i,\ell}( 0 )]}{ \E[\tilde X_{i,1}^{\vec u} (0)] + \frac{1}{4} \cdot \sum_{p\leq k}  \frac{\varepsilon}{ p_0 +\frac{p\cdot \varepsilon}{2\cdot \E[ \tilde X^{\vec u'}_{i,\ell}( 0 )]}}}. $$
\end{lemma}
\begin{proof}
    First, we observe that for the first $MN-1$ iterations of Procedure~\ref{proc:two}, we resolve correlation between $\tilde X_{i,1}^{\vec u}$ and $\tilde X_{i,\ell}^{\vec u'}$ therefore due to Claim~\ref{claim:monotonicity_correlation}, at iteration $k$ when we resolve correlation between $\tilde X_{i,1}^{\vec u}$ and $\tilde X_{i,\ell}^{\vec u'}$  for some $\ell \in [N]$ and $\vec u' \in \BigM$, we have,
    \begin{equation*}
        \E [\tilde  X^{\vec u}_{i,1}(k )\cdot    \tilde X^{\vec u'}_{i,\ell}(k)] -\E [ \tilde X^{\vec u}_{i,1}( k) ]\cdot  \E[ \tilde X^{\vec u'}_{i,\ell}( k )] = \left( \prod_{p < k} (1 - q_p)\right) \cdot \varepsilon,
    \end{equation*}
    for $\varepsilon = \E [\tilde  X^{\vec u}_{i,1}(0)\cdot    \tilde X^{\vec u'}_{i,\ell}(0)] -\E [ \tilde X^{\vec u}_{i,1}( 0) ]\cdot  \E[ \tilde X^{\vec u'}_{i,\ell}( 0 )]$. For the sake of simplicity, we define $p_0 := \E[\tilde X_{i,1 }^{\vec u}(0)]$. Therefore, we can bound, 
    $$\begin{aligned}
        q_k &\leq \frac{\left( \prod_{p < k} (1 - q_p)\right) \cdot \varepsilon}{\E[X_{i,1}^{\vec u} (k)]} = \frac{\left( \prod_{p < k} (1 - q_p)\right) \cdot \varepsilon}{\prod_{p<k}(1 - q_{p})  \cdot p_0 +\frac{1}{2} \cdot \sum_{p\leq k} q_p\cdot  \left( \prod_{k \geq j>p}\left (1 - \frac{q_{j}}{2} \right) \right)} \\
        &= \frac{\varepsilon}{  p_0 +\frac{1}{2} \cdot \frac{\sum_{p\leq k} q_p\cdot  \left( \prod_{k \geq j>p}\left (1 - \frac{q_{j}}{2} \right) \right)}{\prod_{p<k}(1 - q_{p})}}\\
        &\leq \frac{\varepsilon}{  p_0 +\frac{1}{2} \cdot \frac{\sum_{p\leq k} q_p\cdot  \left( \prod_{k \geq j>p}\left (1 - q_{j} \right) \right)}{\prod_{p<k}(1 - q_{p})}}
    \end{aligned}$$
    Above, the first inequality holds due to Claim~\ref{claim:bound_perturb_prob}, the first equality holds due to Claim~\ref{claim:prob_evol}, the second equality holds by simple re-arrangement and the last inequality follows because $1- \frac{q_j}{2} \geq 1- q_j$. We now focus on lower-bounding the term $\frac{\sum_{p\leq k} q_p\cdot  \left( \prod_{k \geq j>p}\left (1 - q_{j} \right) \right)}{\prod_{p<k}(1 - q_{p})}$. We now again expand the the expression,
    $$\begin{aligned}
        \frac{\sum_{p\leq k} q_p\cdot  \left( \prod_{k \geq j>p}\left (1 - q_{j} \right) \right)}{\prod_{p<k}(1 - q_{p})} &\geq \frac{1}{2}\cdot \frac{\sum_{p\leq k}  \frac{\varepsilon\cdot \left( \prod_{p < k}\left (1 - q_{j} \right) \right)}{\prod_{j<p}(1 - q_{p})  \cdot p_0 +\frac{1}{2} \cdot \sum_{j\leq p} q_j\cdot  \left( \prod_{p \geq j'>j}\left (1 - \frac{q_{j'}}{2} \right) \right)} }{\left( \prod_{p < k}\left (1 - q_{j} \right) \right)}\\
        &=\frac{1}{2}\cdot \sum_{p\leq k}  \frac{\varepsilon}{\prod_{j<p}(1 - q_{p})  \cdot p_0 +\frac{1}{2} \cdot \sum_{j\leq p} q_j\cdot  \left( \prod_{p \geq j'>j}\left (1 - \frac{q_{j'}}{2} \right) \right)}\\
        &\geq \frac{1}{2}\cdot \sum_{p\leq k}  \frac{\varepsilon}{ p_0 +\frac{1}{2} \cdot \sum_{j\leq p} q_j}\\
        &\geq \frac{1}{2}\cdot \sum_{p\leq k}  \frac{\varepsilon}{ p_0 +\frac{p\cdot \varepsilon}{2\cdot p_0}}.
    \end{aligned}$$
    Above, the first inequality holds because of Claim~\ref{claim:bound_perturb_prob}, the second equality holds because $\left (1 - q_{j} \right) <1 $. The last inequality holds due to Claim~\ref{claim:monotonicity_proc2}, i.e. $q_j \leq q_1$ and due to Claim~\ref{claim:bound_perturb_prob} $q_1 \leq \frac{\eps}{p_0}$. Combining this with the earlier inequality, we obtain,
    \begin{equation*}
        q_k \leq \frac{\varepsilon}{  p_0 +\frac{1}{4} \cdot \sum_{p\leq k}  \frac{\varepsilon}{ p_0 +\frac{p\cdot \varepsilon}{2\cdot p_0}}} 
    \end{equation*}
    This completes the proof. 
    %\pdc{add:, at end if displayed math and "which completed the proof." here}
\end{proof}
Next, we analyze the bound on $q_k$ obained in the previous lemma. First, wre observe that,
\begin{claim}\label{obs:monotonicity_of_bound}
    For $\varepsilon\in (0,1)$,  $p_{0} \in (0,1)$ and $\frac{\eps}{p_0^2} >1$, function $h(\eps, p_0) = \frac{\varepsilon}{  p_0 +\frac{1}{4} \cdot \sum_{p\leq k}  \frac{\varepsilon}{ p_0 +\frac{p\cdot \varepsilon}{2\cdot p_0}}}$ is increasing in $\varepsilon$ and decreasing in $p_0$. 
\end{claim}
\begin{proof}
    First we prove the function's monotonicity in $\varepsilon$. Absolutely, let's analyze the given function and prove that it's increasing with respect to $\varepsilon$.
We let $D(\varepsilon) = p_0 +\frac{1}{4} \cdot \sum_{p\leq k} \frac{\varepsilon}{ p_0 +\frac{p\cdot \varepsilon}{2\cdot p_0}}.$ We need to show that, $ \frac{D(\varepsilon)  - \varepsilon \cdot D'(\varepsilon)}{[D(\varepsilon)]^2}\geq 0$ which is eqvivelenet to proving that $D(\varepsilon) - \varepsilon \cdot D'(\varepsilon) > 0$. We observe that,
$D'(\varepsilon) = \frac{1}{4} \cdot \sum_{p\leq k} \frac{p_0}{[p_0 + \frac{p\cdot \varepsilon}{2\cdot p_0}]^2}$. This implies that,
$$p_0 +\frac{1}{4} \cdot \sum_{p\leq k} \frac{\varepsilon}{ p_0 +\frac{p\cdot \varepsilon}{2\cdot p_0}} - \varepsilon \cdot \frac{1}{4} \cdot \sum_{p\leq k} \frac{p_0}{[p_0 + \frac{p\cdot \varepsilon}{2\cdot p_0}]^2} =p_0 + \frac{1}{4} \sum_{p\leq k} \left(\frac{p\varepsilon^2}{2p_0(p_0 + \frac{p\varepsilon}{2p_0})^2}\right) > 0. $$
Next, we analyze its monotonicity property in $p_0$. 

Let's analyze the function with respect to $p_0$. We let, $D(p_0) = p_0 + \frac{1}{4} \sum_{p \le k} \frac{\varepsilon}{p_0 + \frac{p\varepsilon}{2p_0}}$. Therefore, to complete the claim, we need to show that $D'(p_0) >0$. We observe that $D'(p_0) =1 - \frac{\varepsilon}{4} \sum_{p \le k} \frac{1 - \frac{p\varepsilon}{2p_0^2}}{\left( p_0 + \frac{p\varepsilon}{2p_0} \right)^2} $. Since, $\frac{\eps}{p_0^2}>1$ (due to positive correlation at the beginning of Procedure~\ref{proc:two}), we get $$D'(p_0)>1 - \frac{\varepsilon}{4} \sum_{p \le k} \frac{1 - \frac{p}{2}}{\left( p_0 + \frac{p\varepsilon}{2p_0} \right)^2}   > 1.$$ Above the last inequality follows because $k >2$.
\end{proof}

Finally, due to Claim~\ref{obs:correlation_proc1}, we get the following lemma.
\begin{lemma}\label{lem:proc2_pos}
    Let Procedure~\ref{proc:two} resolve positive correlations for the first $k^* \leq m\cdot M^2\cdot N$many iterations, then we have $$\sum_{k\leq k^*}q_k \leq O \left( \frac{m\cdot M^2}{\log (M\cdot N)} \right) \leq O\left(  \frac{m}{M}\right).$$ 
\end{lemma}
\begin{proof}
    First, we let the procedure run for each $\vec u\in \BigM$ and resolve positive correlations for all pairs  $X_{i,1}^{\vec u}$ with $X_{i,\ell}^{\vec u'}$ for all ${\vec u'}^{i,\ell'} \neq {\vec u}^{i,1}\in \BigM_{[m\cdot N]}$. Since the correlation between the pair of random variables goes down after each iteration, we can bound the total deviation  $\sum_{k\leq k^*}q_k \leq M \cdot \sum_{k=1}^{MN-1} q_k$. In other words, we need to bound the total deviation while resolving the correlation between for all ${\vec u'}^{i,\ell'} \neq {\vec u}^{i,1}\in \BigM_{[m\cdot N]}$ for the first considered element $\vec u\in \BigM$. We can bound, 
    $$\begin{aligned}
        \sum_{k=1}^{MN-1} q_k &\leq \sum_{k=1}^{MN-1} \frac{\varepsilon}{  p_0 +\frac{1}{4} \cdot \sum_{p\leq k}  \frac{\varepsilon}{ p_0 +\frac{p\cdot \varepsilon}{2\cdot p_0}}}\\
        & \leq \sum_{k=1}^{MN-1} \frac{\frac{m}{N^2 (M-1)}}{\frac{1}{MN} + \frac{1}{4} \cdot \sum_{p\leq k}\frac{\frac{m}{N^2 (M-1)}}{\frac 1 {MN} + p\cdot \frac{m}{2N}}}\\
        &\leq \sum_{k=1}^{MN-1} \frac{\frac{m}{N^2 (M-1)}}{\frac{1}{MN} + \frac{1}{2MN} \cdot \sum_{p\leq k}\frac{1}{p+1}}\\
        &\leq \frac{2m}{N} \cdot \sum_{k=1}^{MN-1} \frac{1}{1 + \log k} \leq O \left( \frac{m\cdot M}{\log(MN)} \right). 
    \end{aligned}$$
    Above, the first inequality holds due to Claim~\ref{lem:dev_bound_pos_proc2}. The second inequality holds due to Claim~\ref{obs:monotonicity_of_bound} and $p_0 \geq \frac{1}{M\cdot N}$ and $\varepsilon \leq \frac{m}{(M-1)\cdot N}$. The third inequality holds because $\sum_{p\leq k} \frac{1}{p+1}\geq \frac{1}{\log k}$ and the final inequality holds because $\sum_{k=1}^{MN} \frac{1}{\log k}$ is the order of $O(MN /\log (MN))$. Since $N\geq \Omega\left ( 2^{M^3}\right)$, we conclude the lemma. 
\end{proof}

Next, we analyze the total probability of deviation from the original distribution while resolving the negative correlations. This case is simpler because the correlation resolution requires to alter the distribution with a small probability.
\begin{lemma}\label{lem:proc2_neg}
     For any $k\geq 0$, $i\in [m]$, $\vec u\in \BigM$ and $\ell \in [N]$, we have $\E[\tilde X_{i,\ell}(k)]\geq \E[\tilde X_{i,\ell}(0)]$ at the end of the $k$-th iteration of Procedure~\ref{proc:two}. In addition, at iteration $k$, for distinct $i, j \in [m]$, $\vec u\in \BigM$ and $\ell, \ell' \in [N]$ if correlation between $\tilde X_{i,\ell}^{\vec u}(k)$ and $\tilde X_{j,\ell}^{\vec u}(k)$ being resolved then, $q_k \leq \frac{M\cdot m}{(M-1)^3 \cdot N^2}. $ 
\end{lemma}
\begin{proof}
    We first observe that $\E[\tilde X_{i,\ell}(k)] \leq \frac 1 {2}$ for large enough $N, M >0$ which follows inductively. This implies that for any $k>0$, we have $\E[\tilde X_{i,\ell}(k+1)]= \E[\tilde X_{i,\ell}(k)] $ if $\tilde X_{i,\ell}(k)$ is not processed at the iteration $k$ of Procedure~\ref{proc:one}. Otherwise, we have $ \E[\tilde X_{i,\ell}(k+1)]\geq  (1-q_{k})\cdot  \E[\tilde X_{i,\ell}(k)] + q_{ij}/2 \geq \E[\tilde X_{i,\ell}(k)] $. This concludes the proof of the first part. Next, Lemma~\ref{claim:bound_perturb_prob} implies that,
    $$\begin{aligned}
         q_{k} &\leq \frac{\E [ \tilde X^{\vec u}_{i,\ell}( k ) ]\cdot  \E[ \tilde X^{\vec u}_{j,\ell'}(k )] - \E [ \tilde X^{\vec u}_{i,\ell}(k )\cdot    X^{\vec u}_{j,\ell'}(k )]}{(1 - \E[ X^{\vec u}_{i,\ell}(k )] - \E[ X^{\vec u}_{i,\ell}( k )])}\\
         &\leq \frac{\E [ X^{\vec u}_{i,\ell}(0 ) ]\cdot  \E[ X^{\vec u}_{j,\ell'}(0)] - \E [ X^{\vec u}_{i,\ell}(0 )\cdot    X^{\vec u}_{j,\ell'}(0 )]}{(1 - \E[ X^{\vec u}_{i,\ell}(k)] - \E[ X^{\vec u}_{i,\ell'}(k )])}\\
         &\leq \frac{\frac{m}{(M-1)^2\cdot N^2}}{1-p_i(k) - p_j(k)} \leq \frac{M\cdot m}{(M-1)^3 \cdot N^2}.
    \end{aligned} $$

Above, the first inequality follows from Claim~\ref{claim:bound_perturb_prob}, the second inequality follows because of Claim~\ref{claim:monotonicity_correlation}.  
\end{proof}

\subsection{Wrapping Things Up}\label{sec:wrap-up}

Finally, we show that the resultant random variables at the end of the procedure are close to the original random variables in terms of total variation distance and complete the proof of Theorem~\ref{thm:tv_distance}.%\pdc{above we write, we prove Theorem 8.1?}

\begin{proof}[Proof of Theorem~\ref{thm:tv_distance}]
  Let $\tilde{\vec X}:= \tilde X_{i,\ell}^{\vec u}: \forall \vec u^{i,\ell}\in \BigM_{[m\cdot N]}$ be the random variables after applying Procedures~\ref{proc:one} and~\ref{proc:two} on the set of random variables $\vec X:= X_{i,\ell}^{\vec u}: \forall \vec u^{i,\ell}\in \BigM_{[m\cdot N]}$. Due to Lemma~\ref{lem:pairwise_ind}, we have that $\tilde X_{i,\ell}^{\vec u}: \forall \vec u^{i,\ell}\in \BigM_{[m\cdot N]}$ are pairwise independent. Finally, we can bound the total variation distance between $\vec X$ and $\tilde {\vec X}$. 
  
  Let $k_1$ and $k_2$ be the number of iteration performed by Procedures~\ref{proc:one} and~\ref{proc:two}, respectively. We divide the iterations of Procedure~\ref{proc:two} into two parts, during the first $\bar k_{2}$ many iterations, Procedure~\ref{proc:two} resolves positive correlation between $X^{\vec u}_{i,\ell}$ and $X^{\vec u'}_{i,\ell'}$ for $\vec u, \vec u' \in \BigM$, $i\in [m]$ and during the last $k_2 - \bar k_2$ many iterations, it resolves negative correlation between $X^{\vec u}_{i,\ell}, X^{\vec u}_{j, \ell'}$ for $\vec u \in \M$, $i, j \in [m]$ and $\ell, \ell' \in [N]$. 
  $$\begin{aligned}
      \operatorname{TV}_{\vec X, \tilde {\vec X}} &\leq \sum_{i\neq j\in [m]} q_{ij} + \sum_{k=1}^{k_1} q_k + \sum_{k=k_1+1}^{\bar k_2} q_k + \sum_{k=\bar k_2+1}^{k_2} q_k\\
      &\leq \frac{m^2}{(M-1)\cdot N} + \frac{m}{M^2} \cdot M + |k_2 - \bar k_2|\cdot \frac{M\cdot m}{(M-1)^3\cdot N^2}\\
      &\leq \frac{2m^3}{M}.
  \end{aligned}$$
The first inequality follows because it bounds the total probability that either Procedure~\ref{proc:one} or Procedure~\ref{proc:two} alters the original distribution $\vec X$. The second inequality follows because of Lemmas~\ref{lem:proc2_neg} and \ref{lem:proc2_pos}. The last inequality follows because $(k_2 - k_1)\leq M\cdot N^2\cdot m^2$ combining with $N= \Omega(2^{M^2})$ and $M = \Omega(2^m)$. 
\end{proof}

\end{document}